\def\boxit#1{\vbox{\hrule\hbox{\vrule\kern3pt
        \vbox{\kern3pt#1\kern3pt}\kern3pt\vrule}\hrule}}
\def\reals{ { {\rm  I \kern-0.15em R }  } }
\def\complex{ {\,{{\rm C} \kern-0.50em \raise0.20ex {  |}}\, }}
\def\Rbf{{\bf R}}
\def\Rxx{\Rbf_{\ssstyle X\kern-.1em X}}
\let\ssstyle=\scriptscriptstyle
\def\Kout{\setbox1=\hbox{\Huge\bf K}\hbox to
1.05\wd1{\hspace{.05\wd1}% [arxiv_v2: inline-PS \special stripped, 291 chars]}}
\def\Sout{\setbox1=\hbox{\Huge\bf S}\hbox to 1.05\wd1{\hspace{.05\wd1}% [arxiv_v2: inline-PS \special stripped, 291 chars]}}

\newtheorem{lemma}{Lemma}
\newtheorem{theorem}{Theorem}
\newtheorem{proposition}{Proposition}
%\textbf{}
\renewcommand\qed {$\blacksquare$}
%\definecolor{blue}{rgb}{0,0,0}

\newtheorem{defn}{Definition}

\newtheorem{ex}{{\em Example}}

\newtheorem{remark}{Remark}

%\date{May 6, 2008}

\newcommand{\ignore}[1]{}
\bibliographystyle{ieeetr}

\ifodd 0
\newcommand{\rev}[1]{{\color{blue}#1}}%revise of the text
\newcommand{\com}[1]{\textbf{\color{red}(#1)}} %comment of the text
\newcommand{\clar}[1]{\textbf{\color{green}(NEED CLARIFICATION: #1)}}
\newcommand{\response}[1]{\textbf{\color{magenta}(RESPONSE: #1)}} %response to comment

\else
\newcommand{\rev}[1]{#1}
\newcommand{\com}[1]{}
\newcommand{\clar}[1]{}
\newcommand{\response}[1]{}
\fi

\begin{document}

\title{Sufficient Conditions on the Optimality of Myopic Sensing in Opportunistic Channel Access: \\ A Unifying Framework}   %Multichannel Opportunistic Access}

\author{\IEEEauthorblockN{Yang Liu, Mingyan Liu and Sahand Haji Ali Ahmad\\
\thanks{Y. Liu and M. Liu are with the Electrical Engineering and Computer Science Department, University of Michigan, Ann Arbor, MI 48105, USA, {\{youngliu,mingyan\}@umich.edu}.  A preliminary version of this work appeared in Allerton 2009. The work was partially supported by the NSF under grants CIF-0910765 and CNS-1217689, and by the ARO under Grant W911NF-11-1-0532. } 
}} 

\maketitle

\begin{abstract}
This paper considers a widely studied stochastic control problem arising from opportunistic spectrum access (OSA) in a multi-channel system, with the goal of providing a unifying analytical framework whereby a number of prior results may be viewed as special cases.  Specifically, we consider a single wireless transceiver/user with access to $N$ channels, each modeled as an iid discrete-time two-state Markov chain.  In each time step the user is allowed to sense $k\leq N$ channels, and subsequently use up to $m\leq k$ channels out of those sensed to be available. 
% i.i.d. Markovian channels (the state of the channels is either "good" or "bad") due to limited sensing capability, it accesses $m$ ($1 \leq m \leq k$) channels at each time slot (due to limited number of antennas). 
Channel sensing is assumed to be perfect, and for each channel use in each time step the user gets a unit reward. 
%Once an access happens with good channel condition, a reward is obtained; otherwise there will be no reward. 
The user's objective is to maximize its total discounted or average reward over a finite or infinite horizon.  
\rev{This problem has previously been studied in various special cases including $k=1$ and $m=k\leq N$, often cast as a restless bandit problem, with optimality results derived for a myopic policy that seeks to maximize the immediate one-step reward when the two-state Markov chain model is positively correlated. } 
In this paper we study the general problem with $1\leq m\leq k\leq N$, and derive sufficient conditions under which the myopic policy is optimal for the finite and infinite horizon reward criteria, respectively.  
It is shown that these results reduce to those derived in prior studies under the corresponding special cases, and thus may be viewed as a set of unifying optimality conditions. 
%This problem can be generally formed into a Partially Observable Markov Decision Process (POMDP) or a restless bandit process. Seeking the optimal solutions or strategies for this category of problems is in general PSPACE-hard; instead we investigate the property of a simple myopic sensing strategy, i.e., a stationary policy that always tries to maximize the immediate one-step reward. 
%Sufficient conditions under which the myopic sensing is guaranteed to be optimal for finite and infinite time horizon problems are given in our paper. We analyze both positively correlated and negatively correlated channels cases. 
Numerical examples are also presented to highlight how and why an optimal policy may deviate from  the otherwise-optimal myopic sensing given additional exploration opportunities, i.e., when $m<k$. 
% under which exploration by deviation from myopic sensing would bring in benefits.
\end{abstract}

\vspace{1em}

\begin{IEEEkeywords}
Opportunistic Spectrum Access (OSA), POMDP, restless bandits, index policy, myopic policy, sufficient condition
\end{IEEEkeywords}

\newpage

%introduction

\section{Introduction}\label{sec:intro}

We consider the following stochastic control problem: There are $N$ uncontrolled, independent and identically distributed, two-state discrete-time Markov chains, with the two states denoted ``$1$'' and ``$0$'' respectively, and the transition probabilities given by $p_{ij}$, $i, j = 0, 1$.  %The actual states of these Markov chains are not always observed, unless they are {\em selected} or {\em played}. 
The system evolves in discrete time.  In each time instance, a user selects exactly $k$ out of the $N$ processes and is allowed to observe their states.  The user is allowed to receive a unit reward from a process observed to be in state $1$, but the total reward is limited at $m, m\leq k$, at each step. The processes that the user does not select do not reveal their true states. The objective is to derive a selection strategy for the user so that its total expected discounted or average reward  over a finite or infinite horizon is maximized. 
This is a partially observed MDP (POMDP) problem \cite{puterman,Smallwood&Sondik:71OR} due to the fact that the states of the underlying Markov processes are not fully observed at all times; as a consequence the system state as perceived by the user is in the form of a probability distribution, commonly referred to as the {\em information state} or {\em belief state} of the system \cite{kumar}.  More specifically, this problem is an instance of the restless bandit problem with multiple plays \cite{whittle,anantharam87,agrawal90}.  %\rev{More discussion on this literature is provided in section \ref{sec:discussion}.} 

The above problem abstraction and a number of its variations have been quite extensively studied in the past few years in the context of multichannel opportunistic spectrum access (OSA), including \cite{Zhao&etal:08TWC,Ahmad09optimalityof,guha,guha2,Ahmad:2009:MOA:1793974.1794208}. 
Within this application, each Markov process represents a wireless channel in a discrete time setting, whose state transitions reflect dynamic changes in channel conditions caused by fading, interference, and so on, with state $1$ denoting a ``good'' or available state, and state $0$ the ``bad'' or unavailable state, in which communication may succeed and fail, respectively.  
%The channel state is assumed to remain constant within a single discrete time step.  
A user wishing to transmit must first sense the state of a channel at the beginning of a time step, and can only transmit in that channel if it is sensed to be in the ``good'' state.  The user cannot sense more than $k$ channels, nor can it transmit in more than $m$ at a time.  Such constraints come from both hardware, e.g., the number of antennas available, and from performance requirements, e.g., channel sensing takes time so stringent delay requirement can limit the amount of sensing allowed. %We explore how should a cognitive user make the optimal decision of switching over channels based on past observations and the stochastic nature of channel statistics's evolutions regarding the total throughput.
Finally, if all $k$ selected channels are in the ``bad'' state, the user has to wait till the beginning of the next time step to repeat the selection process. 
%
%The above model, henceforth also referred to as the $(k,m)$ model, 
While this model captures some of the essential features of multichannel opportunistic access, it has the following limitations: the simplicity of the iid two-state channel model, and the implicit assumption that channel sensing is perfect and the lack of penalty if the user transmits in a bad channel due to imperfect sensing. % and the assumption that the user can select an arbitrary set of $k$ channels out of $n$ (e.g., it may only be able to access a contiguous block of channels due to physical layer limitations).  
Nevertheless, this model allows us to obtain analytical insights into the problem, and more importantly, insights into the more general problem of restless bandits with multiple plays. 

%This model has been used and studied quite extensively in the past few years, mostly within the context of opportunistic spectrum access and cognitive radio networks, see for example 
%Prior work using this model includes \cite{Zhao&etal:08TWC,sahand-IT08,guha,guha2,allerton09}.  
Prior work investigated various special cases of the model outlined above, henceforth referred to as the $(k,m)$ model.  
Specifically,  authors derived sufficient conditions for guaranteeing the optimality of a greedy/myopic sensing for the $(1,1)$ case, i.e., $k=m=1$ with $N=2$ in \cite{DBLP:conf/icc/JavidiKZL08}, with positively correlated channel model. \cite{Zhao&etal:08TWC} further proved the performance bounds of a greedy/myopic policy for this case (as well as negatively correlated channels) and \cite{Liu&Zhao:08Asilomar} proved the same for the $(N-1, N-1)$ case, while \cite{guha,guha2} looked for provably good approximation algorithms for a similar problem but relaxing the requirement that all Markov chains are identically distributed. The assumption of perfect sensing was relaxed in \cite{DBLP:conf/cdc/LiuZ09} with results regarding greedy/myopic sensing's performance bounds. % \rev{The assumption of perfect sensing was relaxed in \cite{} with results..... }\com{yang: For imperfect sensing, so far I didn't see any work on this very specific POMDP model; Qing's group published some imperfect sensing paper with online learning's approach. }
%Furthermore, the indexability (in the context of Whittle's heuristic index and indexability definition \cite{whittle}) of the underlying problem was studied in \cite{dahleh08,Liu&Zhao:08SDR}.  
Our own prior work \cite{Ahmad09optimalityof} established the optimality of the greedy policy for the $(1,1)$ case for arbitrary $N$ under the condition $p_{11}\geq p_{01}$, i.e., when a channel's state transitions are positively correlated.  This result was further generalized in \cite{Ahmad:2009:MOA:1793974.1794208} to the case of $(k,k)$, i.e., $m=k\leq N$ with arbitrary $N$.  

In view of the above existing work, the main contribution of this paper is the study of the more general $(k,m)$ problem with $1\leq m\leq k\leq N$.  For this problem we derive sufficient conditions under which the myopic policy is optimal for the finite and infinite horizon reward criteria, respectively, for both the positively correlated and negatively correlated channel models.  Furthermore, we show that they reduce to those derived in prior studies under the corresponding special cases, and thus may be viewed as a set of unifying optimality conditions.  
Our main results, a set of sufficient conditions for the optimality of the myopic policy, are summarized in Table \ref{tbl:summary}, where $0<\beta<1$ is the discount factor and $\overline{\cal R}$ and $\underline{\cal R}$ are two constants that depend on parameters $m$ and $k$.
\begin{table}[h!]
\begin{center}
    \begin{tabular}{ | c | c | c | } %c |}
    \hline
~&~&~\\[0.2ex]
Channel model   & Finite horizon %& Infinite horizon with $\bar{\beta} =1$ 
& Infinite horizon  \\ [2ex]\hline
~&~&~\\[0.2ex]
    $p_{11} \geq p_{01}$ & $\beta \leq \underline{\cal R}/\overline{\cal R}$  %& no condition needed  
    & $\frac{p_{11}-p_{01}}{1-(p_{11}-p_{01})} < \underline{\cal R}/\overline{\cal R}$ \\[2ex] \hline
~&~&~\\[0.2ex]
    $p_{11} < p_{01}$ & $\beta \leq \underline{\cal R}/(\underline{\cal R}+\overline{\cal R})$  %& N/A ($\bar{\beta}_n < 1$ for sure) 
    & $ \min\{ p_{01}-p_{11}, \frac{1}{2(p_{00}+p_{11})} \} \leq \overline{\cal R}/\underline{\cal R}$ \\[2ex] \hline
    \end{tabular} 
\end{center}
\caption{Summary of results}
\label{tbl:summary}
\end{table}

\rev{The sufficient condition for the finite horizon problem is on $\beta$, and is derived using a sample path argument we first introduced in %; then by making use of the results from finite horizon problem, we can directly solve a subset (the case with $\bar{\beta} = 1$.%\footnote{Notice that this is rather a small set}) of infinite horizon problems. While for the cases (the case with $\bar{\beta} < 1$) falling out of the subset, we are going through a different methodology to finish our analysis.
\cite{Ahmad:2009:MOA:1793974.1794208}. The sufficient condition for the infinite horizon problem is on $p_{11}$ and $p_{01}$, and is based a few bounding techniques and the one-step deviation principle.
It should be noted that similar results from a parallel development have recently appeared that address the case of positively correlated channels over a finite horizon for $m=1,k>1$ (in \cite{DBLP:journals/corr/abs-1104-5391}) and for $1<=m<=k$ (in \cite{DBLP:journals/corr/abs-1103-1784}), respectively.  They correspond to the upper left entry in Table \ref{tbl:summary}, and also rely on the sample path argument introduced in \cite{Ahmad:2009:MOA:1793974.1794208}. % and similar results are derived \cite{DBLP:journals/corr/abs-1104-5391, DBLP:journals/corr/abs-1103-1784}. 
Paper \cite{DBLP:journals/tsp/WangC12} considers the additional relaxation to independent but non-identical channels (positively correlated and over a finite horizon). However, due to this generality the results obtained in \cite{DBLP:journals/tsp/WangC12} are weaker, i.e., their sufficient condition does not reduce to that in the special case of IID channels. %, and is shown to be worse in the case of $k=m$.   %Even the paper \cite{DBLP:journals/tsp/WangC12} itself admits that 
%In particular, for the special case of $k=m$ its condition is not equivalent to the one in \cite{Ahmad:2009:MOA:1793974.1794208}; their bound is worse in general for the $k=m$ case and the paper explains that the loss comes from investigating a more general framework.  
By contrast, all sufficient conditions given in Table \ref{tbl:summary} reduce precisely to the best known results given in prior studies in respective special cases, thereby providing a unifying set of conditions. 
%which is another reason we think our submission provides useful additional insight and helps unify various prior results.  Furthermore, \cite{DBLP:journals/tsp/WangC12}  considers only positively correlated channels and finite horizon cases. 

%However, none of  \cite{DBLP:journals/corr/abs-1104-5391}, \cite{DBLP:journals/corr/abs-1103-1784}, \cite{DBLP:journals/tsp/WangC12} addresses 
%\begin{itemize}
%\item The finite horizon problem  (extension (2) listed above) in the general case of $m<=k$ for negatively correlated channels. \footnote{For a negatively correlated channel the cyclic structure of myopic sensing is no longer directly preserved as in the positive case; thus the bounding technique is more involved to address the more complex structure and properties of myopic sensing in this case. For more details please refer to our submission. } 
%
%\item The infinite horizon problem (extension (3) listed above) in the general case of $m<=k$ for either positively or negatively correlated channels.  
%\end{itemize}
%and our submission fills this gap.
}

The remainder of this paper is organized as follows. Section \ref{sec:prob_formulation} presents the problem along with preliminary results. %and as well as the illustration of sensing strategies including myopic sensing. 
Sections \ref{sec:finite_positive} and \ref{sec:finite_negative} derive the optimality conditions for the finite horizon problem with positively and negatively correlated channels, respectively.  Sections \ref{sec:infinite_positive} and \ref{sec:infinite_negative} are similarly organized for the infinite horizon problem. 
%, we analyze the optimality of myopic sensing for infinite horizon problem with positively correlated channels and the corresponding analysis for negatively correlated channels present in Section \ref{sec:infinite_negative}. 
Discussion and related work are given in Section \ref{sec:discussion} and Section \ref{sec:conclusion} concludes the paper.

\begin{comment}
In Section \ref{form} we formulate our problem formally. In section \ref{positive} we consider finite time horizon problem with positively correlated channel, i.e., with $p_{11} \geq p_{01}$. In Section \ref{negative} to make our analysis complete, we consider the cases with negatively correlated channel, i.e., with $p_{11} < p_{01}$. In Section \ref{explore}
\end{comment} 
%problem formulation
\section{Problem Formulation and Preliminaries} \label{sec:prob_formulation}
\subsection{Problem formulation}

%\com{Notations to fix: $\mathbf{\pi}$ vs. $\bar{\pi}$, or $\bar{\omega}$? $\mathbf{\omega}$ vs. $\bar{\omega}$. Also $N$ vs. $n$...} 

Denote the set of channels by $\mathcal N = \{1,2,...,N\}$. The system operates in discrete time $t=1,2,...$.  In each step $t$, the channel state transitions at $t^{-}$, followed by channel sensing at $t$.  The user is limited to sensing at most $k$ channels each time, thus its observation of the system when making decision at time $t$ is imperfect.  A sufficient statistic for optimal decision making, or the information state of the system \cite{kumar}, is given by the conditional probabilities of the state each channel is in given all past observations and actions.  Since each channel can be in one of two states, we denote this information state at time $t$ by 
%As channels' state transits at time $t^-$ and user makes decision after that, the user has a partially observable environment. Thus the secondary user should make decision based on his up-to-date belief over channels' state. 
%
$\bar{\omega}(t) := [\omega_1(t),\omega_2(t),...,\omega_N(t)]$, where $\omega_i(t)$ is the conditional probability that channel $i$ is in state $1$ at time $t$\footnote{Note that it is a standard way of turning a POMDP problem into a classic MDP problem by means of the information state, the main implication being that the state space is now uncountable.}.  
%
%Denote $s(t) = [s_1(t),s_2(t),...,s_N(t)]$ as the states for $N$ channels at time $t$. 
%For simplicity we assume $\mathbf{\omega}(t)$ is ordered, i.e.,
%\begin{align}
%\omega_1(t) \geq \omega_2(t) \geq ... \geq \omega_N(t)
%\end{align}
\rev{The user's sensing strategy is denoted by $\boldsymbol{\pi}^{1:T} = [\pi(1),\pi(2),...,\pi(T)]$, where $\pi(t): \omega(t)\rightarrow \Omega_k$, $\Omega_k\subset \Omega$ denoting a set of $k$ channels. $\pi(t)$ will be referred to as a policy, and $\Pi$ denotes the set of all admissible policies, while $\bar{\Pi}$ denotes the set of all admissible $T$-step policies.} 
Due to the Markovian nature of the channel model, future information state is only a function of the current information state and the current action.  It follows that the information state of the system evolves as follows.  Given $\bar{\omega}(t)$ and action $\pi(t)$, there are three possible state updates: (1) $\omega_i(t+1)=p_{11}$ if $i\in \pi(t)$ and it is observed in state $1$; (2) $\omega_i(t+1)=p_{01}$ if $i\in \pi(t)$ and it is observed in state $0$; (3) if $i \not\in \pi(t)$ then $\omega_i(t+1) = \tau(\omega_i(t))$, 
%If $i \in \pi(t)$ and $s_i(t) = 1$, i.e., we sense the $i_{th}$ channel at time $t$ and channel's condition turns to be good, we make the update $\omega_i(t+1) = p_{11}$; otherwise if $i \in \pi(t)$ and $s_i(t) = 0$ we have $\omega_i(t+1) = p_{01}$; for the third case $i \notin \pi(t)$ we denote $\omega_i(t+1) = \tau(\omega_i(t))$.
%
%Mathematically they update according to the following formula.
%\begin{equation}
%\omega_i(t+1) = \left\{
%    \begin{array}{cl}
%    & p_{11}, ~~~~~~~~i \in \pi(t), s_i(t) = 1\\
%    & p_{01}, ~~~~~~~~i \in \pi(t), s_i(t) = 0\\
%    & \tau(\omega_i(t)), ~~i \notin \pi(t)
%    \end{array} \right.\label{mk}
%\end{equation}
where $\tau(\cdot):[0,1] \rightarrow [0,1]$ is the updating function defined as
\begin{align}
\tau(\omega) = \omega \cdot p_{11} + (1-\omega) \cdot p_{01}, ~ 0\leq \omega\leq 1 ~. 
\end{align}

If a channel is sensed to be in state $1$ and the user decides to use it for transmission, then it gets a unit reward for that time step. 
%while the user gets nothing when the channel appears to be in "bad" state.
%The inspiration of above model mainly comes from a cognitive radio scenario where a secondary user tries to take opportunistic use of the spectrum. When the channel is sensed to be "busy" (bad channel state), the secondary user will stay "quite" and will not proceed to transmit during the next time slot; on the other hand if the channel is sensed to be "idle", secondary user will try to access the channel and transmit. Also as mentioned above, the secondary users often suffer from a limitation of sensing and transmitting ability, i.e., at any time the secondary user can only sense a subset of the whole candidate channels and furthermore transmit on a subset of the sensed channels.
%We have ignored possible errors in channel sensing in this model; our methodology can be extended to imperfect sensing with some modification, but is out of the scope of the present paper.
%
The immediate one-step reward under state $\bar{\omega}$ and sensing \rev{action} $\pi$ is denoted by $R^{k,m}_{\pi}(\bar{\omega}), 1 \leq m \leq k$.  
\begin{ex}
The one-step reward of the $(k, 1)$ model (sensing $k\geq 1$ channels but using no more than one for data transmission) \rev{given policy $\pi \in \Pi$} is 
\begin{align}
\mathbb E [R^{k,1}_{\pi}(\bar{\omega})] = 1 - \prod_{i \in \pi}(1-\omega_i), 1\leq k \leq N~. \label{eqn:reward1}
\end{align}
\end{ex}
\begin{ex} 
The one-step reward of the $(k,k)$ model given \rev{$\pi\in\Pi$ is}  
\begin{align}
\mathbb E [R^{k,k}_{\pi}(\bar{\omega})] = \sum_{i \in \pi}\omega_i, 1\leq k \leq N ~. \label{eqn:reward2}
\end{align}
\end{ex} 

The objective for the finite horizon problem is to maximize the total expected discounted  reward over $T$ time steps, with a discount factor $0< \beta\leq 1$, given an initial state $\bar{\omega}$: 
\begin{align}
\textbf{(P1):}~ %\max_{\mathbf{\pi}}  
\rev{
\mathcal J^{\boldsymbol{\pi}}_T(\bar{\omega}) = \max_{\boldsymbol{\pi}\in\boldsymbol{\Pi}}~ \mathbb E^{\boldsymbol{\pi}}[\sum_{t=1}^T\beta^{t-1}R^{k,m}_{\pi(t)}(\bar{\omega}(t))|\mathbf{\omega}(1) = {\omega}] \nonumber
} 
\end{align}
%Here $\beta$ is the reward discount factor which satisfies that $0 < \beta \leq 1$. 
The objective for the infinite horizon problem is to maximize the total expected discounted reward (with $0 < \beta < 1$) or the average reward: %we are interested in the following two problems.
\begin{eqnarray} \textbf{(P2):}~ %\max_{\mathbf{\pi}}
\mathcal J^{\boldsymbol{\pi}}_{\beta}(\bar{\omega}) &=& \max_{\boldsymbol{\pi}\in\boldsymbol{\Pi}}
~\mathbb E^{\boldsymbol{\pi}}[\sum_{t=1}^{\infty}\beta^{t-1}R^{k,m}_{\pi(t)}(\bar{\omega}(t))|\bar{\omega}(1) = \bar\omega] 
 \nonumber \\
 \textbf{(P3):}~ %\max_{\mathbf{\pi}} 
\mathcal J^{\boldsymbol{\pi}}_{\infty}(\bar{\omega}) &=& \max_{\boldsymbol{\pi}\in\boldsymbol{\Pi}}~ \mathbb E^{\boldsymbol{\pi}}[\lim_{T
\rightarrow \infty}\frac{1}{T}\cdot\sum_{t=1}^TR^{k,m}_{\pi(t)}(\bar{\omega}(t))|\bar{\omega}(1) = \bar\omega] \nonumber 
\end{eqnarray}
%\textbf{(P2)} is the infinite horizon \emph{discounted total reward} problem with discount factor $0 < \beta < 1$; while \textbf{(P3)} is the  infinite horizon \emph{average reward} problem.

\begin{comment}
\begin{align}
\textbf{(P2):}~\max_{\mathbf{\pi}}  J^{\mathbf{\pi}}_{\beta}(\mathbf{\omega}) = \max_{\mathbf{\pi}} E^{\mathbf{\pi}}[\sum_{t=1}^{\infty}\beta^{t-1}R^{k,m}_{\pi(t)}(\mathbf{\omega}(t)|\mathbf{\omega}(1) = \bar{\omega})] \nonumber
\end{align}
\begin{align}
\textbf{(P3):}~\max_{\mathbf{\pi}}  J^{\mathbf{\pi}}_{\infty}(\mathbf{\omega}) = \max_{\mathbf{\pi}} E^{\mathbf{\pi}}[\lim_{T \rightarrow \infty}\frac{1}{T}\cdot\sum_{t=1}^TR^{k,m}_{\pi(t)}(\mathbf{\omega}(t)|\mathbf{\omega}(1) = \bar{\omega})] \nonumber
\end{align}
\end{comment}

%Here $\bar{\omega}$ is the initial state of all channels. If this information is not given we can set
%\begin{align}
%\bar{\omega}_1 = \frac{p_{10}}{p_{10}+p_{01}},\bar{\omega}_0 = \frac{p_{01}}{p_{10}+p_{01}} ~,
%\end{align}
%which is the stationary distribution of each channel being at state 1 and 0. The optimal sensing policy is usually intractable. 

As we shall see a main technical challenge posed by the general $(k,m)$ problem is the non-additive nature of the reward function, see e.g., (\ref{eqn:reward1}), as opposed to the additive reward in the special case $(k,k)$ as shown in (\ref{eqn:reward2}), in addition to the usual difficulties in seeking structural solutions to restless bandit problems. 
%Meanwhile the non-additivity of the reward function brings in extra difficulties. 
As in previous works, we will focus on a simple myopic policy that aims at maximizing the immediate, one-step reward at each time step, and investigate under what conditions this policy is optimal.  In the remainder of this section we present a number of properties of the above non-additive reward function and the operation of the myopic policy in the context of the dynamic programming representation of the above optimization problems.  
%Therefore we focus on investigating the properties of a simple myopic sensing strategy and the conditions for guaranteeing the optimality of myopic sensing which we will define in the next subsection.

%We start with the finite horizon problem \textbf{(P1)}. To be specific, we will focus on the separated policies which maps a user's current information state $\omega$ to a admissible policy $\pi$. Optimality will not lose due to the Markovian nature of our problem. Furthermore we extend our results to the infinite horizon problem \textbf{(P2)} and \textbf{(P3)}.

\subsection{Properties of the expected reward $\mathbb E [R^{k,m}_{\pi}(\bar{\omega})]$}\label{positive}
%In this part we point out several properties of the expected reward function $\mathbb E [R^{k,m}_{\pi}(\mathbf{\omega})]$.

For convenience of notation, the vector $\mathbf{\omega}$ will be frequently written as $(\omega_i, \omega_{-i})$ to emphasize the $i$-th element and the rest of the vector, or as $(\omega_1, \cdots, \omega_i, \cdots, \omega_N)$.  The first property below suggests that the order in which these elements appear does not matter.  For this reason later we will sort them in descending order. 

\begin{proposition}[Symmetric]
%$\mathbb E [R^{k,m}_{\pi}(\mathbf{\omega})]$ 
Under any admissible policy $\pi$, $\forall i,j \in \cal N$ and $\omega_i = \omega_j$ we have
\begin{align}
\mathbb E [R^{k,m}_{\pi}(\omega_1,...,\omega_i,...,\omega_j,...,\omega_N)] = \mathbb E [R^{k,m}_{\pi}(\omega_1,...,\omega_j,...,\omega_i,...,\omega_N)] ~. 
\end{align}
\end{proposition}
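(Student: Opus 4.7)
The plan is to unpack the one-step reward and exploit the fact that when $\omega_i = \omega_j$ the two argument vectors compared in the proposition are literally identical coordinate by coordinate. Once this is recognized, the conclusion will follow immediately from the fact that a policy $\pi$ is a mapping from the information state to an action, so feeding equal vectors into it produces equal actions and equal (distributions over) rewards.

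More concretely, I will first rewrite $\mathbb E[R^{k,m}_\pi(\bar\omega)]$ by conditioning on the sensed set $S = \pi(\bar\omega)$: given $S$, each sensed channel $\ell \in S$ independently occupies state $1$ with probability $\omega_\ell$, and the immediate payoff equals $\min\bigl(m,\, \sum_{\ell \in S} X_\ell\bigr)$ with $X_\ell$ independent Bernoulli random variables of parameter $\omega_\ell$. Summing over the $2^k$ possible observation outcomes shows that $\mathbb E[R^{k,m}_\pi(\bar\omega)\mid S]$ is a function solely of the multiset $\{\omega_\ell\}_{\ell\in S}$. Since the hypothesis $\omega_i = \omega_j$ makes the two vectors appearing in the LHS and RHS equal coordinate-wise, both the action $\pi(\bar\omega)$ and the resulting multiset of Bernoulli parameters are unchanged under the swap, and the claimed equality follows.

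I do not anticipate any real obstacle here: the proposition is essentially a bookkeeping statement that codifies the symmetry of the reward in its arguments when two entries coincide, and its main purpose, as the surrounding text indicates, is to justify the convention of sorting $\bar\omega$ in descending order with ties broken arbitrarily in the subsequent analyses. The only care I will take is to phrase the argument so that it applies uniformly whether $\pi$ is interpreted as a one-step action or as a full multi-stage policy, by observing that the first-stage action depends only on $\bar\omega$ and all subsequent stages depend only on the induced trajectory distribution, which is unchanged by swapping two equal entries of the initial state.
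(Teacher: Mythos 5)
Your proposal is correct; the paper itself omits the proof, declaring the property ``quite self-evident,'' and your argument is exactly the intended one: with $\omega_i=\omega_j$ the two belief vectors are coordinate-wise identical, so the policy's action $\pi(\bar\omega)$ and the induced distribution of $\min\bigl(m,\sum_{\ell\in S}X_\ell\bigr)$ coincide on both sides. The Bernoulli-decomposition detail and the remark about multi-stage policies are harmless elaborations of the same observation, so there is nothing to compare beyond noting that you have simply written out what the authors chose to omit.
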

The above property is quite self-evident and its proof is thus omitted. 

\begin{proposition}[Increasing]\label{eqn:prop_increase}
%$\mathbb E [R^{k,m}_{\pi}(\mathbf{\omega})]$ is a strictly increasing function w.r.t. each $i \in \pi$, i.e., 
For $\omega^{'}_i > \omega_i$ we have
\begin{align}
\mathbb E [R^{k,m}_{\pi}(\omega^{'}_i,\mathbf{\omega_{-i}})] \geq \mathbb E [R^{k,m}_{\pi}(\omega_i,\mathbf{\omega_{-i}})] ~. 
\end{align}
\end{proposition}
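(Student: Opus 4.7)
The plan is to reduce the expected one-step reward to the expectation of a monotone transformation of a sum of independent Bernoulli random variables, and then conclude by a standard coupling. Concretely, once the sensing action $\pi \in \Omega_k$ (a subset of $k$ channels) is fixed, channel $j \in \pi$ is independently observed to be in state $1$ with probability $\omega_j$, and the user collects one unit per usable channel up to a cap of $m$. Hence
\begin{align*}
\mathbb E[R^{k,m}_\pi(\bar\omega)] \;=\; \mathbb E\bigl[\min(X_\pi,\, m)\bigr], \qquad X_\pi := \sum_{j \in \pi} B_j,
\end{align*}
where $B_j \sim \mathrm{Bernoulli}(\omega_j)$ are mutually independent. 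Formulas (\ref{eqn:reward1}) and (\ref{eqn:reward2}) are then recovered as the $m=1$ and $m=k$ special cases.

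With this representation in hand, I would split into two trivial cases. If $i \notin \pi$, then $\omega_i$ does not appear in the right-hand side at all, so the inequality holds with equality. If $i \in \pi$, I would couple the two systems on a common probability space: draw one $U \sim \mathrm{Unif}[0,1]$ and set $B_i = \mathbf{1}\{U \le \omega_i\}$ and $B_i' = \mathbf{1}\{U \le \omega_i'\}$, keeping all other $B_j$, $j \in \pi \setminus \{i\}$, identical across the two instances. Since $\omega_i' > \omega_i$ we have $B_i' \ge B_i$ almost surely, so $X_\pi' \ge X_\pi$ a.s., and because $x \mapsto \min(x,m)$ is nondecreasing this yields $\min(X_\pi',m) \ge \min(X_\pi,m)$ a.s. Taking expectations gives the result.

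An equivalent and perhaps more elementary derivation is to condition on $Y := \sum_{j \in \pi,\, j \ne i} B_j$, which is independent of $\omega_i$, and write
\begin{align*}
\mathbb E[R^{k,m}_\pi(\bar\omega)] \;=\; \omega_i\, \mathbb E[\min(Y+1,\, m)] \;+\; (1-\omega_i)\, \mathbb E[\min(Y,\, m)],
\end{align*}
which is affine in $\omega_i$ with slope $\mathbb E[\min(Y+1,m)] - \mathbb E[\min(Y,m)] \ge 0$, since the integrand satisfies $\min(Y+1,m) \ge \min(Y,m)$ pointwise.

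I do not expect any genuine obstacle here; the only subtlety is notational, namely that $\pi$ in $R^{k,m}_\pi(\bar\omega)$ should be read as the (deterministic) sensing action applied at the state $\bar\omega$, so that varying $\omega_i$ does not simultaneously change which channels are sensed. With that convention, the claim follows in a line from the coupling above.
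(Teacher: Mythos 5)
Your proposal is correct, and its ``elementary derivation'' is in fact the paper's own proof in different notation, while your primary coupling argument is a mildly different and slightly slicker route. Concretely: the identity $\mathbb E[R^{k,m}_{\pi}(\omega_i,\omega_{-i})] = \omega_i\,\mathbb E[\min(Y+1,m)] + (1-\omega_i)\,\mathbb E[\min(Y,m)]$ is exactly the paper's equation (\ref{eqn:recur}), once one notes $\min(Y+1,m)=1+\min(Y,m-1)$, so that $\mathbb E[\min(Y+1,m)]=1+\mathbb E[R^{k-1,m-1}_{\pi_{-i}}(\omega_{-i})]$ and $\mathbb E[\min(Y,m)]=\mathbb E[R^{k-1,m}_{\pi_{-i}}(\omega_{-i})]$. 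Where the paper then verifies the slope inequality $1+\mathbb E[R^{k-1,m-1}_{\pi_{-i}}]\geq \mathbb E[R^{k-1,m}_{\pi_{-i}}]$ by an explicit computation with the distribution $P_{\pi_{-i}}(l)$ of the number of good channels among $\pi_{-i}$, you obtain it from the pointwise bound $\min(y+1,m)\geq\min(y,m)$ --- a genuine simplification, made possible by writing the reward explicitly as $\mathbb E[\min(X_\pi,m)]$, a representation the paper uses only implicitly through its capped sums. Your coupling argument goes one step further: it dispenses with conditioning altogether, gives the conclusion pathwise, and as a byproduct yields stochastic dominance of the capped good-channel count, not merely the inequality in expectation; the paper's computation, by contrast, delivers the recursion (\ref{eqn:recur}) as a standalone identity that is reused later (e.g., for the affine property). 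One small point in your favor: the paper asserts the strict inequality $1+\mathbb E[R^{k-1,m-1}_{\pi_{-i}}]>\mathbb E[R^{k-1,m}_{\pi_{-i}}]$, whereas your pointwise argument gives only $\geq$; the weak form is all the proposition needs, and strictness can actually fail in degenerate cases (e.g., $\omega_j=1$ for all $j\in\pi_{-i}$ with $k-1\geq m$), so your version is if anything more careful. Your closing remark that $\pi$ must be read as a fixed sensing action while $\omega_i$ varies is also the correct reading, and matches the paper's implicit convention.
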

%\emph{Proof.} 
\begin{proof} 
If $i\not\in \pi$, then the two sides must be equal because all other elements are the same. 
Consider the case $i\in \pi$.  The immediate one-step reward can be expressed in the following sequential form: 
%we have the following recursive formula
\begin{align}
E [R^{k,m}_{\pi}(\omega_i,\mathbf{\omega_{-i}})] = \omega_i\cdot (E [R^{k-1,m-1}_{\pi_{-i}}(\mathbf{\omega_{-i}})]+1) + (1-\omega_i)E [R^{k-1,m}_{\pi_{-i}}(\mathbf{\omega_{-i}})] ~, \label{eqn:recur}
\end{align}
where $\pi_{-i}$ denotes the same set of channels in $\pi$ but excluding $i$. 
This is because since all available channels generate the same reward, we may consider two possibilities of obtaining the total reward: either channel $i$ is available or not. Under the former, we receive the unit reward plus the reward from the remaining $k-1$ channels in $\pi$, using up to $m-1$ of them; under the latter, the total reward now comes from the remaining $k-1$ channels in $\pi$, using up to $m$ of them. 
Applying (\ref{eqn:recur}) to both sides of (\ref{eqn:prop_increase}), in order to show the inequality in (\ref{eqn:prop_increase}) it suffices to show that 
%The removal of the condition is due to the independency among all channels. Next we want to show
\begin{align}
E [R^{k-1,m-1}_{\pi_{-i}}(\mathbf{\omega_{-i}})]+1 > E [R^{k-1,m}_{\pi_{-i}}(\mathbf{\omega_{-i}})] ~. 
\end{align} 
Next we show this is true. Let $P_{\pi_{-i}}(l)$ denote the probability that out of $k-1$ channels in $\pi_{-i}$, exactly $l$ are sensed to be good under state $\omega$. We have 
\begin{align}
E [R^{k-1,m-1}_{\pi_{-i}}(\mathbf{\omega_{-i}})]+1 &= \sum_{l=0}^{m-2}P_{\pi_{-i}}(l)\cdot l + \sum_{l=m-1}^{k-1} P_{\pi_{-i}}(l)\cdot (m-1) + 1\nonumber \\
&>   \sum_{l=0}^{m-2}P_{\pi_{-i}}(l)\cdot l +P_{\pi_{-i}}(m-1)\cdot (m-1) +  \sum_{l=m}^{k-1} P_{\pi_{-i}}(l)\cdot [(m-1) + 1] \nonumber \\
&=  \sum_{l=0}^{m-2}P_{\pi_{-i}}(l)\cdot l +P_{\pi_{-i}}(m-1)\cdot (m-1) +  \sum_{l=m}^{k-1} P_{\pi_{-i}}(l)\cdot m \nonumber \\
&=  E [R^{k-1,m}_{\pi_{-i}}(\mathbf{\omega_{-i}})] ~. 
\end{align} 
%Following from which we claim our proposition holds. %\qed
\end{proof} 

The fact that (\ref{eqn:recur}) is an affine function of $\omega_i$ also leads to the next result. 

\begin{proposition}[Affine]
$\mathbb E[R^{(k,m)}_{\pi}(\bar{\omega})]$ is an affine function w.r.t. each $\omega_i, \forall i \in \pi$, i.e.,
\begin{align}
&\mathbb E [R^{k,m}_{\pi}(\omega_i = x,\mathbf{\omega_{-i}})] - \mathbb E [R^{k,m}_{\pi}(\omega_i = y,\mathbf{\omega_{-i}})] \nonumber \\
&= (x-y) \cdot \{\mathbb E [R^{k,m}_{\pi}(\omega_i = 1,\mathbf{\omega_{-i}})] - \mathbb E[R^{k,m}_{\pi}(\omega_i = 0,\mathbf{\omega_{-i}})]\}\label{prop_affine}
\end{align}
\end{proposition}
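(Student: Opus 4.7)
The plan is to leverage the recursive identity (\ref{eqn:recur}) established in the proof of Proposition \ref{eqn:prop_increase}, which already exhibits the one-step reward as a linear combination in $\omega_i$. For $i\in\pi$, writing
\begin{align}
\mathbb E[R^{k,m}_{\pi}(\omega_i,\mathbf{\omega_{-i}})] = \omega_i\cdot\bigl(\mathbb E[R^{k-1,m-1}_{\pi_{-i}}(\mathbf{\omega_{-i}})]+1\bigr) + (1-\omega_i)\cdot \mathbb E[R^{k-1,m}_{\pi_{-i}}(\mathbf{\omega_{-i}})] \nonumber
\end{align}
exposes the expression as $a\,\omega_i + b$, where the coefficients $a$ and $b$ depend only on $\mathbf{\omega_{-i}}$ and $\pi_{-i}$, not on $\omega_i$ itself.

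First I would substitute $\omega_i = x$ and $\omega_i = y$ into this expression and subtract, noting that the term $b$ cancels and we are left with $(x-y)\cdot a$. Next, I would identify the slope $a$ by evaluating the same affine expression at the endpoints $\omega_i = 1$ and $\omega_i = 0$: the former yields $a+b$, the latter yields $b$, and their difference is exactly $a = \mathbb E[R^{k,m}_{\pi}(\omega_i=1,\mathbf{\omega_{-i}})] - \mathbb E[R^{k,m}_{\pi}(\omega_i=0,\mathbf{\omega_{-i}})]$. Combining the two observations immediately yields (\ref{prop_affine}).

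For completeness I would also address the trivial case $i\notin\pi$: here $\mathbb E[R^{k,m}_{\pi}(\bar{\omega})]$ does not depend on $\omega_i$ at all, so both sides of (\ref{prop_affine}) are zero and the identity still holds (though the proposition as stated only requires the affine property for $i\in\pi$).

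There is no real obstacle here; the entire content of the proposition is already packaged into the recursion (\ref{eqn:recur}), and the proof amounts to reading off an affine function's slope from its values at two points. The only care needed is to keep track of the reduced parameters $(k-1,m-1)$ and $(k-1,m)$ when $i\in\pi$, and to handle the boundary conditions of the recursion (e.g., $m=0$, which simply gives zero reward) correctly.
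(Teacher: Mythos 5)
Your proof is correct and follows exactly the route the paper intends: the paper gives no separate proof for this proposition, remarking only that it follows from the affine form of the decomposition in (\ref{eqn:recur}), which is precisely the identity $\mathbb E[R^{k,m}_{\pi}(\omega_i,\mathbf{\omega_{-i}})] = a\,\omega_i + b$ you exploit, with the slope read off from the endpoint values $\omega_i=1$ and $\omega_i=0$. Your additional remarks on the case $i\notin\pi$ and the boundary case $m=0$ are fine but not needed beyond what the paper's one-line justification already contains.
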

%\begin{proof}
%To see it is an affine function, due to the finiteness of all terms, it is sufficient to prove that each term in $\mathbb E[R^{(k,m)}_{\pi}(\mathbf{\omega})]$ is an affine function of $\omega_i$. This is true simply because first when $i \in \pi$, each $\omega_i$ appears in $\mathbb E[R^{(k,m)}_{\pi}(\mathbf{\omega})]$ either with $\omega_i$ or $1-\omega_i$ while all other terms with $\omega_{-i}$ appears to be constant with respect to $\omega_i$.  Otherwise if $i \notin \pi$, all four terms on both sides are independent w.r.t. $\omega_i$ thus proved. 
%This can be easily seen from (\ref{eqn:recur}), which is an affine function of $\omega_i$.   
%\end{proof}

\subsection{Dynamic programming representation}

Throughout this paper we will consider the general $(k,m)$ case, and for simplicity will use $R_{\pi}(\bar{\omega})$ thereafter instead of $R^{k,m}_{\pi}(\bar{\omega})$ whenever there is no confusion. % with $k=2, N > 3$. $k = 2$ is just for simplicity of analysis and as proved in \cite{DBLP:journals/corr/abs-1103-1784}, when $k = N-1$ the optimality of myopic sensing holds; therefore it would be more interesting to consider a general case with $N > 3$.
The optimization problem (P1) can be solved using dynamic programming:  
% as following recursive equations
\begin{align}
V_T(\bar{\omega}) &= \max_{\pi \in \Pi} ~\mathbb E [R_{\pi}(\bar{\omega})] ~,\\
V_t(\bar{\omega}) %&= \max_{\pi \in \Pi} ~\mathbb E [R_{\pi}(\bar{\omega}(t))] + \beta \cdot \mathbb E\{ V_{t+1}(\bar{\omega}(t+1))|\bar{\omega}(t),\pi(t)\} \label{multi_vp} \nonumber \\
&=\max_{\pi\in \Pi}~\mathbb E [R_{\pi}(\bar{\omega})] %\nonumber \\
%&
+ \beta \cdot \sum_{l_i \in \{0,1\}, i \in \pi} \prod_{i \in \pi}(\omega^{l_i}_i(1-\omega_i)^{1-l_i}) \nonumber \\
&\cdot V_{t+1}(p_{11}[\sum_{i \in \pi} l_i],\tau({\omega_j}),..,p_{01}[k-\sum_{i \in \pi} l_i])~,  \label{multi_vp}%\nonumber \\
%&= \max_{\pi}~ \mathbb E [R_{\pi}(\mathbf{\omega}(t))] + \beta \cdot \sum_{l_i \in \{0,1\}, i \in \pi}(\prod_{i \in \pi}\omega^{l_i}_i(1-\omega_i)^{1-l_i}) \cdot \nonumber \\
%&V_{t+1}(\tau(\pi) = p_{11},\tau(-\pi),\tau(\pi) = p_{01})\label{multi_vp},  t = 1,2,...,T-1.
%&V_{t+1}(p_{11}(p_{01}),...,p_{11}(p_{01}),\tau({\omega_j}),..,p_{01}(p_{11}),...,p_{01}(p_{11}))) \\
%&\nonumber
\end{align}
where we have adopt the following notation for simplicity: 
%For simplicity we make the following notations:
\begin{itemize}
\item $p_{01}[x]$: a vector $[p_{01},p_{01},...,p_{01}]$ of length $x$.
\item $p_{11}[x]$: a vector $[p_{11},p_{11},...,p_{11}]$ of length $x$.
%\item $q(l_1,l_2,...,l_k) := \prod_{i =1}^{k}(\omega^{l_i}_i(1-\omega_i)^{1-l_i}), l_1,l_2,...,l_k \in \{0,1\}$.
\end{itemize}
In (\ref{multi_vp}), the state vector in $V_{t+1}(\cdot)$ consists of three parts: channels in $\pi$ and sensed to be good (their next state is $p_{11}$); channels in $\pi$ and sensed to be bad (their next state is $p_{01}$); and channels not sensed (their next state is $\tau({\omega_j})$). 
%
%Then (\ref{multi_vp}) can be rewritten as: 
%%\com{I don't think we need both cases, or reordering the states for $V()$...}  
%\begin{align}
%%& (p_{11} \geq p_{01}): 
%V_t(\bar{\omega}(t)) =\max_{\pi}~\mathbb E [R_{\pi}(\bar{\omega}(t))] %\nonumber \\
%%&
%+ \beta \cdot \sum_{l_i \in \{0,1\}, i \in \pi}q(l_1,...,l_k) \cdot V_{t+1}(p_{11}[\sum l_i],\tau({\omega_j}),..,p_{01}[k-\sum l_i])
%\end{align}
%or
%\begin{align}
%&(p_{11} < p_{01}):V_t(\mathbf{\omega}(t)) =\max_{\pi}~\mathbb E [R_{\pi}(\mathbf{\omega}(t))] \nonumber \\
%&+ \beta \cdot \sum_{l_i \in \{0,1\}, i \in \pi}\prod_{i \in \pi}q(l_1,...,l_k) \cdot V_{t+1}(p_{01}[k-\sum l_i],\tau({\omega_j}),..,p_{11}[\sum l_i])
%\end{align}
%Here it is easy to notice that after each action and observation, the information state is re-ordered.

\subsection{The myopic/greedy sensing policy}

The myopic/greedy sensing policy selects a set of channels so as to maximize the one-step immediate reward. %Therefore the myopic policy is a stationary and separated policy. 
%Mathematically we have the myopic sensing policy as
%\begin{align}
%\pi^g_{\omega} = \text{arg}\max_{\pi}~\mathbb E [R_{\pi}(\bar{\omega})]
%\end{align}
\rev{If we sort an information state $\bar{\omega}(t)$ in descending order such that $\omega_1(t) \geq \omega_2(t) \geq ... \geq \omega_N(t)$, then myopic sensing, denoted by $\pi^g$, is one that selects the first $k$ channels (highest probabilities of being good), i.e,
%\begin{align}
$\pi^g= \{1,2,...,k\}$ for a descending ordered $\omega$. Note however $\pi^g$ can be applied to an arbitrarily ordered $\omega$; it will simply selects the first $k$ channels.} % $\forall t = 1,2,...,T
%\end{align}
%\rev{For simplicity we will use $\pi^g$ to denote the one-step myopic sensing policy throughout the following literatures. }  
%
As detailed in \cite{Zhao&etal:08TWC,Ahmad:2009:MOA:1793974.1794208} the implementation of the myopic strategy is particularly simple: it only requires the knowledge of the ordering of the initial information state and the ordering of $\{p_{11},p_{01}\}$. % instead of the whole knowledge of transition parameters $\{p_{ij}\}$ and explicit information states at each step. 
Since this feature is repeatedly used in our analysis, below we elaborate on this to make the paper self-contained.

For the case when $p_{11} \geq p_{01}$, the updating function $\tau(\omega)$ is monotonically non-decreasing, i.e., $\tau(\omega_1) \geq \tau(\omega_2)$ if $\omega_1 \geq \omega_2$, implying that the ordering of channels not sensed is preserved.  The states of sensed channels are updated to either $p_{11}$ (if sensed good) or $p_{01}$ (if sensed bad), noting that $p_{01} \leq \tau(x) \leq p_{11}, \forall x \in  [0,1]$.  It follows that we have the following simple implementation of the myopic policy:
Starting from a descending-ordered list of channels, the policy selects the first $k$ channels.  Upon learning the sensing outcome, those sensed to be good are placed at the front of the list, those sensed to be bad at the end of the list, and those not sensed are in the middle in their original order.  By the above observation, this new list is again in descending order, and thus the policy again selects the first $k$ channels for the next time step, and the same process is repeated. 
%At each time step we push the channels sensed as "good" to the top of the list while pull the "bad" channels to the end. For the other channels we keep the order of the current ones. Thereafter we pick the top $k$ channels for sensing for the next time step. This property creates a round robin style of probing where the channels are cycled through a fixed order; or the actual information state of each channel has no impact of decision making.

For the case with $p_{11} < p_{01}$ we also have monotonicity but in the opposite direction, i.e., $\tau(\omega_1) \geq \tau(\omega_2)$ if $\omega_1 \leq \omega_2$.  Thus the ordering those not sensed is  reversed at each time step.  Meanwhile $p_{11} \leq \tau(x) \leq p_{01}, \forall x \in  [0,1]$. % and for the channel observed to be in a ``good state'', this channel's information state will be updated to $p_{11}$; while for a channel sensed to be ``bad'', its information state gets updated to $p_{01}$.
%Based on above observation, we have 
A similar implementation thus follows: at each time step we place the channels sensed as good to the end  of the list, those sensed bad at the front of the list, and those not sensed in the middle with their ordering reversed.  This produces a descending ordered list so that at the next time step the policy again selects the first $k$ channels.

%\subsection{Ordered state vectors} 
While both the expected one-step reward and the value functions are invariant w.r.t. the ordering of the information state/belief vector $\omega$, for simplicity of presentation we will take $\omega$ to be an ordered vector for the remainder of this paper.  Accordingly, the notation $(\omega_i, \omega_{-i})$ is used to represent the following ordered vector: $(\omega_i, \omega_1, \cdots, \omega_{i-1}, \omega_{i+1}, \cdots, \omega_N)$. 

%Furthermore, we will define the policy $\pi^g$ as one that selects the first $k$ channels/elements in the ordered vector $\omega$, i.e., $\pi^g(\omega) = \{1, 2, \cdots, k\}$.  Note that when $\omega$ is in descending order of its components, $\pi^g$ represents the myopic policy as it selects the $k$ channels with the largest probabilities of being available. 

%\input{sensing}
%optimality of myopic sensing for finite with positive correlated channels
\section{%Optimality of Myopic Sensing : \\
Finite Horizon, $p_{11} \geq p_{01}$} \label{sec:finite_positive}

\subsection{Optimality of myopic sensing}

We begin by introducing the following two quantities: %, \rev{where $\pi \in \Pi$ is such that $i \in \pi$}: \com{??}
\begin{align}
\overline{\mathcal R} = \max_{\omega_{-i} \in [p_{01},p_{11}]^{k-1}}\{\mathbb E [R_{\pi^g}(1,\omega_{-i})] -  \mathbb E [R_{\pi^g}(0,\omega_{-i})]\}\\
\underline{\mathcal R} = \min_{\omega_{-i} \in [p_{01},p_{11}]^{k-1}}\{\mathbb E [R_{\pi^g}(1,\omega_{-i})] -  \mathbb E [R_{\pi^g}(0,\omega_{-i})]\} ~. 
%r^{k,m} = \max_{\omega_{k-2} \in [p_{01},p_{11}]^{k-2}}\{R^{k,m}_{\pi^k}(1,p_{11},\omega_{k-2}) -  R^{k,m}_{\pi^k}(0,p_{11},\omega_{k-2})\}
\end{align}
%and again for simplicity we will omit $(k,m)$ part without ambiguity throughout our paper.
%~\\
%\begin{remark}
$\overline{\mathcal R},\underline{\mathcal R}$ can be easily characterized for some commonly used cases; some examples are shown below. 
%\end{remark}

\begin{ex}{$(k,m) = (k,k), 1 \leq k \leq N$}\label{ex_kk}
In this case we can sense up to $k$ channels and use all those sensed to be available. The one-step reward under $\pi^g$ is thus  
%\begin{align}
$\mathbb E [R_{\pi^g}(\bar{\omega})] = \sum_{i \in \pi^g} \omega_i = \sum_{i=1}^{k} \omega_i$, 
%\end{align}
and thus
%\begin{align}
$\overline{\mathcal R} = \underline{\mathcal R} = 1$. 
%\end{align}
\end{ex}

\begin{ex}{$(k,m) = (k,1)$.} \label{ex_k1}
%In this case while we can sense up to $k$ channels, only one of them can be potentially used for transmission; we have the reward function under this case as
Since we can use no more than 1 channel, the one-step reward under $\pi^g$ is given by 
%\begin{align}
$\mathbb E [R_{\pi^g}(\bar{\omega})] = 1 - \prod_{i=1}^k (1 - \omega_i)$, 
%\end{align}
and thus
%\begin{align}
$\overline{\mathcal R} = (1-p_{01})^{k-1}, \underline{\mathcal R} = (1-p_{11})^{k-1}$. 
%\end{align}
\end{ex} 

We now present the main result of this section. 
\begin{theorem}[Optimality of Myopic Sensing] \label{thm:finite_positive}
%(\textbf{Optimality of Myopic Sesing})~
The myopic sensing policy $\pi^g$ %$\bar{\pi}_g = [\pi^g(1),\pi^g(2),...,\pi^g(T)]$ 
is optimal for $\textbf{(P1)}$ under the condition $0 \leq \beta \leq %\bar{\beta}_p$ and $\bar{\beta}_p$ is given by $\bar{\beta}_p = 
\underline{\cal R}/\overline{\cal R}$ \rev{and for belief state $\bar{\omega}$ s. t. $p_{01}\leq \omega_i \leq p_{11}, \forall \omega_i\in \bar{\omega}$}. %\footnote{Notice here we always have $\bar{\beta}_p \leq 1$. }
%\begin{align}
%\beta^u = \max \{\frac{1}{2(p_{11}-p_{01})} ,1\}\cdot\frac{\underline{R}}{\overline{R}}
%\end{align}
\end{theorem}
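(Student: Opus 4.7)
The plan is to prove Theorem~\ref{thm:finite_positive} by backward induction on the horizon, reducing the analysis to a one-step deviation from the myopic policy at an arbitrary time $t$ and then controlling the resulting reward difference via a coupled sample path argument in the spirit of our earlier work \cite{Ahmad:2009:MOA:1793974.1794208}. The base case $t=T$ is immediate, since the myopic action maximizes the single-stage expected reward by definition. For the inductive step, a standard interchange argument shows that it suffices to consider policies that deviate from $\pi^g$ only at time $t$ and follow $\pi^g$ thereafter; I would therefore fix such a policy $\pi'$ and compare its value to that of $\pi^g$.

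Assume without loss of generality that $\bar{\omega}$ is sorted in descending order so $\pi^g=\{1,\ldots,k\}$, and suppose $\pi'$ replaces exactly one index $i\in\pi^g$ by an index $j>k$, so $\omega_i\geq\omega_j$ with both in $[p_{01},p_{11}]$. The two policies share $k-1$ channels whose beliefs live in $[p_{01},p_{11}]^{k-1}$. Applying the affine property (Proposition~3) in the distinguished coordinate gives the immediate one-stage reward bound
\begin{align}
\mathbb E [R_{\pi^g}(\bar{\omega})] - \mathbb E [R_{\pi'}(\bar{\omega})] \;\geq\; \underline{\mathcal R}\,(\omega_i-\omega_j), \nonumber
\end{align}
by the very definition of $\underline{\mathcal R}$ as the minimum marginal effect of raising a selected coordinate from $0$ to $1$ when the remaining selected coordinates populate $[p_{01},p_{11}]^{k-1}$.

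For the future-reward comparison from time $t+1$ onward, I would couple the observation outcomes on the $k-1$ shared channels so that their posteriors coincide under both policies, and draw the outcomes on the asymmetric pair $\{i,j\}$ from their respective marginals. Under the inductive hypothesis both continuations employ $\pi^g$, and because $[p_{01},p_{11}]$ is invariant under $\tau(\cdot)$ as well as under sensing updates into $\{p_{01},p_{11}\}$, every subsequent belief coordinate remains in $[p_{01},p_{11}]$; this keeps both $\underline{\mathcal R}$ and $\overline{\mathcal R}$ applicable at every stage. A sample path analysis then bounds the total discounted future-reward advantage of $\pi'$ over $\pi^g$ by $\beta\,\overline{\mathcal R}\,(\omega_i-\omega_j)$, exploiting the affine property stage by stage together with the observation that the two coupled trajectories differ only through a single displacement of the distinguished coordinate in the sorted belief list. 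Combining this with the immediate-reward bound yields
\begin{align}
V_t^{\pi^g}(\bar{\omega}) - V_t^{\pi'}(\bar{\omega}) \;\geq\; \bigl(\underline{\mathcal R} - \beta\,\overline{\mathcal R}\bigr)\,(\omega_i-\omega_j), \nonumber
\end{align}
which is nonnegative precisely when $\beta\leq\underline{\mathcal R}/\overline{\mathcal R}$, closing the induction.

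The main obstacle is the future-reward bound above: because the one-stage reward is non-additive in the selected coordinates (in contrast to the $(k,k)$ case of \cite{Ahmad:2009:MOA:1793974.1794208}, where the reward is simply the sum of the chosen beliefs), one cannot telescope differences stage by stage by a single linearity argument. The key technical ingredients I would lean on are (i) the invariance of $[p_{01},p_{11}]$ under the belief updates, which keeps $\underline{\mathcal R}$ and $\overline{\mathcal R}$ valid throughout; (ii) the affine property in each coordinate, which linearizes every reward comparison into a form proportional to $\omega_i-\omega_j$; and (iii) the order-preserving implementation of $\pi^g$ from Section~\ref{sec:prob_formulation}, which constrains how the two coupled trajectories can diverge and allows the accumulated future-reward difference to be absorbed into a single $\overline{\mathcal R}(\omega_i-\omega_j)$ factor under one unit of discount.
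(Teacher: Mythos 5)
Your overall skeleton matches the paper's: backward induction on $t$, reduction (via the induction hypothesis) to a deviation at time $t$ followed by myopic play, and a final comparison of the form $V_t^{\pi^g}(\bar{\omega})-V_t^{\pi'}(\bar{\omega})\geq(\underline{\mathcal R}-\beta\,\overline{\mathcal R})(\omega_i-\omega_j)$, which is exactly the inequality that closes the paper's induction (it appears verbatim at the end of the proof of L2 in Lemma \ref{lem:key}). Your immediate-reward bound via the affine property and the definition of $\underline{\mathcal R}$ is also correct as stated.

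The genuine gap is in the future-reward bound. You assert that a coupling plus "the affine property stage by stage" bounds the discounted future advantage of $\pi'$ by $\beta\,\overline{\mathcal R}\,(\omega_i-\omega_j)$, on the grounds that the two coupled trajectories "differ only through a single displacement of the distinguished coordinate in the sorted belief list." That description is inaccurate, and it hides the entire difficulty. After one step, under $\pi^g$ channel $i$ has been sensed (posterior $p_{11}$ or $p_{01}$, placed at the head or tail of the sorted list) while channel $j$ has not (posterior $\tau(\omega_j)$, in the middle), and under $\pi'$ the roles are exchanged; so the belief vectors at $t+1$ differ in \emph{two} coordinates and, more importantly, in the sorted positions those coordinates occupy, and under greedy continuation this discrepancy is a transposition that migrates through the list over subsequent steps. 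No single application of affineness linearizes this. The paper controls it by proving two coupled inequalities \emph{jointly} by induction: the adjacent-swap inequality L2, $W_t(\ldots,x,y,\ldots)\geq W_t(\ldots,y,x,\ldots)$ for $x\geq y$, and crucially the cyclic-shift inequality L1, $\overline{\mathcal R}+W_t(\omega_N,\omega_1,\ldots,\omega_{N-1})\geq W_t(\omega_1,\ldots,\omega_N)$, whose proof requires a four-case sample-path analysis on the realizations of the two differing channels; in that analysis the cases $(0,0)$ and $(0,1)$ invoke the induction hypotheses of \emph{both} L1 and L2, and it is L1's additive slack $\overline{\mathcal R}$ that ultimately supplies the $\beta\,\overline{\mathcal R}\,(x-y)$ term you need, with the condition $\beta\leq\underline{\mathcal R}/\overline{\mathcal R}$ emerging precisely in the $j=k$ case of L2. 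Without formulating and inductively proving something equivalent to L1 (a uniform additive bound on the value change when the lowest belief is promoted past all others), your stage-by-stage argument cannot absorb the accumulated difference into a single $\overline{\mathcal R}(\omega_i-\omega_j)$ factor. A secondary, fixable omission: your "without loss of generality" restriction to single-index deviations needs the chaining the paper performs by repeated application of L2, sorting one element at a time, which itself presupposes the swap inequality you have not yet established.
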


\rev{
\begin{remark}
Note that the condition on $\bar{\omega}$ in the above theorem is not overly restrictive, as $p_{01}\leq \tau(\omega_i)\leq p_{11}$ for any $\omega_i$, implying that even if the initial belief $\bar{\omega}$ at time $t=1$ does not satisfy this condition, the theorem is applicable starting from time $t=2$. 
\end{remark} 
} 

To prove this theorem, we next introduce a number of lemmas.  Define $T$ $N$-variable functions $W_t(\cdot), t=1, 2, \cdots, T$, recursively as follows:   
\begin{eqnarray}
\rev{W_T(\bar{\omega}) } &=& \rev{\mathbb E [R_{\pi^g}(\bar{\omega})]} \nonumber \\ 
 W_t(\bar{\omega}) %&=&  \mathbb E [R_{\pi^g}(\bar{\omega})] + \beta \cdot \mathbb E\{W_{t+1}(\bar{\omega}(t+1))|\bar{\omega}(t),\pi^g\} \nonumber \\
&=&  \mathbb E [R_{\pi^g}(\bar{\omega})] + \nonumber \\
&& \beta \cdot \sum_{\bar{l} \in \{0,1\}^k} %i \in \pi^g}
q(\bar{l}; \bar{\omega}) \cdot %\nonumber \\
%&& 
W_{t+1}(p_{11}[\sum_{i=1}^k l_i],\tau({\omega_{k+1}}),.., \tau(\omega_{N}), p_{01}[k-\sum_{i=1}^k l_i]) , \label{eqn:W_def}
\end{eqnarray}
\rev{where $\bar{l}=\{l_1, \cdots, l_k\}$, and $q(\bar{l}; \bar{\omega}) := \prod_{i =1}^{k}(\omega^{l_i}_i(1-\omega_i)^{1-l_i}), l_1,l_2,...,l_k \in \{0,1\}$.} 

%\begin{lemma}
%There exist $T$ $N$-variable functions denoted by $W_t(\cdot),t=1,2,...,T$, each of which can be represented recursively in the following form:
%\end{lemma}
%\emph{Proof.} The proof is straightforward which can be obtained by using backward induction on $t$ by noting obviously $W_T(\cdot)$ is one polynomial and updating function $\tau$ and expected reward function $\mathbb E[R_{\pi}(\cdot)]$ are linear operation for each individual $\omega_i,i \in \Omega$. \qed

\begin{remark}
A few remarks are in order on these function $W_t(\cdot),t=1,2,...,T$: 
\begin{enumerate}
\rev{
\item  If $\bar{\omega}$ is in descending order, then applying $\pi^g$ at time $t$ is myopic.  Moreover, the state vector within $W_{t+1}(\cdot)$ retains the same descending order.  This is because $\tau(\omega)$ is increasing in $\omega$ and $p_{11} \geq \tau(\omega) \geq p_{01}$ for any $\omega$.  Thus if $\omega_{k+1}\geq \cdots \geq \omega_N$, then $p_{11}\geq \omega_{k+1} \geq \cdots \geq \omega_N \geq p_{01}$.  This implies that selecting the first $k$ channels at $t+1$, i.e., $\pi^g$ would again be myopic. 

\item When $\bar{\omega}$ is in descending order of its components, $W_t(\bar{\omega})$ is the expected discounted total reward starting from state $\bar{\omega}(t)$ at time $t$ by following the myopic policy at each time step.  This is because $W_t(\cdot)$ takes on the same recursive form as the value function, and at each time step the myopic policy is used due to the descending order of the state vector as noted above.  %This follows from the definition of $W_t(\cdot)$ as after each action (by choosing the channels with highest $k$ information state) we sort the information state to be ordered for $t+1$; also notice that this sorting aligns with the implementation of myopic sensing policy we introduced earlier.  
%\item[2)] \rev{If $\bar{\omega}(t)$ 

\item When $\bar{\omega}$ is not in descending order, $W_t(\bar{\omega})$ as given above represents the expected discounted total reward of the following policy: It selects the first $k$ channels as listed in the vector $\bar{\omega}$ at time $t$; it then orders the next state vector as follows: those channel sensed to be good are listed first, followed by those not sensed at all, in their original order in $\bar{\omega}$, followed finally by those sensed to be bad.  This process is then repeated. 

\item When $j\in \pi^g$ ($1\leq j\leq k$), we can also conveniently write $W_t(\bar{\omega})$ in the following form by singling out component $\omega_j$ and calculating the expected future reward conditioned on the outcome of sensing channel $j$; this expression is frequently used in our proofs: 
\begin{eqnarray}
 W_t(\bar{\omega}) %&=&  \mathbb E [R_{\pi^g}(\bar{\omega})] + \beta \cdot \mathbb E\{W_{t+1}(\bar{\omega}(t+1))|\bar{\omega}(t),\pi^g\} \nonumber \\
&=&  \mathbb E [R_{\pi^g}(\omega_j, \omega_{-j})] + \nonumber \\
&& \omega_j \beta \cdot \sum_{\bar{l}_{-j} \in \{0,1\}^{k-1}} %i \in \pi^g}
q(\bar{l}_{-j}; \omega_{-j}) %\nonumber \\
%&& 
W_{t+1}(p_{11}[\sum_{i\neq j} l_i+1],\tau({\omega_{k+1}}),.., \tau(\omega_{N}), p_{01}[k-\sum_{i\neq j} l_i-1])+  \nonumber\\
&& (1-\omega_j) \beta \cdot \sum_{\bar{l}_{-j} \in \{0,1\}^{k-1}} 
q(\bar{l}_{-j}; \omega_{-j})  
W_{t+1}(p_{11}[\sum_{i\neq j} l_i],\tau({\omega_{k+1}}),.., \tau(\omega_{N}), p_{01}[k-\sum_{i\neq j} l_i]), \label{eqn:W_alt}
\end{eqnarray}
where $\bar{l}_{-j} = \{l_1, \cdots, l_{j-1}, l_{j+1}, \cdots, l_k\}$. } 
\end{enumerate}
\end{remark}

Key properties of the functions $W_t(\cdot), t=1, 2, \cdots, T$ are presented below. 
\begin{lemma}[Monotonicity] 
$W_t(\bar{\omega}^{'}) \geq W_t(\bar{\omega})$, $t=1,2,...,T$, for $\bar{\omega}^{'}~ \mathbf{\succeq}~\bar{\omega}$, with $\mathbf{\succeq}$ denoting  component wise larger than or equal to. \label{lemma:mono}
\end{lemma}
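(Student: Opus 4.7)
My plan is to prove this by backward induction on $t$, from $t=T$ down to $t=1$. Since componentwise domination $\bar{\omega}'\succeq \bar{\omega}$ can be reached by a finite sequence of single-coordinate upward perturbations, it suffices to show, for any coordinate $i$ and any $\omega_i'\geq \omega_i$, that $W_t(\omega_i',\omega_{-i})\geq W_t(\omega_i,\omega_{-i})$; the general claim then follows by chaining such steps coordinate by coordinate. The base case $t=T$ is immediate from Proposition~\ref{eqn:prop_increase} (Increasing) applied to the one-step reward $\mathbb{E}[R_{\pi^g}(\bar{\omega})]$.

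For the inductive step, I split into two cases. Case~A: $i\in \pi^g = \{1,\ldots,k\}$, so channel $i$ is sensed. Here I invoke the alternative representation (\ref{eqn:W_alt}) that singles out $\omega_i$, writing $W_t(\omega_i,\omega_{-i}) = \mathbb{E}[R_{\pi^g}(\omega_i,\omega_{-i})] + \beta[\omega_i\, A + (1-\omega_i)\, B]$, where $A$ and $B$ are the sums over $\bar{l}_{-i}$ of $q(\bar{l}_{-i};\omega_{-i})\cdot W_{t+1}(\cdot)$ evaluated at the future state conditioned respectively on channel $i$ being observed good or bad; crucially, neither $A$ nor $B$ depends on $\omega_i$. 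Combining this with the Affine property (Proposition~3) applied to the one-step reward yields
\begin{align*}
W_t(\omega_i',\omega_{-i}) - W_t(\omega_i,\omega_{-i}) = (\omega_i'-\omega_i)\Big\{ \Delta R + \beta\,(A-B)\Big\},
\end{align*}
where $\Delta R := \mathbb{E}[R_{\pi^g}(1,\omega_{-i})] - \mathbb{E}[R_{\pi^g}(0,\omega_{-i})] \geq 0$ by Proposition~\ref{eqn:prop_increase}. To conclude $A\geq B$ I match the two sums term by term in $\bar{l}_{-i}$: the outer weights $q(\bar{l}_{-i};\omega_{-i})$ coincide, and the state vectors fed into $W_{t+1}$ differ in exactly one coordinate ($p_{11}$ in the $A$-sum versus $p_{01}$ in the $B$-sum). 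Since $p_{11}\geq p_{01}$, the induction hypothesis applied to $W_{t+1}$ gives the termwise inequality.

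Case~B: $i>k$, so channel $i$ is not sensed under $\pi^g$. Then $\omega_i$ enters neither $\mathbb{E}[R_{\pi^g}(\bar{\omega})]$ nor the outcome weights $q(\bar{l};\bar{\omega})$; it appears only as $\tau(\omega_i)$ inside the argument of $W_{t+1}$. Because $p_{11}\geq p_{01}$, $\tau$ is non-decreasing, so $\tau(\omega_i')\geq \tau(\omega_i)$, and the induction hypothesis applied termwise in the sum over $\bar{l}$ yields the desired inequality.

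I expect the main obstacle to be the verification of $A\geq B$ in Case~A: one has to pair up the summands in $A$ and $B$ so that the comparison reduces to a single-coordinate perturbation of the argument of $W_{t+1}$, at which point the induction hypothesis applies. The rest is a routine linearity-plus-induction argument, and the positive-correlation hypothesis $p_{11}\geq p_{01}$ is used in both cases, directly in Case~A to order $p_{11}$ and $p_{01}$ in the future state and indirectly in Case~B through the monotonicity of $\tau$.
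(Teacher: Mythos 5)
Your proposal is correct and takes essentially the same route as the paper's own proof: backward induction on $t$, reduction to single-coordinate perturbations chained one element at a time, the unsensed case handled via monotonicity of $\tau$, and the sensed case via the representation (\ref{eqn:W_alt}) with the comparison $A\geq B$ (the paper's $R_1\geq R_2$) supplied by the induction hypothesis. One small imprecision worth noting: the paired arguments of $W_{t+1}$ in your $A$ and $B$ sums do not differ in exactly one coordinate as ordered vectors---the extra $p_{11}$ sits at the front of the sensed-good block while the extra $p_{01}$ sits at the end, shifting the $\tau(\omega_j)$ entries in between---but since the unsensed beliefs are in descending order and $p_{01}\leq\tau(\omega)\leq p_{11}$, the $A$-vector componentwise dominates the $B$-vector, so your induction hypothesis (which is precisely componentwise monotonicity) still closes the argument.
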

%\emph{Proof.}  Please see the Appendix. \qed

\begin{lemma}[Affine]
$W_t(\bar{\omega}), t=1, 2, \cdots, T$, is an affine function of each element of $\bar\omega$. \label{lemma:affine}
\end{lemma}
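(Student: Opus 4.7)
The plan is to prove affineness by backward induction on $t$, from $t=T$ down to $t=1$. The base case $W_T(\bar{\omega}) = \mathbb{E}[R_{\pi^g}(\bar{\omega})]$ follows directly from Proposition~3 (Affine) for indices $i\in \pi^g$, while for $i\notin \pi^g$ the one-step reward does not depend on $\omega_i$ at all and is trivially affine in it.

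For the inductive step, assume $W_{t+1}$ is affine in each of its $N$ arguments separately. To show $W_t$ is affine in a fixed $\omega_i$, I split on whether $i\in\pi^g$ or not. When $i\in \pi^g$ (i.e.\ $1\leq i\leq k$), the natural tool is the alternative representation in equation~(\ref{eqn:W_alt}), which singles out $\omega_i$ by conditioning on the sensing outcome at channel $i$. The immediate reward term is affine in $\omega_i$ by Proposition~3; the two future-value summations carry explicit $\omega_i$ and $(1-\omega_i)$ prefactors, while the inner probability weights $q(\bar{l}_{-i};\omega_{-i})$ involve only the other components of $\pi^g$, and the state arguments fed into $W_{t+1}$ consist solely of $p_{11}$-entries, $p_{01}$-entries, and $\tau(\omega_j)$ for $j>k$ — none of which depend on $\omega_i$. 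Hence $W_t(\bar{\omega})$ has the form $a + b\,\omega_i + \omega_i A + (1-\omega_i)B$ with $a,b,A,B$ independent of $\omega_i$, which is affine in $\omega_i$.

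When $i\notin \pi^g$ (i.e.\ $k+1\leq i\leq N$), $\omega_i$ is absent from the one-step reward and from every probability weight $q(\bar{l};\bar{\omega})$ in the definition~(\ref{eqn:W_def}); it enters $W_t$ only through the argument $\tau(\omega_i)=p_{01}+(p_{11}-p_{01})\omega_i$ inside $W_{t+1}$. Since $\tau$ is affine and, by the inductive hypothesis, $W_{t+1}$ is affine in the coordinate where $\tau(\omega_i)$ is plugged in, the composition is affine in $\omega_i$. Taking a convex combination of these composed terms with the $\omega_i$-independent weights $q(\bar{l};\bar{\omega})$ preserves affineness, so $W_t$ is affine in $\omega_i$.

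I do not foresee a serious obstacle: the recursion is linear in the future value, the transition kernel factorizes across channels, and $\tau$ is affine, so the only content is keeping the two cases separate and correctly invoking the appropriate representation of $W_t$. The main point of care is simply recognizing that for $i\in\pi^g$ one must pass through the conditioned form~(\ref{eqn:W_alt}) rather than~(\ref{eqn:W_def}), so that $\omega_i$ is cleanly isolated from the state vector of $W_{t+1}$.
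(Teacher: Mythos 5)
Your proof is correct and follows essentially the same backward induction as the paper's: the base case via the affine property of the one-step reward, and the inductive step split on whether the coordinate is sensed, with the unsensed case handled by composing the affine map $\tau$ with the induction hypothesis. The only cosmetic difference is that for $i\in\pi^g$ you route through the conditioned form (\ref{eqn:W_alt}), whereas the paper argues directly from (\ref{eqn:W_def}) by observing that each weight $q(\bar{l};\bar{\omega})$ is itself affine in $\omega_i$ while the arguments of $W_{t+1}$ do not involve $\omega_i$ --- the two computations coincide after regrouping, so passing through (\ref{eqn:W_alt}) is convenient but not, as you suggest, necessary.
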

\begin{proof} 
%Consider each $\omega_i,i \in \Omega$.
%Recall
%\begin{align}
%W_t(\bar{\omega}) &=\mathbb E [R_{\pi^g}(\bar{\omega})] + \beta \cdot \sum_{l_i \in \{0,1\}, i \in \pi^g}q(l_1,...,l_k) \nonumber \\
%&\cdot W_{t+1}(p_{11}[\sum l_i],\tau({\omega_j}),..,p_{01}[k-\sum l_i])
%\end{align}
%We discuss in two cases.
\rev{We prove this by induction on $t$.  Consider $W_T(\bar{\omega})$ and an element $\omega_j$.  If $j\notin \pi^g$, then $W_T(\bar{\omega})$ is not a function of $\omega_j$.  If $j \in \pi^g$, then $\mathbb E [R_{\pi^g}(\bar{\omega})]$ is an affine function of $\omega_j$ by Proposition \ref{prop_affine}.  In either case the induction basis is established. 
Suppose the lemma holds for all times $t+1, t+2, \cdots, T$.  

Now consider $W_t(\bar{\omega})$, and the case $j\notin\pi^g$.  
%
%\textbf{CASE 1.} $i \notin \pi^g$.
%
%In this case 
By the induction hypothesis, the $W_{t+1}(\cdot)$ term in (\ref{eqn:W_def}) is an affine function of $\tau(\omega_j)$, which in turn is a linear in $\omega_j$.  Since $W_t(\bar{\omega})$ only depends on $\omega_j$ through this $W_{t+1}(\cdot)$ function, by the definition in (\ref{eqn:W_def}), it follows that $W_t(\bar{\omega})$ is affine in $\omega_j$.  
%The other terms are all at $\mathcal O(1)$ order w.r.t. $\omega_i$.

Consider the case $j\in \pi^g$. 
%\textbf{CASE 2.} $i \in \pi^g$.
In this case $\mathbb E [R_{\pi^g}(\bar{\omega})]$ and $q(\bar{l};\bar{\omega})$ are both affine functions of $\omega_j$ (by Proposition \ref{prop_affine} and definition of $q(\cdot)$, respectively).  Meanwhile the $W_{t+1}(\cdot)$ term in (\ref{eqn:W_def}) does not depend on $\omega_j$ as $j\in \pi^g$.  Thus $W_t(\bar{\omega})$ is again affine in $\omega_i$. 
%has get rid of $\omega_i$ and thus it is at the order of $\mathcal O(1)$ w.r.t. $\omega_i$. 
} 
\end{proof}

The next lemma provides two key inequalities that lead to the proof of the main theorem in this section. 
% will lead to the optimality of \emph{myopic sensing}.

\begin{lemma}\label{lem:key}
For $p_{11} \geq \omega_1 \geq \omega_2 \geq ... \geq \omega_N \geq p_{01}$ and for all $t = 1, 2, \cdots, T$\footnote{The assumption of bounding $\bar{\omega}$ between $p_{01}$ and $p_{11}$ is in fact a rather weak one. To see this it is easy to verify $p_{01} \leq \tau(x) \leq p_{11}, \forall x \in [0,1]$; thus if the initial belief falls between $[p_{01},p_{11}]$ (for example taking the initial belief as the steady state distribution $\frac{p_{01}}{p_{01}+p_{11}},\frac{p_{11}}{p_{01}+p_{11}}$), the assumption holds immediately for any $t$. } , under the condition $\beta \leq \underline{\cal R}/\overline{\cal R}$ and $ x,y$ we have: \com{do $x$ and $y$ need to be in the range of $p_{01}$ and $p_{11}$?} \com{we donnot have the constraint for x,y.}
\begin{align}
\mbox{(L1):}~~~& \overline{\mathcal R} + W_t(\omega_N,\omega_1,...,\omega_{N-1}) \geq W_t(\omega_1,...,\omega_N) ~, \\
%\end{align}
%Here notice the top $k$ elements of the information state vector on both sides differs only by one element $\omega_N$ and $\omega_k$.
%Furthermore, for all $1 \leq j \leq N-1$ and \rev{$p_{11} \geq x \geq y \geq p_{01}$}, and under the condition $\beta \leq \underline{R}/\overline{R}$ we have
%\begin{align}
\mbox{(L2):}~~~& W_t(\omega_1,...,\omega_{j-1}, x, y, \omega_{j+2}, \cdots,\omega_N) \geq W_t(\omega_1, \cdots, \omega_{j-1}, y, x, \cdots, \omega_{j+2}, \cdots, \omega_N) ~. 
\end{align}
\end{lemma}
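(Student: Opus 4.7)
I propose to establish (L1) and (L2) jointly by backward induction on $t$ from $T$ down to $1$. The joint induction is essential: the inductive step for (L1) uses (L2) at time $t+1$ to realign continuation state-vectors into a common layout, while the inductive step for (L2) uses (L1) at time $t+1$ to absorb the $\overline{\mathcal R}$-sized term that arises when a cyclic shift of a tail component is needed.

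\textbf{Base case ($t=T$).} Here $W_T(\bar\omega)=\mathbb E[R_{\pi^g}(\bar\omega)]$ depends only on the multiset of the first $k$ components of its argument. For (L1) the two sensed blocks on the two sides differ in exactly one element ($\omega_k$ versus $\omega_N$); by the affine property in Proposition \ref{prop_affine} the reward gap equals $(\omega_k-\omega_N)\bigl\{\mathbb E[R_{\pi^g}(1,\cdot)]-\mathbb E[R_{\pi^g}(0,\cdot)]\bigr\}$, which by the definition of $\overline{\mathcal R}$ is at most $\overline{\mathcal R}$. For (L2), a case split on $j\le k-1$, $j=k$, and $j\ge k+1$ reduces the claim either to the symmetric property of $\mathbb E[R_{\pi^g}]$ or to the same affine manipulation, taking care of whether both, one, or neither of the swapped positions falls inside the sensed block of $\pi^g$.

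\textbf{Inductive step.} Assuming both inequalities at time $t+1$, I expand $W_t$ on each side using (\ref{eqn:W_def}) and condition on the sensing outcome of the coordinate that actually differs between the two state vectors. The one-step reward pieces are bounded exactly as in the base case. The continuation $W_{t+1}$ terms are then compared after aligning their arguments: repeated applications of (L2) at $t+1$ permute the unsensed tail into a common order, and where a mismatch of the smallest tail component remains, a cyclic shift handled by (L1) at $t+1$ brings the two vectors into agreement up to an additive $\beta\,\overline{\mathcal R}$-sized term. The affine and monotonicity properties (Lemmas \ref{lemma:affine} and \ref{lemma:mono}) then reduce any residual coordinate-wise difference to a scalar proportional to $\omega_k-\omega_N$ (for (L1)) or to $x-y$ (for (L2)). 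The hypothesis $p_{01}\le\omega_i\le p_{11}$ is used to ensure every continuation vector stays in $[p_{01},p_{11}]^N$ so that the induction hypothesis remains applicable. Finally the ratio condition $\beta\le\underline{\mathcal R}/\overline{\mathcal R}$ is invoked to dominate the discounted continuation loss, of magnitude at most $\beta\,\overline{\mathcal R}$ per unit of the scalar factor, by the immediate-reward gain of at least $\underline{\mathcal R}$ per the same unit, closing both inductions simultaneously.

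\textbf{Main obstacle.} The most delicate step is the sub-case $j=k$ of (L2) within the inductive argument: swapping $x$ at position $k$ with $y$ at position $k+1$ changes \emph{which} coordinate is sensed, so the two continuation state-vectors differ not by a permutation but by one entry being $p_{11}$ or $p_{01}$ on one side versus $\tau(\cdot)$ on the other. Expressing the continuation as a convex combination over the sensing outcome of the swapped coordinate, applying (L1) at $t+1$ to the resulting cyclic shifts of the smallest tail component, and tracking the sign of the net coefficient of $(x-y)$ so that the $\beta$-bound can be invoked cleanly (rather than against us) is the main technical hurdle; this is also the unique place in the argument where the discount ratio $\underline{\mathcal R}/\overline{\mathcal R}$ is binding.
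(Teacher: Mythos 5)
Your plan coincides with the paper's own proof essentially step for step: a joint backward induction on $t$ in which the base case follows from the one-step reward bounds defining $\overline{\mathcal R}$, the inductive step conditions on the sensing realizations of the channels on which the two sides differ (a $2\times 2$ case analysis for (L1), and the four-term convex combination over the $(x,y)$ outcomes for (L2) with $j=k$), repeated use of (L2) at $t+1$ realigns the continuation vectors while (L1) at $t+1$ absorbs the $\overline{\mathcal R}$-sized cyclic-shift term, and the ratio condition becomes binding exactly in the $j=k$ sub-case, where the paper likewise arrives at $\mathrm{LHS}-\mathrm{RHS}\geq (x-y)\underline{\mathcal R}-\beta(x-y)\overline{\mathcal R}\geq 0$ under $\beta\leq\underline{\mathcal R}/\overline{\mathcal R}$. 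The only cosmetic divergence is at $t=T$, where you invoke the affine property (Proposition \ref{prop_affine}) while the paper uses the increasing property (Proposition \ref{eqn:prop_increase}); both yield the same $\overline{\mathcal R}$ bound.
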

%\begin{proof} 
%Please see the Appendix. 
%\end{proof} 
%

{\em Proof of Theorem \ref{thm:finite_positive}:} 
\rev{We prove the theorem by induction on $t$.  

{\em Induction basis:} That $\pi^g$ is optimal at time $T$ is obvious due to the increasing property of the expected one-step reward, Proposition \ref{eqn:prop_increase}. Assume the myopic policy $\pi^g$ is optimal for any given state vector $\omega$ for times $t+1, \cdots, T$. 

{\em Induction step:} Suppose the optimal policy at time $t$ under state $\bar{\omega}$ is $\pi^{*} \neq \pi^g$. Accordingly, we can write the state vector as $(\bar{\omega}_{*}, \bar{\omega}_{-*})$, where $\bar{\omega}_{*} :=\{ \omega_j, j\in \pi^{*} \}$ contains the probabilities of those channels selected by $\pi^{*}$ and $\bar{\omega}_{-*}:=\bar{\omega}-\bar{\omega}_{*}$, sorted in descending order, contains those not selected by $\pi^{*}$.  Since the myopic policy is optimal starting from $t+1$ by the induction hypothesis, the expected discounted reward of using policy $\pi^{*}$ at time $t$ followed by the myopic policy thereafter is essentially given by $V_t^{\pi^{*}}(\bar{\omega}) = W_t(\bar{\omega}_{*}, \bar{\omega}_{-*})$, where $\bar{\omega}$ is in descending order. However, by repeated use of L2 in Lemma \ref{lem:key}, sorting one element at a time, we have $W_t(\bar{\omega}) \geq W_t(\bar{\omega}_{*}, \bar{\omega}_{-*})$, contradicting the claim.  Therefore the myopic policy is also optimal at time $t$. \qed
} 

%\begin{remark}
%We can view the above argument as a process of bubble sort algorithm. 
%\end{remark}

\subsection{Special cases}\label{sec:finite_positive_special}

We next interpret the result obtained above in a number of special cases. %: $k = N-1$ and $k=N$. 
%In this part we analyze two specific network sensing cases.

\textbf{Case 1: $(k, m)=(k, k)$.} %While sensing $k$ we transmit on $k$ channels, $\forall 1\leq k \leq N$.
As shown earlier in Example \ref{ex_kk} we have
%\begin{align}
$\overline{\mathcal R} = \underline{\mathcal R}  = 1$.  Thus in this case the optimality condition reduces to $\beta\leq 1$ which is always true, i.e., it is not binding. 
%\end{align}
%which gives us $\bar\beta_p := \overline{R}/\underline{R} =1$.  Then we established the optimality of myopic sensing for this case under all $0 \leq \beta \leq 1$.

\textbf{Case 2: $(k, m) = (k, 1)$.} %While sensing $k > 1$ we transmit on ONE channel.
As shown earlier in Example \ref{ex_k1} we have 
%And it is not hard to verify that
%\begin{align}
$\overline{\mathcal R} = (1-p_{01})^{k-1}$ and $\underline{\mathcal R} = (1-p_{11})^{k-1}$. 
%r^{k,m} &= (1-p_{11})\cdot(1-p_{01})^{k-1}
%\end{align}
It follows that $\underline{\cal R}/\overline{\cal R}<1$, except for the trivial case of $p_{11} = p_{01}$.  This means that in the case of sensing multiple channels while limiting access to one channel, the myopic policy is not always optimal, and the optimality condition $\beta\leq \underline{\cal R}/\overline{\cal R}$ becomes binding.

\textbf{Case 3: $k=N, m\leq k$.}
This case is trivial as only a single action is available at each time, which coincides with the myopic policy when $k=N$.  It is therefore optimal without requiring any conditions. 

\textbf{Case 4: $k=N-1, m\leq k$.}
It can be shown that in this case the myopic policy is optimal without any condition on $\beta$ or $\bar{\omega}$. \com{??? true??}\com{yang: true. under this case, no constraint for $\beta$ or $\omega$ will be needed.} 
The proof follows the same argument used in the preceding subsection.  In particular, we note that the 
condition on $\beta$ arise from the induction step of proving L2 in Lemma \ref{lem:key}.  However, it can be easily verified that when $k=N-1$ this step holds for all $0\leq \beta\leq 1$. 

\subsection{A numerical example}

\rev{The following numerical example highlights how myopic sensing my not be optimal when the sufficient condition on $\beta$ is not satisfied. } \com{Is this what the example shows? or does it simply show what we have now remarked, which that for beliefs outside the range myopic sensing is not necessarily optimal?} 

\com{yang: under this example, both constraints are violated. $\bar{\omega}$ is out of the range of $[p_{01},p_{11}]$ and $\beta$ is also out of the bounding range.}
The example is given by the following parameter values: 
$N=5, k=2, m=1, \beta=0.8, T=5, p_{11}=0.9, p_{01}=0.1$, 
%In our simulation we have the following settings
%$N = 5, T = 5, k = 2$
%\begin{center}
%    \begin{tabular}{ | c | c | c | c | c | c | c | c |}
%    \hline
%    Parameters & $N$  & $T$  & $k$ & $m$ & $\beta$ & $p_{11}$ & $p_{01}$\\ \hline
%    Value & 5 & 5 & 2& 1& 0.8 & 0.9 & 0.1\\ \hline
%    \end{tabular}
%\end{center}
%
with an initial information states 
%\begin{align}
$\bar{\omega} = \{0.99,0.95,0.9,0.9,0.9\}$. 
%\end{align}
%i.e., here we have 5 channels and our time horizon is also 5. Meanwhile the cognitive user can sense 2 channels at each time but it can only transmit on up to 1 channel per time.
%
%Again notice that in our simulation after each time step we will re-order the information state based on their updated values.  
Denote by $W^{\{1,2\}}_1$ the expected reward of sensing myopically (channels ordered \{1, 2\}) in each time step, and by $W^{\{1,3\}}_1$ the expected reward of sensing channels  \{1, 3\} at $t=1$ followed by sensing myopically thereafter.  Numerically solving the example shows that % we have
%\begin{align}
$W^{\{1,2\}}_1 = 3.3279$ and 
$W^{\{1,3\}}_1 = 3.3283$,  
%\end{align}
thus in this case myopic sensing is not optimal.  \com{again, what is the reason: $\beta$ or $\bar{\omega}$?}  
%Obviously deviation from myopic sensing brings in benefits.

%%\begin{remark}
%We have two observations from our numerical example. 
%\begin{itemize}
%\item 
\rev{
What this counter example shows is that when the top channel (the one with highest information state) has a sufficiently high belief, i.e. we have high confidence that in the next step this channel will be available, it may make more sense to take this opportunity to {\em explore} by updating our belief on a lower channel (number 3 in this case) rather than selecting the second highest channel to further improve our chance (which is already very high by virtue of the top channel's state) of getting at least one good channel in the next time step. 
} 

%\item 
It is worth noting that these counter examples are only found in such extreme cases, i.e., cases with information state close to 1, or cases with high $p_{11}$ and low $p_{01}$. \com{need to elaborate on this a bit more} % while for some "mild" setting of parameters it is hard to find such an example. 
%\end{itemize}
%\end{remark}
%\begin{remark} 
%However, notice that in the counter numerical example, we have $\omega_1 > p_{11}$. Our optimality condition claim holds for information states falling between $[p_{01},p_{11}]$
%\end{remark} 
%optimality of myopic sensing for finite time problem with negatively correlated channels
\section{Finite Horizon, $p_{11} < p_{01}$} \label{sec:finite_negative}

\subsection{Optimality of myopic sensing}
\begin{theorem}[Optimality of Myopic Sensing] \label{thm:finite_negative} 
%(\textbf{Optimality of Myopic Sensing})~
The myopic sensing strategy $\pi_g$ %$\mathbf{\pi_g} = [\pi_g(1),\pi_g(2),...,\pi_g(T)]$ 
is optimal for $\textbf{(P1)}$ under the condition $0 \leq \beta \leq \frac{\underline{\mathcal R}}{\underline{\mathcal R}+\overline{\mathcal R}}$ \rev{and for belief state $\bar\omega$ s. t. $p_{11}\leq \omega_i \leq p_{01}, \forall \omega_i\in \bar{\omega}$}. 
%\footnote{Here we have $\bar\beta_n < 1$ holds with strict inequality.}
\end{theorem}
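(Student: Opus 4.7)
\medskip

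The plan is to mirror the proof of Theorem \ref{thm:finite_positive} but to redefine the auxiliary functions $W_t(\cdot)$ so that they correctly encode the myopic policy's trajectory under the order-reversing dynamics of the negatively correlated case. Specifically, I would define $W_t(\bar\omega)$ recursively just as in (\ref{eqn:W_def}), except that inside the $W_{t+1}(\cdot)$ term the successor state vector is arranged as follows: the $k-\sum l_i$ channels sensed to be bad (updated to $p_{01}$, the larger endpoint) are placed at the front; the $N-k$ unsensed channels appear in the middle but in the \emph{reverse} of their current order (consistent with $\tau$ being order-reversing since $p_{11} < p_{01}$); and the $\sum l_i$ channels sensed good (updated to $p_{11}$, the smaller endpoint) are placed at the end. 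Under the hypothesis $p_{11}\leq \omega_i\leq p_{01}$, this recomposed vector is again in descending order in $[p_{11},p_{01}]$, so $W_t(\bar\omega)$ coincides with the expected discounted reward of executing the myopic policy for $T-t+1$ steps starting from a descending-ordered state.

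Next, I would re-establish monotonicity (the analogue of Lemma \ref{lemma:mono}) and the affine property (the analogue of Lemma \ref{lemma:affine}) of these new $W_t$ by induction on $t$. The argument is identical in structure to the positive-correlation case; the only subtlety is tracking $\omega_j$ through the order-reversed placement of unsensed channels, but since each $W_{t+1}$ slot receives exactly one image of $\omega_j$ via either a constant, a Bernoulli split, or $\tau(\omega_j)$, affinity is preserved. The main work then concentrates on proving the analogue of Lemma \ref{lem:key}, namely
\begin{align}
\mbox{(L1'):}~~~& \overline{\mathcal R} + W_t(\omega_N,\omega_1,\ldots,\omega_{N-1}) \geq W_t(\omega_1,\ldots,\omega_N), \nn \\
\mbox{(L2'):}~~~& W_t(\ldots,\omega_{j-1},x,y,\omega_{j+2},\ldots) \geq W_t(\ldots,\omega_{j-1},y,x,\omega_{j+2},\ldots) \textrm{ when } x\geq y, \nn
\end{align}
for all $t$ and all $\omega_i\in[p_{11},p_{01}]$, under the hypothesis $\beta \leq \underline{\mathcal R}/(\underline{\mathcal R}+\overline{\mathcal R})$. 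Once these are in hand, the theorem follows by a backward induction on $t$ exactly as in the proof of Theorem \ref{thm:finite_positive}: any non-myopic action $\pi^\ast\neq\pi^g$ at time $t$ can be transformed into the myopic choice by a sequence of adjacent swaps, each of which is non-decreasing in reward by L2', so $\pi^g$ is optimal at $t$ as well.

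The main obstacle will be the induction step for L1' (and consequently the tighter bound on $\beta$). In the positively correlated case, swapping the smallest element into the first slot at time $t$ only propagates forward as a single bounded discrepancy, which is why $\beta\leq\underline{\mathcal R}/\overline{\mathcal R}$ suffices. In the negatively correlated case, however, because the ordering of unsensed components reverses at every step, the one-step discrepancy generated by the swap at time $t$ does not simply attenuate by $\beta$: it re-enters the ordering comparison in the opposite direction at time $t+1$, so the inductive bound must dominate the sum of a gain term (lower-bounded by $\underline{\mathcal R}$) \emph{and} a worst-case future mismatch (upper-bounded by $\beta\overline{\mathcal R}$ plus a recursive contribution). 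Carrying the induction through therefore requires $\beta\overline{\mathcal R} + \beta\underline{\mathcal R} \leq \underline{\mathcal R}$, i.e.\ $\beta \leq \underline{\mathcal R}/(\underline{\mathcal R}+\overline{\mathcal R})$, which is precisely the hypothesis of the theorem. The technically delicate part is writing (\ref{eqn:W_alt})-style single-component expansions of both sides of L1' and L2', regrouping the resulting terms so that one side of the inequality isolates an immediate-reward gap dominated below by $\underline{\mathcal R}\cdot(\omega_1-\omega_N)$ while the other side is dominated above by $\beta(\overline{\mathcal R}+\underline{\mathcal R})\cdot(\omega_1-\omega_N)$ after invoking the inductive hypothesis with appropriately swapped arguments; the hypothesis $p_{11}\leq\omega_i\leq p_{01}$ is used to keep all intermediate vectors inside the regime where the inductive L1'/L2' apply.
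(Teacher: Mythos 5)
Your overall architecture is the paper's: the reversed recomposition inside $W_{t+1}(\cdot)$ (bad channels updated to $p_{01}$ in front, unsensed channels in reversed order in the middle, good channels updated to $p_{11}$ at the end), an affine lemma, a swap-based backward induction, and the threshold $\beta\leq \underline{\mathcal R}/(\underline{\mathcal R}+\overline{\mathcal R})$. But two of your steps, as written, would fail. First, the monotonicity analogue of Lemma \ref{lemma:mono} is simply false in this regime, and your claim that the induction is ``identical in structure'' breaks at its key step: for an unsensed channel, $\tau(\omega)=p_{01}-(p_{01}-p_{11})\,\omega$ is \emph{decreasing} in $\omega$ when $p_{11}<p_{01}$, so raising a component raises the immediate reward while lowering the propagated belief, and the two effects cannot be signed jointly. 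The paper itself notes the ``loss of monotonicity'' for negatively correlated channels and accordingly carries over only the affine property (Lemma \ref{lemma2_neg}); monotonicity is never used in the finite-horizon argument, so you should drop it rather than try to prove it.

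Second, and more consequential: your cycling inequality L1$'$ with additive constant $\overline{\mathcal R}$ cannot be closed by induction. Precisely because the discrepancy created by the cycling re-enters the comparison every period (as you yourself observe), the constant must solve the fixed-point relation $c=\overline{\mathcal R}+\beta c$, i.e.\ $c=\gamma:=\overline{\mathcal R}/(1-\beta)$; with $c=\overline{\mathcal R}$ the realization-case induction step would require $\overline{\mathcal R}\geq \overline{\mathcal R}+\beta\,(\text{positive term})$, which is impossible. This is exactly how the paper's Lemma \ref{lem:key2} is stated (L3 and L4 carry $\gamma$, and the proof of its Case 2 uses the identity $\gamma=\overline{\mathcal R}+\beta\gamma$). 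Moreover, because the unsensed block reverses order each step, you need cycling inequalities in \emph{both} directions --- first element to the back (L3) and last element to the front (L4) --- since the induction for each invokes the hypothesis of the other; your single L1$'$ leaves the induction with no applicable hypothesis after one period of reversal. Once $\gamma$ is in place, the swap lemma L5's step reduces to $(x-y)\underline{\mathcal R}\geq (x-y)\beta\gamma$, which is precisely $\beta\leq \underline{\mathcal R}/(\underline{\mathcal R}+\overline{\mathcal R})$: your heuristic $\beta(\overline{\mathcal R}+\underline{\mathcal R})\leq \underline{\mathcal R}$ lands on the correct threshold, but the ``recursive contribution'' you allude to must be crystallized as $\gamma$ inside the lemma statements themselves, otherwise the induction you describe cannot be carried out.
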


%\emph{Proof.} Without pointing out explicitly, we are adopting 
\rev{We will reuse the same set of notations introduced in the case of $p_{11} \geq p_{01}$ in this section.  To prove the above theorem, we will similarly need a number of lemmas. 
We begin with a similar definition on the $T$ $N$-variable functions $W_t(\cdot), t = 1, 2, \cdots, T$, recursively as follows. 
% which are useful towards proving the optimality of myopic sensing strategy.
%\begin{lemma}
%There exist $T$ $N$-variable functions denoted by $W_t(\cdot),t=1,2,...,T$, each of which can be represented recursively in the following form:
\begin{eqnarray}
\rev{W_T(\bar{\omega}) } &=& \rev{\mathbb E [R_{\pi^g}(\bar{\omega})]} \nonumber \\ 
 W_t(\bar{\omega}) %&=& \mathbb E [R_{\pi^g}(\bar{\omega})] + \beta \cdot \mathbb E\{W_{t+1}(\bar{\omega}^{`})|\bar{\omega},\pi^g\} \nonumber \\
&=& \mathbb E [R_{\pi^g}(\bar{\omega})] + \nonumber \\
&& \beta \cdot \sum_{\bar{l} \in \{0,1\}^k}q(\bar{l};\bar{\omega})  
 \cdot W_{t+1}(p_{01}[k-\sum_{i=1}^k l_i],\tau({\omega_N}), \cdots, \tau(\omega_{k+1}), p_{11}[\sum_{i=1}^k l_i]) \label{lemma1_neg}
\end{eqnarray}
%\end{lemma}
\begin{remark}
Compared to the definition given in the previous section, the difference here is in the re-ordering of the beliefs in $W_{t+1}(\cdot)$, i.e., $p_{01}$'s followed by $\tau(\omega_N), \cdots$, followed by $p_{11}$'s.  This is because, as $p_{01}> p_{11}$, this re-ordering sorts the belief vector in descending order.  In doing so we can continue to use the same greedy policy $\pi^g$ which selects the first $k$ channels. 
\end{remark} 
} 

\begin{lemma}
$W_t(\bar{\omega}), t=1, 2, \cdots, T$, is an affine function of each element of $\bar{\omega}$. \label{lemma2_neg}
\end{lemma}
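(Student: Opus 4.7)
The plan is to prove this lemma by straightforward induction on $t$, mirroring the structure of the proof of Lemma \ref{lemma:affine} from the positively correlated case, since the only structural difference in the definition (\ref{lemma1_neg}) relative to (\ref{eqn:W_def}) is the reordering of arguments passed into $W_{t+1}(\cdot)$, and affinity of a multivariate function in each argument is preserved under permutation of its arguments.

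For the induction basis at $t=T$, note $W_T(\bar\omega) = \mathbb E[R_{\pi^g}(\bar\omega)]$, which by Proposition \ref{prop_affine} is affine in each $\omega_j$ with $j \in \pi^g$, and does not depend on $\omega_j$ for $j \notin \pi^g$; either way it is affine in every component.

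For the inductive step, assume $W_{t+1}(\cdot)$ is affine in each of its arguments, and fix a component $\omega_j$. Two cases are handled separately. If $j \in \pi^g$, then inside $W_{t+1}(\cdot)$ the $j$-th coordinate has been replaced by either $p_{11}$ or $p_{01}$ according to the sensing outcome $l_j$, so $W_{t+1}(\cdot)$ itself carries no dependence on $\omega_j$; the only dependence on $\omega_j$ in (\ref{lemma1_neg}) comes from $\mathbb E[R_{\pi^g}(\bar\omega)]$ (affine in $\omega_j$ by Proposition \ref{prop_affine}) and from the factor $q(\bar{l};\bar\omega)=\prod_{i=1}^k \omega_i^{l_i}(1-\omega_i)^{1-l_i}$, which is affine in $\omega_j$ by its product structure. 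Summing affine terms preserves affinity. If instead $j \notin \pi^g$, then $\mathbb E[R_{\pi^g}(\bar\omega)]$ and $q(\bar l;\bar\omega)$ are both independent of $\omega_j$, and $\omega_j$ enters (\ref{lemma1_neg}) only through the slot $\tau(\omega_j)$ inside $W_{t+1}(\cdot)$. Since $\tau(\omega_j) = p_{01} + (p_{11}-p_{01})\omega_j$ is affine in $\omega_j$, and $W_{t+1}(\cdot)$ is by induction affine in each of its arguments, the composition is affine in $\omega_j$. This closes the induction.

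I do not anticipate a substantive obstacle: the argument is essentially a transcription of Lemma \ref{lemma:affine}, and the reversal of the ordering of $\tau(\omega_{k+1}),\ldots,\tau(\omega_N)$ inside $W_{t+1}(\cdot)$ in (\ref{lemma1_neg}) is harmless since affinity is a coordinate-wise property that does not care which slot each coordinate occupies. The only point that deserves a line of care is verifying that $q(\bar l;\bar\omega)$ is affine in each $\omega_i$ with $i\le k$, which follows immediately from the fact that each factor in the product is affine in its own $\omega_i$ and the remaining factors do not involve that variable.
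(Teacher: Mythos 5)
Your proof is correct and follows exactly the route the paper intends: the paper omits the proof of this lemma, stating it is essentially identical to that of Lemma \ref{lemma:affine}, and your induction on $t$ with the two cases $j \in \pi^g$ (affinity of $\mathbb E[R_{\pi^g}(\bar\omega)]$ and of $q(\bar l;\bar\omega)$, with $W_{t+1}(\cdot)$ independent of $\omega_j$) and $j \notin \pi^g$ (affinity through the composition with $\tau(\omega_j)$) is precisely that argument. Your added observation that the reversed ordering of the arguments inside $W_{t+1}(\cdot)$ in (\ref{lemma1_neg}) is immaterial, because affinity is a coordinate-wise property, correctly identifies the only point where the negative-correlation case differs.
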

The proofs of the above lemma is essentially the same as that in the case of $p_{11} \geq p_{01}$ (Lemma \ref{lemma:affine}), and is thus omitted.
%\com{Do we not need the monotonicity/increase property in this case?} 

%Next we prove the following lemma which establishes the optimality of myopic sensing immediately.
\begin{lemma}\label{lem:key2}
For $p_{01} \geq \omega_1 \geq \omega_2 \geq ... \geq \omega_N \geq p_{11}$ \rev{and under the condition $\beta\leq \frac{\underline{\cal R}}{\underline{\cal R}+\overline{\cal R}}$ and $x \geq y$}, we have the following inequalities for all $t = 1,2,...,T$:
\begin{align}
\mbox{(L3):}~~~& \gamma + W_t(\omega_2,\omega_3,...,\omega_{N},\omega_{1}) \geq W_t(\omega_1,...,\omega_N)\\
\mbox{(L4):}~~~&\gamma + W_t(\omega_N,\omega_1,...,\omega_{N-1}) \geq W_t(\omega_1,...,\omega_N) \\
\mbox{(L5):}~~~& W_t(\omega_1, \cdots, \omega_{j-1}, x, y, \omega_{j+2}, \cdots, \omega_N) \geq W_t(\omega_1, \cdots, \omega_{j-1}, y, x, \omega_{j+2}, \cdots, \omega_N) ~, 
\end{align}
where $\gamma = \frac{\overline{\mathcal R}}{1-\beta}$. %\footnote{As $\bar\beta_n < 1$ and we are considering $\beta \leq \bar\beta_n$, $\gamma$ is well defined. }
%
%Furthermore, for all $1 \leq j \leq N-1$ and $p_{01} \geq x \geq y \geq p_{11}$, we have
%\begin{align}
%\mbox{(L5):}~~~& W_t(\omega_1, \cdots, \omega_{j-1}, x, y, \omega_{j+2}, \cdots, \omega_N) \geq W_t(\omega_1, \cdots, \omega_{j-1}, y, x, \omega_{j+2}, \cdots, \omega_N) ~. 
%\end{align}
\end{lemma}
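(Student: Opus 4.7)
My plan is to prove the three inequalities simultaneously by backward induction on $t$, mirroring the structure of Lemma~\ref{lem:key}. Because $\tau$ is now order-reversing (under $p_{11}<p_{01}$, $\tau(\omega_1)\leq\tau(\omega_2)$ when $\omega_1\geq\omega_2$), a cyclic shift in one direction becomes a cyclic shift in the opposite direction after a single update; this is exactly why two cyclic-shift inequalities (L3) and (L4) are required here, as opposed to the single (L1) in the positive case. The constant $\gamma=\overline{\cal R}/(1-\beta)$ is chosen so that $\overline{\cal R}+\beta\gamma=\gamma$, which lets a per-step cyclic-shift error of at most $\overline{\cal R}$ be absorbed geometrically across future time steps.

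For the induction base at $t=T$, $W_T(\bar\omega)=\mathbb{E}[R_{\pi^g}(\bar\omega)]$ depends only on the multiset $\{\omega_1,\ldots,\omega_k\}$ of sensed channels. Then (L5) is immediate from Proposition~1 when $j\neq k$ and from the Increasing property (Proposition~\ref{eqn:prop_increase}) when $j=k$; and (L3), (L4) follow since the sensed set changes by exactly one component lying in $[p_{11},p_{01}]$, bounding the reward gap above by $\overline{\cal R}\leq\gamma$.

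For the induction step I would first establish (L5) at $t$ by case analysis on $j$. The subcases $j<k$ and $j>k$ are settled by symmetry of the one-step reward and by the order-reversing property of $\tau$ together with (L5) at $t+1$, respectively. The delicate subcase is $j=k$: here swapping $x,y$ with $x\geq y$ changes the sensed set, yielding an immediate gain of at least $\underline{\cal R}$ while the future term may lose up to $\beta\gamma$. The hypothesis $\beta\leq\underline{\cal R}/(\underline{\cal R}+\overline{\cal R})$ is precisely what guarantees $\beta\gamma\leq\underline{\cal R}$, so the immediate gain dominates. I would make this rigorous by writing both sides using the conditional expansion (\ref{eqn:W_alt}) around positions $k$ and $k+1$, invoking Lemma~\ref{lemma2_neg} to linearize in $x$ and $y$, and pairing the resulting $W_{t+1}$-differences so each is covered by (L3), (L4), or (L5) at $t+1$.

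For (L3) and (L4) at $t$, I would expand both sides via (\ref{lemma1_neg}), condition on the outcomes of the $k$ sensing actions at the shared positions, and split each difference into a one-step-reward gap bounded by $\overline{\cal R}$ and a $\beta$-discounted difference of $W_{t+1}$ values at two reordered next-state vectors that, by the order-reversing property of $\tau$, differ by a cyclic shift in the opposite direction plus a bounded number of adjacent swaps. The induction then reduces (L3) at $t$ to (L4) at $t+1$ plus a few applications of (L5), and symmetrically for (L4); combined with the identity $\overline{\cal R}+\beta\gamma=\gamma$, this closes the induction. I expect the main obstacle to be the $j=k$ case of (L5), both because this is where the $\beta$ hypothesis becomes binding and because one must carefully pair realized sensing outcomes on the two sides of the inequality so that the order-reversal of $\tau$ in the unsensed block delivers a next-state pairing to which the inductive hypotheses apply cleanly.
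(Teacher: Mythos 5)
Your proposal is correct and follows essentially the same route as the paper's own proof: a joint backward induction on $t$ in which (L3)/(L4) are verified by a sample-path case analysis over the realizations of the two channels where the sensing sets differ (exploiting the identity $\gamma=\overline{\mathcal R}+\beta\gamma$ and the $t{+}1$ hypotheses of (L4) and (L5)), while (L5) is handled as in L2 of Lemma~\ref{lem:key}, with the binding subcase $j=k$ closed because the immediate gain of at least $(x-y)\underline{\mathcal R}$ dominates the future loss of at most $\beta(x-y)\gamma$ --- precisely where $\beta\leq \underline{\mathcal R}/(\underline{\mathcal R}+\overline{\mathcal R})$, i.e.\ $\beta\gamma\leq\underline{\mathcal R}$, enters. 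Your observation that the order-reversing $\tau$ is what forces two cyclic-shift inequalities in place of the single (L1) of the positive case is also the paper's implicit rationale, so there is no gap to report.
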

%\begin{remark}
%Notice the difference between L3 and L4. In L3, the difference between the information state vector at both sides is $\omega_1$ and $\omega_{k+1}$ while the difference in L4 is $\omega_N$ and $\omega_k$. 
%\end{remark}

\rev{
\emph{Proof of Theorem \ref{thm:finite_negative}:} The proof follows essentially the same inductive argument used in the proof of Theorem \ref{thm:finite_positive} through repeated use of L5 in the preceding lemma, and is thus omitted. %optimality of myopic sensing follows easily as the same in the section with positively correlated channels. 
\qed} 

\subsection{Special cases}

{\bf Case 1:} $(k, m) = (k, k)$.  As shown earlier, in this case we have 
$\overline{\mathcal R} =  \underline{\mathcal R} = 1$, and thus the sufficient condition for the optimality of the myopic policy becomes
$\beta \leq \frac{\underline{\mathcal R}}{\underline{\mathcal R}+\overline{\mathcal R}} = \frac{1}{2}$.  Note that the same condition $\beta \leq \frac{1}{2}$ was previously proven for the special case $k=1$ in \cite{Ahmad09optimalityof}. 
%\com{why don't we use the case $(k, k)$ -- $(1, 1)$ would be a further special case: is the $1/2$ condition sufficient for all $(k, k)$?} 

{\bf Case 2:} $k = N-1,  m\leq k$. It can be shown in this case that the myopic policy is optimal without any condition on $\beta$ \rev{and $\bar{\omega}$} following the same argument used in Section \ref{sec:finite_positive_special}. %The details are thus omitted. 
% in the section with $p_{11} \geq p_{01}$. The details are omitted. 

%\input{finite_infinite}
%\section{Optimality of Myopic Sensing : Infinite Time Horizon}
%optimality of infinite time horizon for positively correlated channels
\section{Infinite Horizon: $p_{11} \geq p_{01}$}\label{sec:infinite_positive}

In this and the next sections we will consider the infinite horizon problems \textbf{(P2)} and  \textbf{(P3)}. %Again we start with the case $p_{11} > p_{01}$.
As shown in \cite{Ahmad09optimalityof}, the optimality of a policy under {\bf (P1)} is readily extended to its optimality under \textbf{(P2)}; it is more complicated for \textbf{(P3)}: a policy is optimal for \textbf{(P3)} if it is optimal for \textbf{(P2)} for any $0 < \beta < 1$\footnote{In \cite{Ahmad09optimalityof} this argument is made specifically for the case $(k, m) = (1, 1)$, but it is more generally applicable with a  simple extension.}. % To make our paper concise we will omit the details while the complete proof can be found in our technical report. 
As a result, while the optimality conditions on the myopic policy we have obtained so far applies to \textbf{(P2)}, the same cannot be said for \textbf{(P3)} since these conditions restrict the values the discount factor $\beta$ can take. 
%Therefore we can see for \textbf{(P2)} with $\beta \leq \bar\beta$, the optimality of myopic sensing can still follow.  However as $\bar\beta_p = 1$ is not guaranteed in general, we lost the tie between \textbf{(P2)} and \textbf{(P3)} (as well as \textbf{(P1)} and \textbf{(P3)}). In this part, we try to find the sufficient condition guaranteeing the optimality of myopic sensing for \textbf{(P2)} for any $0<\beta < 1$ and then further extend these results to \textbf{(P3)}. Therefore in this section we focus on finding new methodologies which will help us avoid the condition on $\beta$ and thus establish the optimality for \textbf{(P2)} for any $0 < \beta < 1$ and from where the optimality of \textbf{(P3)} follows. 
\rev{For this reason, in these two sections we seek alternative sufficient conditions that do not require the restriction on $\beta$, which will then allow us to first establish the optimality of the myopic policy for \textbf{(P2)} and then extend it to \textbf{(P3)}. } 

%%%%%%%%%%%%%%%%%
\subsection{One-step deviation}
\rev{For the rest of this section we will use the notation $W^{\infty}(\bar{\omega})$ defined similarly as in (\ref{eqn:W_def}) for the case of $p_{11} \geq p_{01}$ but with an infinite horizon, i.e., with the recursion in (\ref{eqn:W_def}) continuing indefinitely without the end at time $T$.} To be specific we have the following recursive equations.
\begin{eqnarray}
W^{\infty}_t(\bar{\omega})  &=& \mathbb E [R_{\pi^g}(\bar{\omega})] 
 + \beta \cdot \sum \textbf{P}^{g}(\bar{\omega}^{'}|\bar{\omega})\cdot W^{\infty}_{t+1}(\bar{\omega}^{'}) ~,  
\end{eqnarray}
But notice here the real value of the value functions does not depend on time $t$ due to the infinite horizon. We keep the time index mainly for clarity of later analysis. 
%To enable the analysis we first present the one-step deviation principle. 
%\com{Is there a commonly accepted term: one-shot, one-step, principle, result, etc.?} 
\begin{defn}[One-step deviation]
\rev{
Consider a policy $\pi^d: \omega\rightarrow \Omega_k$, $\pi^d\neq \pi^g$. % that is different from the myopic policy $\pi^g$.  
Its one-step deviation from the myopic policy under information state $\omega$ is defined as the immediate reward under $\pi^d$ plus the discounted future reward by following $\pi^g$ in future time steps.  %\com{I don't think by $\pi_d$ we mean a policy; it's probably more appropriate to call it an action, or alternatively earlier we could clarify that a policy is a subset of channels so it applies to a single time step, and then for the whole horizon we have a sequence of policies or selections...} 
%Here by one step profitable deviation we mean for any realization of the information state, there is time $t$ with information state $\bar{\omega}$ such that $W^d_t(\bar{\omega}) > W_t(\bar{\omega})$ and 
Formally, the value function of $\pi^d$, denoted by $V^{d,\infty}_t(\omega)$, is given by 
%\com{I think the way to fix the notation is: let's use $\Omega$ to denote the set of information states, and ${\cal N}$ the set of channels.  This way by $\pi^d :\Omega\rightarrow{\cal N}_k$ we express the notion of a stationary policy and don't have to specify ``at each time step'', i.e., this would be a policy that we uses at any given time...} 
\begin{eqnarray}
V^{d,\infty}_t(\bar{\omega})  &=& \mathbb E [R_{\pi^d}(\bar{\omega})] 
 + \beta \cdot \sum \textbf{P}^{d}(\bar{\omega}^{'}|\bar{\omega})\cdot W^{\infty}_t(\bar{\omega}^{'}) ~,  
\end{eqnarray}
%can also be rewritten as 
%\begin{eqnarray}
%V^{d,\infty}_t(\bar{\omega})  &=& \mathbb E [R_{\pi^d}(\bar{\omega})] 
% + \beta \cdot \sum \textbf{Pr}^{d}(\bar{\omega}^{`}|\bar{\omega})\cdot W^{\infty}_{t+1}(\bar{\omega}^{`}) ~. 
%\end{eqnarray}
%$\pi^d$ is a policy different from the myopic sensing policy $\pi^g$.
where $\omega^{'}$ is the descending-ordered information state vector of the system at the next time step  under policy $\pi^d$.  If $V^{d,\infty}_t(\omega) > W^{\infty}_t(\omega)$ for some $\omega$ and $t$, then we say that $\pi^d$ is a profitable one-step deviation.  If such a $\pi^d$ cannot be found, then we say there exists no profitable one-step deviation. 
% for this to be true, then we say there exists a profitable one-step deviation under policy $\pi^d$. 
} 
\end{defn}
%The interpretation of $W^d_t$ is a deviation happens at time $t$ from sensing policy $\pi^g$ to $\pi^d$ (strictly different from $\pi^g$) and then the sensing strategy returns back to myopic sensing from time $t+1$.   

\begin{lemma} [One-step deviation principle]
\label{lem:one-step} 
The myopic
policy $\pi^g$ is optimal for \textbf{(P2)} for any $0 < \beta <1$ if and only if there exists no profitable one-step deviation.
\end{lemma}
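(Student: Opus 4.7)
The plan is to prove the two directions separately. The forward direction is essentially immediate: if $\pi^g$ is optimal for \textbf{(P2)} under discount factor $\beta$, then in particular $W^{\infty}_t(\bar{\omega}) \geq V^{d,\infty}_t(\bar{\omega})$ for every one-step-deviation policy $\pi^d$ and every information state $\bar{\omega}$, since any such $\pi^d$ is itself admissible. Hence no profitable one-step deviation can exist.

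The substance lies in the converse. Assume that $V^{d,\infty}_t(\bar{\omega}) \leq W^{\infty}_t(\bar{\omega})$ for every one-step deviation $\pi^d$, every $\bar{\omega}$, and every $t$. Fix an arbitrary admissible policy $\pi \in \boldsymbol{\Pi}$ and an initial belief $\bar{\omega}$, and define the $n$-step truncation $\pi^{(n)}$ to be the policy that follows $\pi$ for times $1, 2, \ldots, n$ and then switches to $\pi^g$ for $t \geq n+1$, so that $\pi^{(0)} = \pi^g$. I will show by induction on $n$ that $V^{\pi^{(n)},\infty}_1(\bar{\omega}) \leq W^{\infty}_1(\bar{\omega})$ for every $n \geq 0$. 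The base case is trivial. For the induction step, observe that $\pi^{(n)}$ and $\pi^{(n-1)}$ coincide at times $1, \ldots, n-1$ and at all times $t \geq n+1$; they differ only at time $n$, where $\pi^{(n)}$ takes $\pi$'s action while $\pi^{(n-1)}$ takes $\pi^g$'s. Conditioning on the random information state $\bar{\omega}(n)$ reached at time $n$ (which is identically distributed under both policies because they agree on steps $1,\ldots,n-1$), the continuation of $\pi^{(n)}$ from time $n$ onward is exactly a one-step deviation from $\pi^g$ starting from $\bar{\omega}(n)$. Applying the no-profitable-deviation hypothesis pointwise at each realization of $\bar{\omega}(n)$ and then integrating over its law yields $V^{\pi^{(n)},\infty}_1(\bar{\omega}) \leq V^{\pi^{(n-1)},\infty}_1(\bar{\omega})$, which closes the induction.

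It remains to pass to the limit $n \to \infty$. Because the per-step reward $R^{k,m}_{\pi(t)}$ is bounded above by $m$, the contribution of times $t \geq n+1$ to either $V^{\pi,\infty}_1(\bar{\omega})$ or $V^{\pi^{(n)},\infty}_1(\bar{\omega})$ is bounded in absolute value by $\beta^n m / (1-\beta)$. Since $\pi$ and $\pi^{(n)}$ coincide on the first $n$ steps, we obtain $|V^{\pi,\infty}_1(\bar{\omega}) - V^{\pi^{(n)},\infty}_1(\bar{\omega})| \leq 2\beta^n m / (1-\beta) \to 0$ as $n \to \infty$. Combining with the inductive bound, $V^{\pi,\infty}_1(\bar{\omega}) \leq W^{\infty}_1(\bar{\omega})$, and since $\pi$ and $\bar{\omega}$ were arbitrary, $\pi^g$ is optimal for \textbf{(P2)}.

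The main subtlety is in the induction step: the no-profitable-deviation hypothesis is a pointwise statement in $\bar{\omega}$, but in the induction argument it must be applied at the random information state $\bar{\omega}(n)$ generated by executing $\pi$ for $n-1$ steps. This is handled by conditioning on $\bar{\omega}(n)$ and using the fact that the belief vector is a sufficient statistic, so that both $W^{\infty}$ and $V^{d,\infty}$ depend on the history only through the current belief and the hypothesis may be integrated against the conditional distribution of $\bar{\omega}(n)$. Once this is granted, the remainder of the argument is a standard appeal to discounting with $\beta<1$ and boundedness of the per-step reward.
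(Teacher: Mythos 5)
Your proof is correct, and both directions are handled soundly: the forward direction is the same trivial observation as in the paper, and your converse is a valid direct argument. The core engine is the same as the paper's---replace the action at the \emph{latest} time where the candidate policy deviates from $\pi^g$, justify the replacement by applying the no-profitable-deviation hypothesis pointwise at the random belief reached at that time (integrating against its law, which is legitimate since the belief is a sufficient statistic), and control the tail via the geometric bound $\beta^n m/(1-\beta)$. Where you genuinely differ is in the logical organization. The paper argues by contradiction: it posits a policy $\pi^*$ with advantage $\epsilon = V^{*,\infty}_1 - W^\infty_1 > 0$, chooses a finite horizon $t^*$ (depending on $\epsilon$) beyond which the tail reward is below $\epsilon/2$, truncates $\pi^*$ at $t^*$, and then sweeps backward from $t^*$, replacing $\pi^*$'s action with $\pi^g$'s one step at a time until a profitable one-step deviation must surface. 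You instead prove domination directly: for an \emph{arbitrary} admissible $\pi$ you show by forward induction on the truncation length $n$ that every $\pi^{(n)}$ is dominated by $\pi^g$, then let $n \to \infty$ using the uniform tail bound. Your version buys two things: it avoids the $\epsilon$-dependent choice of $t^*$ and the contradiction scaffolding (your tail estimate is uniform in the policy), and it makes explicit the measure-theoretic step of conditioning on $\bar{\omega}(n)$, which the paper's backward sweep leaves implicit when it compares $V^{+,\infty}_{t^*}(\bar{\omega})$ with $W^{\infty}_{t^*}(\bar{\omega})$ ``for all $\bar{\omega}$.'' The paper's formulation, in turn, more transparently exhibits \emph{where} a profitable deviation must occur when optimality fails, which is the contrapositive content of the lemma. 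Both proofs are valid; yours is the more standard rendering of the one-shot deviation principle for discounted problems.
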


\emph{Proof.}  (Only if) That there is no one-step profitable deviation is a necessary condition for the optimality of $\pi^g$ is obvious because otherwise we have found a policy that returns higher reward than $\pi^g$ under some state $\omega$, which contradicts the optimality of $\pi^g$. 
%If the myopic sensing returns the optimal reward, there could be no any other policy which gives a strictly higher reward, and furthermore, there could be no one step profitable deviation.

{
(If) We next show that if there exists a policy $\pi^* :\Omega\rightarrow{\cal N}_k$ that has strictly higher discounted reward than $\pi^g$ over an infinite horizon, then there exists a one-step profitable deviation policy constructed from $\pi^*$. Denote the total reward under $\pi^{*}$ starting at time $t$ as $V^{*, \infty}_t$, and denote by $\epsilon = V^{*,\infty}_1 - W^{\infty}_1$.  %\footnote{We are reusing the notations from the finite time horizon section unless specified independently. } 
By assumption we have $\epsilon  > 0$. Define time $t^*$ as
\begin{align}
t^* := \min \{t: \beta^t\cdot \frac{m}{1-\beta} \leq \frac{\epsilon}{2} \} ~, 
\end{align}
i.e., this is the first time that the total future discounted reward of an {\em ideal} policy (that collects the highest reward $m$ in each step) falls below $\epsilon/2$. The existence of such a $t^*$ is guaranteed by the finiteness of $m$ and the fact that $\beta < 1$.
By the above definition, after time $t^*$ the reward under either $\pi^*$ or $\pi^g$ cannot exceed $\epsilon/2$, thus the difference in the two rewards after time $t^*$ is no more than $\epsilon/2$.  Since the total difference between the two rewards (starting at time $t=1$) is $\epsilon$, the difference between $\pi^*$ and $\pi^g$ up to and including time $t^*$ must be at least  $\epsilon/2$.  We thus construct the following policy, $\pi^{+}$, which follows $\pi^{*}$ up to and including time $t^*$, and then switch to $\pi^g$ thereafter, with a total discounted reward denoted by $V^{+,\infty}_1(\cdot)$.   
Following the above discussion, we must have 
%From the definition of $t^*$ we must 
%\begin{align}
$V_1^{+,\infty}(\bar{\omega}) > W^{\infty}_1(\bar{\omega})$ for any initial condition $\omega$. 
%\end{align}
%As there is no deviation from myopic sensing after $T$, we focus on the time steps before $T$.
%\com{I have two issues here: (1) the notation $W_1^{+}$ is probably not appropriate: by earlier definition this would be the one-step deviation, i.e., using $\pi^{+}$ at time $t=1$ and then greedy thereafter, but clearly what we want here is the value function of following $\pi^{+}$ at each time step?  For this we can simply state that $\pi^{+}$ must have strictly higher reward than $\pi^g$ without using the above equation.  The same problem with the notation $W_1^{*}$....  
%(2) There is something not quite right about the first statement of ``(If)'': it is not that there exists a one-step profitable deviation under $\pi^{*}$, but rather that there must exist ``some'' one-step profitable deviation under ``some'' policy, which in this case is the policy $\pi^{+}$...} 
} 

\rev{
Consider now the policy $\pi^{+}$.  At time $t^*$ we compare $V_{t^{*}}^{+,\infty}(\bar{\omega})$ with $W_{t^{*}}^{\infty}(\bar{\omega})$, $\forall \bar\omega$.  Note that in this case $V_{t^{*}}^{+,\infty}(\bar{\omega})=V_{t^{*}}^{*, \infty}(\bar{\omega})$ since under $\pi^{+}$ at time $t^*$ policy $\pi^*$ is used followed by $\pi^g$. If $V_{t^{*}}^{+,\infty}(\bar{\omega}) > W_{t^{*}}^{\infty}(\bar{\omega})$ for some $\bar\omega$, then we have found a profitable one-step deviation.  If $V_{t^{*}}^{+,\infty}(\omega) \leq W_{t^{*}}^{\infty}(\omega)$, $\forall\omega$, then we modify policy $\pi^{+}$ by replacing $\pi^{*}$ with $\pi^g$ at time $t^{*}$.  Again denote this modified policy by $\pi^{+}$; it follows that we continue to have $V_1^{+,\infty}(\bar{\omega}) > W_1^{\infty}(\bar{\omega})$ for any initial condition $\bar{\omega}$, since the modified $\pi^{+}$ has even higher total discounted rewards than the original $\pi^{+}$.  

We next examine at $t^{*}-1$, how $V_{t^{*}-1}^{*, \infty}(\bar{\omega})$ compares with $W^{\infty}_{t^{*}-1}(\bar{\omega})$ and repeat the above process. Due to the finiteness of $t^{*}$ we are guaranteed to find a profitable one-step deviation, for otherwise it contradicts the assumption that $\pi^{*}$ is a superior policy to $\pi^g$.  \qed
%Otherwise, which means there is no profitable deviation from myopic sensing at time $T$. We make the following changes:
%\begin{align}
%\pi^+(t^*) := \pi^g
%\end{align}
%i.e., we change the sensing strategy from $\pi^+(t^*)$ to the myopic strategy $\pi^g$. \rev{And the overall strict larger than property of value function preserves, i.e., we have $V^{+,\infty}_1(\bar{\omega}) > W^{\infty}_1(\bar{\omega})$ after the change}. Next we proceed to time $t^*-1$ and repeat the above arguments. Due to finiteness of $t^*$, we can find a profitable one shot deviation.  
%\com{I'm not sure I buy this argument: all you have shown is that you can find a deviation, but you haven't shown that a single deviation is necessarily profitable, i.e., it's true that there must exists a deviation in $\pi^{+}$ and you will find it in finite time. But we cannot say that the immediate reward and the future reward from the time of this deviation is strictly higher than $\pi^g$, can we? } 
} 
%\com{when we try to determine at time $t^{*}$ whether we have found a one-step deviation, how do we decide which information state $\omega$ to use? Even if we start from a fixed $\omega$ at time $t=1$, at time $t^{*}$ we no longer have a unique belief vector.   I think the argument is probably that as long as we can find one such $\omega$ that $\pi^*$ does better then we are done.  Otherwise it must mean that $\pi^*$ does no better than $\pi^g$ under ALL beliefs, so we can replace it with $\pi^g$ for all belief and continue the process. }  
%\com{reply comment: it is for any beliefs which is stated in the one step deviation principle.} 

\begin{remark}
The above lemma is not conditioned on the values of $p_{11}, p_{01}$, and is thus reused in the next section in the case $p_{11} < p_{01}$.
\end{remark}

%We turn to the optimality analysis and we now prove our claim in the case with $p_{11} \geq p_{01}$.

%%%%%%%%%%%%%%%%%%%%%
\subsection{Optimality of myopic sensing}

We begin by introducing a bound on the value function, which is then used in proving the optimality condition. Denote $\delta := p_{11} - p_{01}$ and notice under this section we have $\delta \geq 0$; and we will use $W^\infty(\cdot)$ to denote $W^\infty_t(\cdot), t=1,2,...$ for simplicity.

\begin{lemma}[Boundedness]  
\label{lem:boundedness}
Consider the finite horizon problem (P1) with horizon $T$.  For $1 \leq t \leq T$, $x \geq y$, and $\Delta_t = \overline{\mathcal R}
\cdot \sum_{i = 0}^{T-t} (\beta\cdot \delta)^i$, we have 
\begin{eqnarray}
0 &\leq& W_t(\omega_1,..., \omega_{j-1}, x, \omega_{j+1}, ...,\omega_N)\nonumber \\
&-&W_t(\omega_1,..., \omega_{j-1}, y, \omega_{j+1}, ...,\omega_N) \leq (x-y)\cdot
\Delta_t ~. 
\end{eqnarray}
%and $\Delta_t = \overline{\mathcal R}
%\cdot \sum_{i = 1}^{T-t} \beta^i \cdot (p_{11}-p_{01})^i$. 
\end{lemma}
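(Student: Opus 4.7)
I would prove the lemma by backward induction on $t$, exploiting the recursion $\Delta_t = \overline{\mathcal R} + \beta \delta \cdot \Delta_{t+1}$, which follows immediately from the geometric-sum definition of $\Delta_t$. The lower bound (that $W_t$ is increasing in each component) is precisely Lemma \ref{lemma:mono}, so the substance lies in the upper bound. The induction base at $t=T$ is routine: $W_T = \mathbb E[R_{\pi^g}]$ is either independent of $\omega_j$ (if $j \notin \pi^g$) or, by the Affine property (Proposition \ref{prop_affine}) and the definition of $\overline{\mathcal R}$, differs by at most $(x-y)\overline{\mathcal R} = (x-y)\Delta_T$.

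For the induction step I would split on whether $j$ is sensed by $\pi^g$. If $j \notin \pi^g$, neither the one-step reward nor the likelihoods $q(\bar l; \bar\omega)$ depend on $\omega_j$, and $\omega_j$ enters $W_{t+1}$ only through $\tau(\omega_j)$; applying the induction hypothesis to this single coordinate yields a difference bounded by $\beta(\tau(x)-\tau(y))\Delta_{t+1} = \beta\delta(x-y)\Delta_{t+1} \leq (x-y)\Delta_t$. If $j \in \pi^g$, I would invoke the alternative form (\ref{eqn:W_alt}) that conditions on the sensing outcome of channel $j$: the difference $W_t(x,\omega_{-j}) - W_t(y,\omega_{-j})$ then decomposes into an immediate-reward term bounded by $(x-y)\overline{\mathcal R}$ (again by the Affine property and the definition of $\overline{\mathcal R}$) plus a future term of the form $\beta(x-y)\sum_{\bar l_{-j}} q(\bar l_{-j};\omega_{-j})\bigl[W_{t+1}(\mathbf s^{(1)}) - W_{t+1}(\mathbf s^{(0)})\bigr]$, where $\mathbf s^{(1)}$ and $\mathbf s^{(0)}$ are the descending-ordered next-state vectors corresponding to $j$ sensed good versus bad, respectively.

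The main obstacle is that $\mathbf s^{(1)}$ and $\mathbf s^{(0)}$ do not differ in just one coordinate: $\mathbf s^{(1)}$ has one extra $p_{11}$ at the front of the block, the $\tau(\omega_{k+i})$ entries each shift right by one position, and one $p_{01}$ drops off the back. A direct inspection shows the nonzero componentwise differences are $\delta(1-\omega_{k+1})$, $\delta(\omega_{k+i}-\omega_{k+i+1})$ for $1\leq i\leq N-k-1$, and $\delta\omega_N$, and they telescope to $\delta$. Since all intermediate states obtained by flipping these coordinates one at a time remain in $[p_{01},p_{11}]^N$, I can sequentially apply the induction hypothesis---a single-coordinate change by $d_i \geq 0$ contributes at most $d_i\Delta_{t+1}$---to conclude $W_{t+1}(\mathbf s^{(1)}) - W_{t+1}(\mathbf s^{(0)}) \leq \delta\Delta_{t+1}$. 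Combining the immediate-reward and future-reward bounds and invoking the recursion $\Delta_t = \overline{\mathcal R} + \beta\delta\Delta_{t+1}$ gives the claimed $(x-y)\Delta_t$ bound and closes the induction.
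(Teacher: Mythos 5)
Your proposal is correct and follows essentially the same route as the paper's own proof: backward induction on $t$ via the recursion $\Delta_t = \overline{\mathcal R} + \beta\delta\Delta_{t+1}$, the lower bound deferred to monotonicity, and a case split on whether channel $j$ is sensed, with the sensed case handled through the conditional form (\ref{eqn:W_alt}). Your explicit telescoping of the componentwise differences between the two next-state vectors (summing to $\delta$) is a detail the paper asserts without elaboration in this proof (it spells out the analogous chain only in the proof of Lemma \ref{lem:bound2}), so your write-up is if anything slightly more complete.
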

%\emph{Proof.}  Proof can be found in Appendix. \qed

\begin{remark}
%As we have the bounding difference of value functions at any $t$ for any finite time horizon $T$, we can easily get the result for the infinite case. 
A direct consequence of the above result is the following extension to infinite horizon. 
%To be specific, we are interested in the difference starting from the very first stage, i.e., take $t=1$ and make $T$ go to infinity we have
\begin{eqnarray}
&& W^\infty(\omega_1,...,x,...,\omega_N) - W^\infty(\omega_1,...,y,...\omega_N) \nonumber \\
&=& \lim_{T \rightarrow \infty} \{W_1(\omega_1,...,x,...,\omega_N)-W_1(\omega_1,...,y,...,\omega_N)\} \nonumber \\
&\leq&  \lim_{T \rightarrow \infty}\overline{\mathcal R}
\cdot \sum_{i = 0}^{T-1} (\beta\cdot \delta)^i %\nonumber \\
=  \frac{(x-y)\cdot \overline{\mathcal R}}{1-\beta\cdot \delta} = (x-y) \Delta_\infty ~. 
\label{eqn:infinite_positive_delta}
\end{eqnarray}
\end{remark}

\begin{lemma}
When $\delta$ satisfies the following condition 
\begin{align}
\frac{\delta}{1-\delta} <   \underline{\mathcal R}/\overline{\mathcal R}, \label{eqn:c1}
\end{align}
there is no profitable one-step deviation for \textbf{(P2)} for any $0 < \beta < 1$. \label{lem:inf1}
\end{lemma}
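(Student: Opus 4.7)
The plan is to apply the one-step deviation principle of Lemma \ref{lem:one-step} and show that condition (\ref{eqn:c1}) forces the immediate reward lost by any single-step deviation to strictly dominate any future reward gained, uniformly in $\beta\in(0,1)$. Because every alternative policy $\pi^d\neq\pi^g$ can be reached from $\pi^g$ through a sequence of single swaps between a channel $i\in\pi^g$ and a channel $j\notin\pi^g$ (with $\omega_i\geq\omega_j$ under the descending ordering), it suffices to establish $V^{d,\infty}(\bar\omega)\leq W^\infty(\bar\omega)$ for such a one-swap deviation $\pi^d=(\pi^g\setminus\{i\})\cup\{j\}$ and then chain the swaps.

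For the immediate-reward gap, I would invoke the affine property of Proposition \ref{prop_affine} to write
\[
\mathbb E [R_{\pi^g}(\bar\omega)]-\mathbb E [R_{\pi^d}(\bar\omega)] = (\omega_i-\omega_j)\cdot\{\mathbb E [R_{\pi^g}(1,\omega_{-i})]-\mathbb E [R_{\pi^g}(0,\omega_{-i})]\}\geq (\omega_i-\omega_j)\underline{\mathcal R},
\]
using the definition of $\underline{\mathcal R}$ and the fact (guaranteed after one step by the update rule) that every entry of $\omega_{-i}$ lies in $[p_{01},p_{11}]$.

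For the future-reward gap $\beta\{\mathbb E_{\pi^d}[W^\infty(\bar\omega')]-\mathbb E_{\pi^g}[W^\infty(\bar\omega')]\}$, I would use the coordinatewise affinity of $W^\infty$ (inherited from Lemma \ref{lemma:affine} as $T\to\infty$) together with the Lipschitz-type bound $\Delta_\infty=\overline{\mathcal R}/(1-\beta\delta)$ from Lemma \ref{lem:boundedness} and (\ref{eqn:infinite_positive_delta}). The key observation is that under $\pi^g$ channel $i$'s next belief is realized as $p_{11}$ or $p_{01}$ (mean $\tau(\omega_i)$) while $j$'s remains $\tau(\omega_j)$, and under $\pi^d$ the two roles are exchanged. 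Averaging over the sensing outcomes and using affinity in each coordinate should reduce the comparison to $W^\infty$ evaluated at two deterministic next-state vectors differing only in whether $\tau(\omega_i)$ or $\tau(\omega_j)$ occupies the sensed slot versus the unsensed slot. Applying Lemma \ref{lem:boundedness} to those two scalars, which differ by $|\tau(\omega_i)-\tau(\omega_j)|=\delta(\omega_i-\omega_j)$, would then give
\[
\beta\{\mathbb E_{\pi^d}[W^\infty(\bar\omega')]-\mathbb E_{\pi^g}[W^\infty(\bar\omega')]\}\leq \beta\delta(\omega_i-\omega_j)\Delta_\infty=\frac{\beta\delta(\omega_i-\omega_j)\overline{\mathcal R}}{1-\beta\delta}.
\]

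Combining these bounds, $V^{d,\infty}(\bar\omega)\leq W^\infty(\bar\omega)$ holds whenever $\underline{\mathcal R}(1-\beta\delta)\geq \beta\delta\overline{\mathcal R}$, equivalently $\beta\delta/(1-\beta\delta)\leq \underline{\mathcal R}/\overline{\mathcal R}$; since $\beta\delta/(1-\beta\delta)<\delta/(1-\delta)$ for every $\beta\in(0,1)$, hypothesis (\ref{eqn:c1}) is sufficient uniformly in $\beta$, and Lemma \ref{lem:one-step} completes the argument. I expect the main obstacle to be the reduction sketched in the third paragraph: the sorted next-state vectors under $\pi^g$ and $\pi^d$ carry different counts of $p_{11}$ and $p_{01}$ at fixed positions, so rigorously showing that the expectation really does collapse via multilinearity to a clean two-scalar swap whose effect is controlled by $\Delta_\infty$ will require careful bookkeeping of how the descending-sort convention built into $W^\infty$ interacts with the affine factorization of each channel's sensing outcome.
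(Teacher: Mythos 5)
There is a genuine gap, and it sits exactly where you flagged ``the main obstacle'': the reduction of the future-reward comparison to a two-scalar swap does not work by the affine/multilinearity collapse you propose. Under $\pi^g$, channel $i$'s realized next belief is $p_{11}$ (placed among the leading entries of the re-sorted vector) with probability $\omega_i$ and $p_{01}$ (placed among the trailing entries) with probability $1-\omega_i$; the two outcomes put the realized value at \emph{different positions} in the argument of $W^\infty$, and moreover change the counts $\sum_i l_i$ of leading $p_{11}$'s and trailing $p_{01}$'s. Since $W^\infty$ is affine only coordinatewise at a fixed position (it is the value of a concrete policy that senses the first $k$ listed entries, so position is meaningful), you cannot average the two outcomes into a single vector carrying $\tau(\omega_i)$ in a ``sensed slot.'' The expected future rewards under $\pi^g$ and $\pi^d$ are mixtures over vectors differing at many positions simultaneously, and no single application of Lemma \ref{lem:boundedness} compares them. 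Your quantitative core is nonetheless correct and matches the paper: the immediate loss is at least $(\omega_i-\omega_j)\underline{\mathcal R}$, the future gain is at most $\beta\delta(\omega_i-\omega_j)\overline{\mathcal R}/(1-\beta\delta)$, and $\beta\delta/(1-\beta\delta)<\delta/(1-\delta)$ makes (\ref{eqn:c1}) sufficient uniformly in $\beta\in(0,1)$; the chaining of multi-element deviations through single swaps is also the paper's closing step.

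What is missing is the machinery that legitimizes that computation. The paper writes $V^{d,\infty}(\bar\omega)$ as $W^\infty$ of a permuted belief vector and sorts it back through \emph{adjacent} transpositions, the chain (\ref{c1})--(\ref{c5}), split into three cases. Transpositions with both elements among the first $k$ change nothing (both channels are sensed, so immediate and future rewards coincide). Transpositions with both elements outside the first $k$ are handled by a sample-path coupling: the swapped pair propagates identically under $\tau$ along every realization until it reaches the selection boundary, at which point the comparison reduces to one of the other two cases. This concentrates the entire loss/gain trade-off at a single swap across the $k$-th/$(k{+}1)$-th positions; there, conditioning on the sensing outcomes of the two boundary channels produces pairs of vectors that differ in one coordinate at a time at fixed positions, where Lemma \ref{lem:boundedness} and (\ref{eqn:infinite_positive_delta}) apply legitimately and deliver exactly your arithmetic. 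So your proposal needs to replace the affine-collapse step with this decomposition (or an equivalent exchange argument); as written, the third paragraph asserts a reduction that fails, and the deferred ``bookkeeping'' is in fact the substance of the proof.
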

\com{Suggest we be consistent and use $\delta$ throughout the proofs to replace $p_{11}-p_{01}$ or $p_{01}-p_{11}$ in the next section...} 
\com{Also, not sure where the $p_{11}$ in the numerator comes from... see proof.}

\rev{
%\begin{remark}
The above result appears to suggest that the closer the two values $p_{11}$ and $p_{01}$, the easier it is for the greedy policy to be optimal (though the two quantities $\overline{\cal R}$ and $\underline{\cal R}$ are also functions of $p_{11}$ and $p_{01}$).  The reason is that for a non-greedy policy to outperform the greedy policy, the former must have higher future discounted reward as the latter by definition has higher immediate reward. This, however, is made more difficult when $\delta$ is small, as it has the effect of damping the difference between the two policies.  To illustrate, consider two information states differing in only one element, $x$ vs. $y$. The difference in the immediate reward is a function of $x-y$; however, when propagated to the next time step, the corresponding elements in the information states become $\tau(x)$ and $\tau(y)$, and the difference in the corresponding value functions is now a function of $\tau(x)-\tau(y) = \delta(x-y)$. Thus if $\delta$ is sufficiently small, the difference in future reward will be limited, guaranteeing the optimality of the greedy policy. The details are shown in the proof given in the appendix.  
%We provide some observations and discussions over the results we get above. As myopic sensing brings in immediate benefits, it loses part of reward from future reward function. In order to guarantee the optimality of myopic sensing, we need to make sure the future reward's difference will not be too large between a ``greedy'' myopic sensing and any deviated action. Also as we proved in \ref{lem:inf1}, when there is one information state differs as $x,y$, the difference between the corresponding value function appears to be a function of $x-y$. However notice when $x,y$ propagates to the future value function, the corresponding terms become $\tau(x), \tau(y)$ and the corresponding difference between value functions is now a function of $\tau(x)-\tau(y) = \delta(x-y)$. Then if $\delta$ is small enough, the difference of future reward will be smoothed.   
%\end{remark}
}
\begin{theorem}
Myopic sensing is optimal for \textbf{(P2)} and \textbf{(P3)} under condition (\ref{eqn:c1}).
\end{theorem}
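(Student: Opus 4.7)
The plan is to assemble this theorem directly from the two preceding lemmas together with the standard Abelian-style passage from discounted to average-reward optimality that the paper invokes at the start of Section \ref{sec:infinite_positive}. The structure will be: first establish optimality for \textbf{(P2)} for every $0<\beta<1$, then push the conclusion to \textbf{(P3)}. The virtue of this organization is that it completely sidesteps any restriction on $\beta$, which is exactly the obstacle that forced us to introduce the one-step deviation framework in the first place.

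First I would invoke Lemma \ref{lem:inf1}: under condition (\ref{eqn:c1}) there is no profitable one-step deviation from $\pi^g$ in the infinite-horizon discounted problem, and this conclusion is uniform in $\beta\in(0,1)$ (the hypothesis (\ref{eqn:c1}) depends only on $\delta$, $\overline{\mathcal R}$, $\underline{\mathcal R}$, not on $\beta$). Next I would appeal to Lemma \ref{lem:one-step}, the one-step deviation principle, which converts the absence of a profitable one-step deviation into the optimality of $\pi^g$ for \textbf{(P2)}. Because both lemmas hold for every $0<\beta<1$, we obtain that $\pi^g$ is optimal for \textbf{(P2)} for every discount factor in $(0,1)$ under condition (\ref{eqn:c1}).

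To extend the conclusion from \textbf{(P2)} to the average-reward problem \textbf{(P3)}, I would cite the argument the paper already invokes (following \cite{Ahmad09optimalityof}): a stationary policy that is optimal for the $\beta$-discounted problem for every $\beta\in(0,1)$ is also optimal for the long-run average-reward criterion. This is essentially a Tauberian/Abelian argument on $(1-\beta)\mathcal J^{\boldsymbol{\pi}}_{\beta}(\bar{\omega})$ as $\beta\uparrow 1$, and the paper explicitly notes that the specialized argument in \cite{Ahmad09optimalityof} extends to the general $(k,m)$ model. Since we have just established that $\pi^g$ is optimal for \textbf{(P2)} for all $\beta\in(0,1)$, this immediately yields its optimality for \textbf{(P3)}.

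I do not anticipate any genuine obstacle here: the substantive work has already been done in Lemmas \ref{lem:one-step}, \ref{lem:boundedness}, and \ref{lem:inf1}, and the theorem statement is really a packaging of those results together with the discounted-to-average reduction. The one point I would be careful about in the write-up is to emphasize that condition (\ref{eqn:c1}) is $\beta$-free, because that is precisely what enables the uniform-in-$\beta$ conclusion needed for the reduction to \textbf{(P3)}; this is the only place where the sharper bound based on $\delta$ (rather than the finite-horizon bound on $\beta$) actually gets used.
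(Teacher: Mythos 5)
Your proposal is correct and follows essentially the same route as the paper: the paper's proof likewise combines Lemma \ref{lem:inf1} with the one-step deviation principle (Lemma \ref{lem:one-step}) to obtain optimality for \textbf{(P2)} for every $0<\beta<1$, and then passes to \textbf{(P3)} via the discounted-to-average reduction from \cite{Ahmad09optimalityof}. Your added emphasis that condition (\ref{eqn:c1}) is $\beta$-free is exactly the implicit point the paper relies on, so nothing is missing.
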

\emph{Proof.}  Lemma \ref{lem:inf1} combined with Lemma \ref{lem:one-step} immediately imply that myopic sensing is optimal for \textbf{(P2)}.  Since this result holds for any choice of $0< \beta < 1$, the optimality is also true for \textbf{(P3)}. 
\qed

%%%%%%%%%%%%%%%%
\subsection{A numerical study}
We next show some numerical results to give a sense of the range of $(p_{11},p_{01})$ pairs, $p_{11} \geq p_{01}$, that would guarantee the optimality of myopic sensing. These results are for the case of $(k,m) = (2,1)$, i.e., while sensing 2 channels we only use 1 for transmission.
\begin{figure}[!h]
    \centering
        \includegraphics[width=0.5\textwidth]{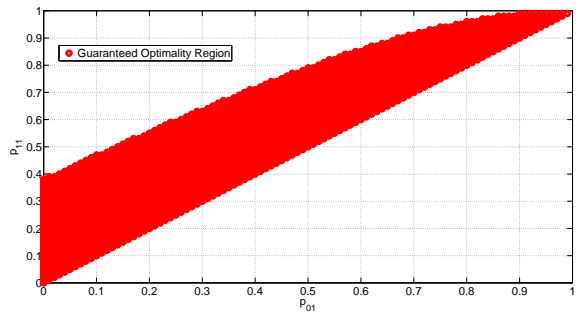}
    \centering
    	\caption{Guaranteed optimality region : case with $p_{11} \geq p_{01}$ } \label{positive}
\end{figure}
%To be specific we want to show the region of $(p_{11},p_{01})$ such that the derived optimality condition is satisfied. 
From Fig.\ref{positive} we can see when $p_{11}$ is small ( $< 0.5$), almost all pairs of $(p_{01},p_{11})$ would satisfy the optimality condition. As $p_{01}$ increases, the choice of $p_{11}$ becomes more limited. % an easy observation is that $p_{11}$ should make $p_{11}-p_{01}$ small enough to meet the optimality condition. 

%optimality of infinite time horizon for negatively correlated channels

\section{Infinite Horizon : $p_{11} < p_{01}$} \label{sec:infinite_negative}

In this section we analyze the infinite horizon problems with negatively correlated channels, i.e., with parameters $p_{11}<p_{01}$. The basic idea is same as in the case of $p_{11} \geq p_{01}$, but the technical details differ; as we show later the difficulties arise mainly from the loss of monotonicity of the value functions with negatively correlated channels.

We start similarly with a lemma regarding the boundedness of the value functions. 
 \begin{lemma} Consider the finite horizon problem (P1) with horizon $T$, and $\forall 1 \leq j \leq N, 1\leq t \leq T$.  Denoting $\delta := p_{01}-p_{11}$, we have 
\begin{align}
(x-y)\cdot \underline{\Delta}_t &\leq W_t(\omega_1,..., \omega_{j-1}, x, \omega_{j+1}, ...,\omega_N)\nonumber \\
 &-W_t(\omega_1,..., \omega_{j-1}, y, \omega_{j+1}, ...,\omega_N) \leq (x-y)\cdot \overline{\Delta}_t
\end{align}
where $\underline{\Delta}_t,\overline{\Delta}_t$ are defined as
\begin{equation}
\underline{\Delta}_t = \left\{
    \begin{array}{cr}
      \frac{1-(\beta\cdot\delta)^{T-t+3}}{1-(\beta\cdot\delta)^{2}}\cdot \eta, & \eta < 0 \\
     0 , & \eta \geq 0.
    \end{array} \right.
\end{equation}
\begin{equation}
\overline{\Delta}_t = \left\{
    \begin{array}{cr}
     \overline{\mathcal R} - \frac{1-(\beta\cdot\delta)^{T-t+3}}{1-(\beta\cdot\delta)^{2}}\cdot \eta, & \eta < 0 \\
     \overline{\mathcal R} , & \eta \geq 0.
    \end{array} \right.
\end{equation}
Here $\eta :=  \underline{\mathcal R} - \beta \cdot (p_{10}-p_{11}) \cdot \overline{\mathcal R}$.\label{lem:bound2}
\end{lemma}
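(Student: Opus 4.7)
The plan is a backward induction on $t$, in the spirit of Lemma~\ref{lem:boundedness} but adapted to the monotonicity reversal of $\tau$ when $p_{11}<p_{01}$. For the base case $t=T$, $W_T(\bar{\omega})=\mathbb{E}[R_{\pi^g}(\bar{\omega})]$: if $j\notin\pi^g$ the difference vanishes, and if $j\in\pi^g$ the Affine proposition places the difference in $(x-y)[\underline{\mathcal R},\overline{\mathcal R}]$. I would then verify that the stated closed-form expressions for $\underline{\Delta}_T$ and $\overline{\Delta}_T$ sandwich this interval in both sign regimes of $\eta$.

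For the inductive step, assume the claim holds uniformly in the perturbed index at time $t+1$, and expand $W_t$ using the outcome-conditioning form analogous to (\ref{eqn:W_alt}), with the middle block reordered as in (\ref{lemma1_neg}). I would treat $j\in\pi^g$ and $j\notin\pi^g$ separately. In the first sub-case the difference decomposes as $(x-y)\,c+\beta(x-y)(A-B)$, where $c\in[\underline{\mathcal R},\overline{\mathcal R}]$ from the Affine property of the one-step reward, and $A-B$ is the expected $W_{t+1}$ difference as the $j$-th slot flips between $p_{11}$ and $p_{01}$; since $p_{11}<p_{01}$, applying the induction hypothesis in the sign-reversed direction places $A-B$ in $[-\delta\,\overline{\Delta}_{t+1},-\delta\,\underline{\Delta}_{t+1}]$. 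In the second sub-case only the $\tau(\omega_j)$ slot in $W_{t+1}$ changes, with $\tau(x)-\tau(y)=-\delta(x-y)\le 0$, so the induction hypothesis in the flipped direction yields $W_t\text{-diff}\in\beta(x-y)[-\delta\,\overline{\Delta}_{t+1},-\delta\,\underline{\Delta}_{t+1}]$. Taking the union of the two sub-case intervals produces a coupled recursion of the form
\begin{equation*}
\overline{\Delta}_t \;=\; \overline{\mathcal R} - \beta\delta\,\underline{\Delta}_{t+1}, \qquad \underline{\Delta}_t \;=\; -\beta\delta\,\overline{\Delta}_{t+1},
\end{equation*}
whose two-step unrolling yields the geometric factor $(1-(\beta\delta)^{T-t+3})/(1-(\beta\delta)^2)$ appearing in the statement, with $\eta=\underline{\mathcal R}-\beta\delta\,\overline{\mathcal R}$ arising as the $j\in\pi^g$ lower-bound coefficient at the first backward step.

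The two branches of the stated formula reflect whether $\eta\ge 0$ (in which case the lower branch of the recursion never drops below $0$, so the trivial bounds $0$ and $\overline{\mathcal R}$ suffice) or $\eta<0$ (where the geometric correction is active). The main obstacle is bookkeeping the sign flip caused by $\tau$ being decreasing: a lower bound at $t+1$ contributes, with flipped sign, to the upper bound at $t$, and symmetrically. This forces $\underline{\Delta}_t$ and $\overline{\Delta}_t$ to be propagated jointly rather than independently, and is why the resulting bounds degrade by successive powers of $\beta\delta$ rather than stabilize as they did for positively correlated channels in Lemma~\ref{lem:boundedness}. Additional care is needed to confirm that the union of the two sub-case intervals at each step indeed fits inside $(x-y)[\underline{\Delta}_t,\overline{\Delta}_t]$, and to check that the two branches of the stated formula agree at the boundary $\eta=0$.
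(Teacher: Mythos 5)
Your plan diverges from the paper's proof at two points, and the first is the heart of the matter. In the sub-case $j\in\pi^g$ you claim the two conditional next-state vectors differ only in ``the $j$-th slot'' flipping between $p_{11}$ and $p_{01}$, and you apply the induction hypothesis in one shot. That step fails: under the reordering in (\ref{lemma1_neg}), a channel sensed bad contributes a $p_{01}$ inserted in the front region of the sorted vector, while one sensed good contributes a $p_{11}$ appended at the end, so the two arguments of $W_{t+1}$ differ in \emph{position}, not merely in one coordinate's value, and $W_{t+1}$ is not permutation-invariant (it encodes ``sense the first $k$''). The inductive bound only covers pairs of vectors identical except for one coordinate at the same position. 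The paper bridges exactly this gap with a telescoping chain: it interpolates between the two vectors through intermediate states obtained by morphing $p_{01}\to\tau(\omega_N)\to\tau(\omega_{N-1})\to\cdots\to\tau(\omega_{k+1})\to p_{11}$ one coordinate at a time, each consecutive pair differing in a single slot, with per-step increments that are nonnegative (this uses the descending order $p_{01}\geq\tau(\omega_N)\geq\cdots\geq\tau(\omega_{k+1})\geq p_{11}$, so the per-step bounds can be aggregated) and that telescope to exactly $\delta$, whence the factors $\delta\underline{\Delta}_{t+1}$ and $\delta\overline{\Delta}_{t+1}$. Without this device your interval for $A-B$ is unjustified.

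Second, even granting that interval, your recursion does not produce the stated constants. The union of your two sub-case intervals gives $\underline{\Delta}_t=-\beta\delta\,\overline{\Delta}_{t+1}$ (since $\underline{\mathcal R}\geq 0$, the $j\notin\pi^g$ lower end dominates), and unrolling it yields terms proportional to $-\beta\delta\overline{\mathcal R}$ alone: the $\underline{\mathcal R}$ inside $\eta=\underline{\mathcal R}-\beta\delta\overline{\mathcal R}$ never appears, contradicting your own claim that $\eta$ emerges from the first backward step. The paper instead lets the $j\leq k$ case drive the coupled recursion $\underline{\Delta}_t=\underline{\mathcal R}-\beta\delta\,\overline{\Delta}_{t+1}$, $\overline{\Delta}_t=\overline{\mathcal R}-\beta\delta\,\underline{\Delta}_{t+1}$, whose two-step unrolling gives $\underline{\Delta}_t=\eta+(\beta\delta)^2\underline{\Delta}_{t+2}$ and hence the factor $(1-(\beta\delta)^{T-t+3})/(1-(\beta\delta)^{2})$, and then handles $j>k$ separately by verifying that $\beta\delta\underline{\Delta}_{t+1}$ and $\beta\delta\overline{\Delta}_{t+1}$ already lie inside $[\underline{\Delta}_t,\overline{\Delta}_t]$ --- notably the paper applies the hypothesis to the pair $(\tau(x),\tau(y))$ \emph{without} the sign flip you perform. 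Your flipped treatment is arguably the more careful reading of the monotonicity reversal, but it is incompatible with the stated formulas: for $\eta\geq 0$ your Case-2 lower end is $-\beta\delta\overline{\mathcal R}\,(x-y)<0=(x-y)\underline{\Delta}_t$, so the induction cannot close with the lemma's $\underline{\Delta}_t,\overline{\Delta}_t$. As written, the proposal neither justifies its key interval nor terminates in the claimed bounds.
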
 
%\emph{Proof.} Proof can be found in Appendix. \qed 
\begin{remark}
For $\underline{\Delta}_1,\overline{\Delta}_1$ when $T$ goes to infinity we have 
\com{why the superscript ``u''?} \com{was referring to ``uniform''. removed now. }
\begin{align}
\underline{\Delta}^{\infty}_1 &= \min\{\frac{ \eta}{1-(\beta\cdot \delta)^{2}},0\}\\
\overline{\Delta}^{\infty}_1 &= \max\{\overline{\mathcal R}- (\beta\cdot \delta) \cdot \frac{ \eta}{1-(\beta\cdot \delta)^{2}},\overline{\mathcal R} \}~. 
\end{align}
\end{remark}
%\begin{remark}
%%\com{Intuition over the bound results?}
%\end{remark}

We next establish the optimality condition for the case $p_{11} < p_{01}$. The argument is similar to that used for the case $p_{11} \geq p_{01}$, i.e., we bound the difference between {immediate rewards} and {future rewards} respectively and compare.  %The only difference lies in the bounding details. Here we give the sufficient conditions for establishing the following lemma from which the optimality of myopic sensing follows. Details of the derivation algebras will be omitted due to space limitation.
The detailed proof of this lemma is thus omitted for brevity. 
\begin{lemma}\label{lemma_inf_neg}
Denote by $\delta = p_{01}-p_{11}$.  When the pair $(p_{11}, p_{01})$ satisfies the following condition 
\begin{align}
 \min \{\delta, \frac{1}{2(1-\delta)}\} \leq \underline{\mathcal R}/\overline{\mathcal R}~, 
\end{align}
then there is no profitable one-step deviation for \textbf{(P2)} for any $0 < \beta < 1$.
\end{lemma}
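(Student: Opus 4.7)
The plan is to mimic the argument used for Lemma \ref{lem:inf1} in the positive-correlation case, but substituting the two-sided boundedness from Lemma \ref{lem:bound2} in place of the one-sided bound from Lemma \ref{lem:boundedness}, since monotonicity of $W^\infty$ is lost when $p_{11}<p_{01}$.

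First I would suppose for contradiction that some one-step deviation $\pi^d$ is profitable at a descending-ordered belief $\bar{\omega}$. An exchange/swap reduction shows that it suffices to analyze a $\pi^d$ that differs from $\pi^g$ in exactly one slot, replacing some $j\in\{1,\dots,k\}$ with some $j'\in\{k+1,\dots,N\}$ where $\omega_j\geq\omega_{j'}$. I would then decompose $W^\infty(\bar{\omega})-V^{d,\infty}(\bar{\omega})$ into an immediate-reward gap and a discounted future-reward gap. By Proposition \ref{prop_affine} and the definition of $\underline{\mathcal R}$, the immediate gap is at least $(\omega_j-\omega_{j'})\,\underline{\mathcal R}$, exactly as in the positive case.

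The core step is to upper-bound the magnitude of the future-reward gap. The successor belief under $\pi^g$ differs from that under $\pi^d$ only in the two slots corresponding to $j$ and $j'$: one of them carries $p_{11}$ or $p_{01}$ (depending on the sensing outcome) while the other carries $\tau(\cdot)$. Applying Lemma \ref{lem:bound2} slot-by-slot and then taking expectation over the Bernoulli sensing outcomes at both channels, the contribution of each slot is proportional to $(\omega_j-\omega_{j'})$ with a factor controlled by $\overline{\Delta}^\infty_1$ and $\underline{\Delta}^\infty_1$. Because $\tau(x)-\tau(y)=-\delta(x-y)$ and $\tau(\tau(x))-\tau(\tau(y))=\delta^2(x-y)$ under negative correlation, two complementary bounds arise: a direct single-step propagation gives a bound scaling as $\beta\delta\,\overline{\mathcal R}/(1-\beta\delta)$, which after letting $\beta\uparrow 1$ produces the term $\delta$ in the min; a two-step bound exploiting the $(\beta\delta)^2$ denominator in $\overline{\Delta}^\infty_1$ together with the order-preserving character of $\tau\circ\tau$ produces, in the $\beta\uparrow 1$ limit, the term $1/(2(1-\delta))$.

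Combining the two upper bounds with the lower bound on the immediate gap, the hypothesis $\min\{\delta,\,1/(2(1-\delta))\}\leq\underline{\mathcal R}/\overline{\mathcal R}$ is precisely what is needed to guarantee that the immediate gap dominates the future gap uniformly in $\beta\in(0,1)$, contradicting the existence of a profitable deviation and establishing the lemma via Lemma \ref{lem:one-step}. The main obstacle, and the reason the statement differs from its positive-correlation counterpart, is the loss of monotonicity: the future-reward gap is sign-indefinite, so the bookkeeping must cover both the case $\eta\geq 0$ (where Lemma \ref{lem:bound2} gives the trivial lower bound $\underline{\Delta}=0$) and $\eta<0$ (where a nontrivial lower bound is needed), and must carefully track the sign flips that $\tau$ introduces under negative correlation.
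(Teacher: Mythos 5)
You have correctly identified the paper's intended route: in fact the paper \emph{omits} the detailed proof of Lemma \ref{lemma_inf_neg} entirely, stating only that one should ``bound the difference between immediate rewards and future rewards respectively and compare,'' reusing the one-step deviation principle (Lemma \ref{lem:one-step}) and the two-sided bounds of Lemma \ref{lem:bound2} in place of the monotone bound of Lemma \ref{lem:boundedness}. Your swap reduction, the lower bound $(\omega_j-\omega_{j'})\underline{\mathcal R}$ on the immediate gap, and the $\eta<0$ versus $\eta\geq 0$ case split are exactly the scaffolding that outline calls for.

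However, there is a genuine gap at the quantitative core, which is the only place this lemma differs from its positive-correlation counterpart: the derivation of the threshold $\min\{\delta,\,1/(2(1-\delta))\}$. Your ``direct single-step propagation'' bound $\beta\delta\overline{\mathcal R}/(1-\beta\delta)$ invokes $\Delta_\infty = \overline{\mathcal R}/(1-\beta\delta)$ from (\ref{eqn:infinite_positive_delta}), which is not available here --- it rests on the monotonicity (nonnegative lower bound) in Lemma \ref{lem:boundedness}, precisely what fails when $p_{11}<p_{01}$ --- and even if it were, letting $\beta\uparrow 1$ would yield $\delta/(1-\delta)\leq \underline{\mathcal R}/\overline{\mathcal R}$, not the branch $\delta\leq \underline{\mathcal R}/\overline{\mathcal R}$ appearing in the min. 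That branch instead arises naturally from the $\eta\geq 0$ regime: requiring $\eta=\underline{\mathcal R}-\beta\delta\overline{\mathcal R}\geq 0$ for all $\beta<1$ is equivalent to $\delta\leq \underline{\mathcal R}/\overline{\mathcal R}$, and there Lemma \ref{lem:bound2} gives $\underline{\Delta}=0$, restoring monotonicity so that the positive-case comparison carries over. For the other branch you assert that a ``two-step bound exploiting the $(\beta\delta)^2$ denominator'' produces $1/(2(1-\delta))$, but no computation is given, and it is not a routine consequence of the stated bounds: substituting the $\eta<0$ form $\overline{\Delta}^{\infty}_1=(\overline{\mathcal R}-\beta\delta\underline{\mathcal R})/(1-(\beta\delta)^2)$ into the one-step comparison $\underline{\mathcal R}\geq \beta\delta\,\overline{\Delta}^{\infty}_1$ again collapses, as $\beta\uparrow 1$, to $\delta\leq \underline{\mathcal R}/\overline{\mathcal R}$, so extracting the constant $1/(2(1-\delta))=1/(2(p_{00}+p_{11}))$ demands an additional idea your sketch does not supply. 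Relatedly, the analogue of the positive case's tracking of a swapped pair through future steps must be redone here because the unsensed channels' ordering is \emph{reversed} at every step; you flag ``sign flips'' but never show the swap remains unprofitable along each sample path under this reversal. As it stands, your plan faithfully reproduces the paper's (itself omitted) outline, but does not yet establish the specific threshold asserted in the lemma.
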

\begin{theorem}
Myopic sensing is optimal for \textbf{(P2)} and \textbf{(P3)} when the condition in Lemma \ref{lemma_inf_neg} is satisfied.
\end{theorem}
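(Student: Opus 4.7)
\emph{Proof Plan.} The strategy is to mirror the proof of the analogous theorem in Section \ref{sec:infinite_positive}: the result should follow immediately by combining Lemma \ref{lemma_inf_neg} with the one-step deviation principle (Lemma \ref{lem:one-step}), and then lifting the discounted-reward optimality to the average-reward criterion.

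Concretely, I would proceed as follows. First, invoke Lemma \ref{lemma_inf_neg}: under the stated condition $\min\{\delta,\frac{1}{2(1-\delta)}\}\leq \underline{\mathcal R}/\overline{\mathcal R}$, no profitable one-step deviation from $\pi^g$ exists for \textbf{(P2)}, and importantly, this holds for every $\beta\in(0,1)$. Second, apply Lemma \ref{lem:one-step}; as emphasized in the remark following that lemma, its statement is agnostic to the sign of $p_{11}-p_{01}$, so it applies verbatim in the negatively correlated setting. Together these give optimality of $\pi^g$ for \textbf{(P2)} for every discount factor $0<\beta<1$. Third, extend to \textbf{(P3)} using the observation made at the start of Section \ref{sec:infinite_positive}: a policy that is optimal for \textbf{(P2)} uniformly over all $0<\beta<1$ is also optimal for the average-reward problem \textbf{(P3)} (this is the standard vanishing-discount / Tauberian route, already alluded to via the reference to \cite{Ahmad09optimalityof}). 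Since the sufficient condition in Lemma \ref{lemma_inf_neg} is itself independent of $\beta$, the optimality transfers cleanly.

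The main (and only) nontrivial point to verify is that nothing in the chain of reasoning is specific to the positively correlated case. The one-step deviation lemma was already written down without sign assumptions on $\delta$, so it carries over. The extension from discounted to average reward only uses that $\pi^g$ is optimal for every $\beta$ and that the state space (the belief simplex) and reward are bounded, neither of which depends on the sign of $\delta$. Hence no additional obstacle arises, and the proof reduces to the two-line argument: apply Lemma \ref{lemma_inf_neg}, then Lemma \ref{lem:one-step}, then take $\beta\to 1$ to conclude for \textbf{(P3)}.

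The hard work has in effect been absorbed into Lemma \ref{lem:bound2} and Lemma \ref{lemma_inf_neg}, where the loss of monotonicity in the negatively correlated regime forced a two-sided bound $\underline{\Delta}^\infty_1,\overline{\Delta}^\infty_1$ on successive differences of the value function rather than the single-sided bound used in the positively correlated case. Given those bounds, the theorem itself is a corollary, and I would state the proof in essentially one sentence paralleling the proof of the corresponding theorem in Section \ref{sec:infinite_positive}.
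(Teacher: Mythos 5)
Your proposal matches the paper's proof exactly: the paper also dispatches this theorem in one line by combining Lemma \ref{lemma_inf_neg} with the one-step deviation principle (Lemma \ref{lem:one-step}), which as noted in its accompanying remark is stated without any sign assumption on $p_{11}-p_{01}$, and then extends to \textbf{(P3)} because the condition holds uniformly over $0<\beta<1$. Your observation that the substantive work lives in Lemmas \ref{lem:bound2} and \ref{lemma_inf_neg} is also consistent with the paper's structure, so there is nothing to correct.
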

\emph{Proof.} The proof follows immediately from the one-step deviation principle. \qed

\subsection{A numerical study}
Here we show similar numerical results on the range of $(p_{11},p_{01})$ pairs, $p_{11} < p_{01}$,  that would guarantee the optimality of myopic sensing according to Lemma \ref{lemma_inf_neg}.  Again we use the case of $(k,m) = (2,1)$.
\begin{figure}[!h]
    \centering
        \includegraphics[width=0.5\textwidth]{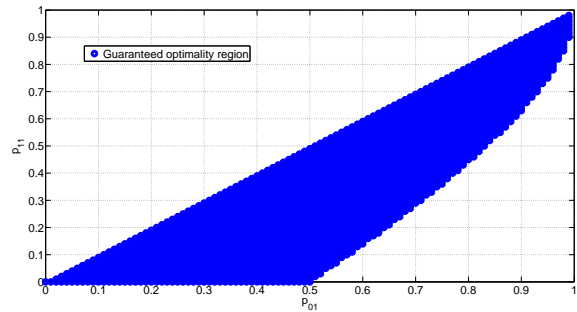}
    \centering
    	\caption{Guaranteed optimality region : case with $p_{11} < p_{01}$ } \label{negative}
\end{figure} 
%To be specific we want to show the region of $(p_{11},p_{01})$ such that the condition in Lemma \ref{lemma_inf_neg} is satisfied. 
%
\rev{
This picture appears to be a mirror image (w.r.t. the diagonal $p_{11}=p_{01}$) of the earlier one. 
%The observation of the optimality region is similar to the ones made in positive correlated channels' section. To make the paper complete : from Fig.\ref{negative} we can see 
When $p_{01}$ is small ($< 0.5$), most pairs of $(p_{01},p_{11})$ satisfy our optimality condition. When $p_{01}$ increases, the choice of $p_{11}$ becomes more limited. } 
%; an easy observation is that $p_{11}$ should make $p_{01}-p_{11}$ small enough to meet the optimality condition. 

%discussion
\section{Discussion} \label{sec:discussion}

In deriving the set of sufficient conditions we have used two different methods: an induction and sample path based argument for the finite horizon problem and a set of bounds for the infinite horizon problems.  In addition, the first set of conditions is on $\beta$, while the second set on $p_{11}$ and $p_{01}$.  %It should be noted that the two sets of conditions do not imply each other.  
%The two different methodologies of our paper (for finite and infinite horizon problem respectively) does NOT imply each other. 
%To be more specific, 
%Specifically, 
The induction based argument for the finite horizon problem cannot be extended to address the infinite horizon problems; however, the bounding techniques combined with the one-step deviation principle can be applied to obtain alternate sufficient conditions for the finite horizon problem. 
%though the two methodologies will reach at different results (different sufficient conditions to guarantee the optimality of myopic sensing) for the finite time problems, we 
The detail is omitted as the essence of the method remains the same as we have shown in the infinite horizon problems.

\section{Conclusion}\label{sec:conclusion}

This paper we considered a widely studied stochastic control problem arising from opportunistic spectrum access in a multi-channel system, %with the goal of providing a unifying analytical framework whereby a number of prior results may be viewed as special cases.  
where a single wireless transceiver/user with access to $N$ channels, each modeled as an iid discrete-time two-state Markov chain.  In each time step the user is allowed to sense $k\leq N$ channels, and subsequently use up to $m\leq k$ channels out of those sensed to be available. 
%
%The user's objective is to maximize its total discounted or average reward over a finite or infinite horizon.  
This problem has previously been studied in various special cases including $m=k=1$ and $m=k\leq N$; it is often cast as a restless bandit problem, with optimality results derived for a myopic policy that seeks to maximize the immediate one-step reward when the two-state Markov chain model is positively correlated.  
%
%In this paper we study the general problem of sensing up to $k$ channels and using up to $m$ channels for transmission, with $1\leq m\leq k\leq N$, while prior work considered special cases including $m=k=1$ and $m=k\leq N$.   
We derived sufficient conditions under which the myopic policy is optimal for the finite and infinite horizon reward criteria, respectively.  It is shown that these results reduce to those derived in prior studies under the corresponding special cases, and thus may be viewed as a set of unifying optimality conditions. 

%In this paper we investigated the OSA (Opportunistic Spectrum Access) problem with finite and infinite time horizon restless multi-armed bandits. To be specific we consider the following spectrum access scenario: while $k$ channels can be sensed (due to limited sensing capability), the cognitive users can only transmit on a subset ($m$) of the sensed channels (due to hardware limitation). While previous results put emphasizes on structural results,  we analyze the performance of a simple stationary separated policy: myopic sensing and we presented the conditions under which the optimality of myopic sensing for finite and infinite time horizon problems is guaranteed.

%In our analysis, we show the myopic sensing strategy is not guaranteed to be optimal in general. In the first part, we find sufficient conditions to guarantee the optimality of myopic sensing strategy of finite horizon problem for both positive and negative correlated channels. By showing two special cases, we see previous results can be established immediately following our work. On the other hand, we give out numerical results to show the myopic sensing is not always optimal.

%In the second part we are following the same analysis path but with different methodologies. Again, we reach at sufficient conditions guaranteeing the optimality of myopic sensing for infinite horizon problems and also we show our work is consistent with previous results. Meanwhile numerical results are provided to give a clear view of the guaranteed optimality region.

\bibliography{osa-IT}

\begin{appendix}
%proof of finite positive monotonicity
\rev{
\subsection{Proof of Lemma \ref{lemma:mono}} 

We prove this by induction on $t$. Denote by $\bar{\omega}^{+} \subset \bar{\omega}^{'}$ the subset of components that are strictly larger in $\bar{\omega^{'}}$ than in $\bar{\omega}$, i.e., $\bar{\omega}^+=\{\omega_i^{'}, i=1, \cdots, N, ~ \mbox{s. t. } \omega^{'}_i > \omega_i\}$.  %Let $d=|\omega^{+}|$. 

%as $\pi_{d}(t)$ and the cardinality of $\pi_d(t)$ is $|\pi_{d}(t)|=d$. 
{\em Induction basis:} When $t = T$, the lemma holds due to the increasing property of the one-step expected reward given in Proposition \ref{eqn:prop_increase}. 

{\em Induction step:} Assume the lemma holds for $t+1, \cdots, T$, and consider time $t$.
%Denote $\pi^*(t)$ as the optimal strategy under $\mathbf{\omega}$ at time $t$. Notice that
%\begin{align}
%W_t(\mathbf{\omega^{'}}) \geq W^{\pi^*(t)}_t(\mathbf{\omega^{'}})
%\end{align}
%
There are two cases: 

\textbf{Case 1.} $\bar{\omega}^{+} \cap \bar{\omega}(\pi^g) = \emptyset$.
In this case since the elements strictly larger in $\omega^{'}$ are not used, the expected one-step rewards under $\bar{\omega}^{'}$ and under $\bar{\omega}$ are the same.  The future reward under $\bar{\omega}^{'}$ is no smaller than that under $\bar{\omega}$ due to the induction hypothesis and the monotonicity of $\tau(\cdot)$, i.e., $\tau(\omega^{'}_j) > \tau(\omega_j)$ for $\omega^{'}_j > \omega_j$.   
%Under this case the claim holds obviously as all elements in $\pi_d(t)$ will goto the $W_{t+1}(\cdot)$ updating by monotonic function $\tau(\cdot)$ (as $p_{11} \geq p_{01}$) without being erased away.

\textbf{Case 2.} $\bar{\omega}^{+} \cap \bar{\omega}(\pi^g) \neq \emptyset$.
Consider some $j \in \bar{\omega}^{+} \cap  \bar{\omega}(\pi^g)$, and the state vector $(\omega_j^{'}, \omega_{-j})$;  % Define $\tilde{\omega}^1(t) = \{\omega_j(t),\omega^{'}_{-j}(t)\}_o$\footnote{Here we use "o" to denote a ordered information state. }, i.e., a information state vector which 
it differs from $\omega$ by only one element $\omega^{'}_j$. Using the alternate expression given in (\ref{eqn:W_alt}) we have 
\begin{eqnarray}
&& W_t(\omega^{'}_j, \omega_{-j}) \nonumber \\
&=&  \mathbb E [R_{\pi^g}(\omega^{'}_j, \omega_{-j})] + \nonumber \\
&& \omega^{'}_j \beta \cdot \sum_{\bar{l}_{-j} \in \{0,1\}^{k-1}} q(\bar{l}_{-j}; \omega_{-j}) 
\underbrace{W_{t+1}(p_{11}[\sum_{i\neq j} l_i+1],\tau({\omega_{k+1}}),.., \tau(\omega_{N}), p_{01}[k-\sum_{i\neq j} l_i-1])}_{\rm R_1}+  \nonumber\\
&& (1-\omega^{'}_j) \beta \cdot \sum_{\bar{l}_{-j} \in \{0,1\}^{k-1}} q(\bar{l}_{-j}; \omega_{-j})  
\underbrace{W_{t+1}(p_{11}[\sum_{i\neq j} l_i],\tau({\omega_{k+1}}),.., \tau(\omega_{N}), p_{01}[k-\sum_{i\neq j} l_i])}_{\rm R_2} \nonumber \\
%
%\omega^{'}_j \cdot \beta \cdot  \sum_{l_i \in \{0,1\}, i \in \pi^g \backslash \{j\}}(\prod_{i \in \pi^g\backslash \{j\}}\omega^{l_i}_i(1-\omega_i)^{1-l_i}) \cdot \nonumber \\
%&\underbrace{W_{t+1}(p_{11},p_{11},...,p_{11},\tau({\omega_j}) = p_{11},..,p_{01},p_{01},...,p_{01})}_{\rm R_1}+ (1-\omega^{'}_j) \cdot \beta \nonumber \\
%&\cdot \sum_{l_i \in \{0,1\}, i \in \pi^g \backslash \{j\}}(\prod_{i \in \pi^g \backslash \{j\}}\omega^{l_i}_i(1-\omega_i)^{1-l_i}) \cdot \underbrace{W_{t+1}(p_{11},p_{11},...,p_{11},,..,\tau({\omega_j}) = p_{01},p_{01},p_{01},...,p_{01})}_{\rm R_2}\nonumber \\
&\geq & \mathbb E [R_{\pi^g}(\omega_j, \omega_{-j})] + \nonumber \\
&& \omega_j \beta \cdot \sum_{\bar{l}_{-j} \in \{0,1\}^{k-1}} q(\bar{l}_{-j}; \omega_{-j}) 
\underbrace{W_{t+1}(p_{11}[\sum_{i\neq j} l_i+1],\tau({\omega_{k+1}}),.., \tau(\omega_{N}), p_{01}[k-\sum_{i\neq j} l_i-1])}_{\rm R_1}+  \nonumber\\
&& (1-\omega_j) \beta \cdot \sum_{\bar{l}_{-j} \in \{0,1\}^{k-1}} q(\bar{l}_{-j}; \omega_{-j})  
\underbrace{W_{t+1}(p_{11}[\sum_{i\neq j} l_i],\tau({\omega_{k+1}}),.., \tau(\omega_{N}), p_{01}[k-\sum_{i\neq j} l_i])}_{\rm R_2} \nonumber \\
%
%\geq & \mathbb E [R_{\pi^g}(\omega_j, \omega_{-j})] + \omega_j  \beta \cdot \sum_{l_i \in \{0,1\}, i \in \pi^g\backslash \{j\}}(\prod_{i \in \pi^g\backslash \{j\}}\omega^{l_i}_i(1-\omega_i)^{1-l_i}) \cdot \nonumber \\
%&\underbrace{W_{t+1}(p_{11},p_{11},...,p_{11},\tau({\omega_j}) = p_{11},..,p_{01},p_{01},...,p_{01})}_{\rm R_1}+ (1-\omega_j) \cdot \beta \nonumber \\
%&\cdot \sum_{l_i \in \{0,1\}, i \in \pi^g \backslash \{j\}}(\prod_{i \in \pi^g\backslash \{j\}}\omega^{l_i}_i(1-\omega_i)^{1-l_i}) \cdot \underbrace{W_{t+1}(p_{11},p_{11},...,p_{11},,..,\tau({\omega_j}) = p_{01},p_{01},p_{01},...,p_{01})}_{\rm R_2}\nonumber \\
&=&  W_t(\bar\omega)~, 
\end{eqnarray}
where the inequality holds because 
%\begin{align}
(1) $\mathbb E [R_{\pi^g}(\omega^{'}_j, \omega_{-j})] \geq \mathbb E [R_{\pi^g}(\omega_j, \omega_{-j})]$ by Proposition \ref{eqn:prop_increase}, and 
%\end{align}
%and secondly for each term within the summation of future value function is
%\begin{align}
(2) $\omega^{'}_j \cdot R_1 + (1-\omega^{'}_j) \cdot R_2\geq \omega_j \cdot R_1 + (1-\omega_j) \cdot R_2$ since $\omega^{'}_j>\omega_j$ and $R_1\geq R_2$ due to the induction hypothesis. 
%\end{align}
%$R_1 \geq R_2$ due to induction hypothesis and combined with $\omega^{'}_j > \omega_j$ we get the inequality.
%
We can now repeat the above process by introducing another element $k\in \bar{\omega}^{+}\cap\bar{\omega}(\pi^g), k\neq j$, and obtain similarly, $W_t(\omega^{'}_k, \omega^{'}_j, \omega_{-j,-k}) \geq W_t(\omega^{'}_j, \omega_{-j}) \geq W_t(\bar{\omega})$. When all elements in $\bar{\omega}^{+}\cap\bar{\omega}(\pi^g), k\neq j$ have been exhausted we obtain $W_t(\bar{\omega}^{'}) \geq W_t(\bar{\omega})$. 
%By constructing above process $\forall j \in \pi_d(t) \cap \pi^g$ by defining $\bar{\tilde{\omega}^2}(t),...,\bar{\tilde{\omega}^{d}}(t) = \bar{\omega^{'}}(t)$ (one different element at each time), we can reach at
%\begin{align}
%W_t(\bar{\omega^{'}}(t)) \geq W_t(\bar{\omega}(t))
%\end{align}
%Therefore $W_t(\bar{\omega^{'}})\geq W_t(\bar{\omega})$; 
The induction steps is thus completed. 
}

%proof of finite lemma 4
\rev{
\subsection{Proof of Lemma \ref{lem:key}}

The two inequalities L1 and L2 will be shown together using an induction on $t$.  

{\em Induction basis:} For $t=T$, L1 holds because in this case 
%$W_T(\omega_N,\omega_1,...,\omega_{N-1}) - W_T(\omega_1,...,\omega_{N})$ is reduced to 
\begin{eqnarray}
&& W_T(\omega_N,\omega_1,...,\omega_{N-1}) - W_T(\omega_1,...,\omega_{N}) \nonumber \\
&=& \mathbb E[R_{\pi^g}(\omega_N,\omega_1,...,\omega_{N-1})] - E[R_{\pi^g}(\omega_1,...,\omega_{N})] 
%\end{align}
%and notice we have
%\begin{align}
%&\mathbb E[R_{\pi^g}(\omega_N,\omega_1,...,\omega_{N-1})] - E[R_{\pi^g}(\omega_1,...,\omega_{N})] 
\nonumber \\
&\leq&   E[R_{\pi^g}(\omega_N = p_{11},\omega_1,...,\omega_{N-1})] - E[R_{\pi^g}(\omega_1,...,\omega_k = p_{01},...,\omega_{N})] \nonumber \\
&\leq&   E[R_{\pi^g}(\omega_N = p_{11},\omega_1,...,\omega_{N-1})] - E[R_{\pi^g}(\omega_k = p_{01}, \omega_1,...,\omega_{N})]  \leq \overline{\cal R} ~, 
\end{eqnarray}
using the increasing property, Proposition \ref{eqn:prop_increase}, of the expected one-step reward.
% \com{please complete the proof here.  I think you will also need to consider the 4 cases.}
L2 holds at $T$ due to the same reason.  Assume both L1 and L2 hold for times $t+1, \cdots, T$. 

{\em Induction step:} 
We will employ a sample-path argument by calculating the quantities on the LHS (RHS) of these two inequalities conditioned on the outcome of sensing specific channels.  
Consider first L1.  At time $t$, the LHS selects channels $\{N, 1, \cdots, k-1\}$ while the RHS selects channels $\{1, \cdots, k\}$.  Thus the two sides differ only in channels $\{k, N\}$. For simplicity we denote by $\text{LHS}|_{i,j}$ (resp. $\text{RHS}|_{i,j}$) the value of the LHS  (resp. RHS) of L1 conditioned on the realizations of channels $k$ and $N$ being $i$ and $j$, respectively, where $i,j \in \{0,1\}$.  %And we define $-k := \Omega \backslash \{k\}$ and 
Denote by $\pi^g_{ k-1}:=\{1, 2, \cdots, k-1\}$; this is the common set of channels sensed by both sides.  
Also recall the notation $\bar{l}_{-k} = \{l_1, \cdots, l_{k-1}\}$. 
%as the set of channels in $\pi^g$ except the one we will arguing w.r.t. its realization.

\textbf{Case 1.} $(k, N) = (``1'', ``0'')$: channel $k$ has state realization ``1'' and channel $N$ ``0''. 
In this case we have
\begin{eqnarray}
\text{LHS}|_{1,0} &=& \overline{\mathcal R} + \mathbb E [R_{\pi^g}(0,\omega_1,...,\omega_{N-1})] + \beta \cdot \sum_{\bar{l}_{-k} \in \{0,1\}^{k-1}} q(\bar{l}_{-k}; \omega_{-k})\cdot \nonumber \\
&&W_{t+1}(p_{11}[\sum_{i=1}^{k-1} l_i],\tau(\omega_k) = p_{11}, \cdots, \tau(\omega_{N-1}), p_{01}[k-\sum_{i=1}^{k-1} l_i]) \nonumber \\
\text{RHS}|_{1,0} &=& \mathbb E [R_{\pi^g}(1,\omega_2,...,\omega_N)] + \beta \cdot \sum_{\bar{l}_{-k} \in \{0,1\}^{k-1}} q(\bar{l}_{-k}; \omega_{-k}) \cdot \nonumber \\
&&W_{t+1}(p_{11}[\sum_{i=1}^{k-1} l_i], \tau(\omega_k) = p_{11}, \cdots, \tau(\omega_{N-1}), \tau(\omega_N)= p_{01}, p_{01}[k-1-\sum_{i=1}^{k-1} l_i]) ~. 
\end{eqnarray}
By the definition of $\overline{\mathcal R}$ we have $\overline{\mathcal R} + \mathbb E [R_{\pi^g}(0,\omega_1,...,\omega_{N-1})] - \mathbb E [R_{\pi^g}(1,\omega_2,...,\omega_N)] \geq 0$, thus $\text{LHS}|_{1,0} \geq \text{RHS}|_{1,0}$.  

\textbf{Case 2.} $(k, N) = (``1'', ``1'')$: both channels $k$ and $N$ have state realizations ``1''.  In this case 
\begin{eqnarray}
\text{LHS}|_{1,1} &=& \overline{\mathcal R} + \mathbb E [R_{\pi^g}(1,\omega_1,...,\omega_{N-1})]  + \beta \cdot \sum_{\bar{l}_{-k} \in \{0,1\}^{k-1}} q(\bar{l}_{-k}; \omega_{-k})\cdot \nonumber \\
&&W_{t+1}(p_{11}[\sum_{i=1}^{k-1} l_i+1],\tau(\omega_k) = p_{11}, \cdots, \tau(\omega_{N-1}), p_{01}[k-1-\sum_{i=1}^{k-1} l_i]) \nonumber \\
\text{RHS}|_{1,1} &=& \mathbb E [R_{\pi^g}(1,\omega_2,...,\omega_N)] + \beta \cdot \sum_{\bar{l}_{-k} \in \{0,1\}^{k-1}} q(\bar{l}_{-k}; \omega_{-k}) \cdot \nonumber \\
&&W_{t+1}(p_{11}[\sum_{i=1}^{k-1} l_i+1], \tau(\omega_{k+1}), \cdots, \tau(\omega_N)= p_{11}, p_{01}[k-1-\sum_{i=1}^{k-1} l_i]) ~. 
\end{eqnarray}
%\beta \sum_{l_i \in \{0,1\}, i \in \pi_{k-1}}q(l_1,...,l_{k-1}) \cdot \nonumber \\
%&W_{t+1}(p_{11}[\sum l_i],p_{11},\tau(\omega_k) = p_{11}, ...,\tau(\omega_{N-1}),p_{01}[k-1-\sum l_i]) \\
%\text{RHS}|_{1,1} &= \mathbb E [R_{\pi^g}(\omega_k = 1, \omega_{-k})] + \beta \cdot \sum_{l_i \in \{0,1\}, i \in \pi_{k-1}}q(l_1,...,l_{k-1}) \cdot \nonumber \\
%&W_{t+1}(p_{11}[\sum l_i],p_{11},...,\tau(\omega_N) = p_{11},p_{01}[k-1-\sum l_i])
%\end{align}
$\text{LHS}|_{1,0} \geq \text{RHS}|_{1,0}$ because (1) $\overline{\mathcal R}\geq 0$, (2) $\mathbb E [R_{\pi^g}(1,\omega_1,...,\omega_{N-1})] = \mathbb E [R_{\pi^g}(1,\omega_2,...,\omega_N)]$, and (3) by repeatedly using the induction hypothesis of L2 (successively moving $\tau(\omega_k)=p_{11}$ to the right or down the ordered list). 
%\begin{align}
%& W_{t+1}(p_{11}[\sum l_i],p_{11},\tau(\omega_1) = p_{11}, \tau(\omega_2),...,\tau(\omega_{N-1}),p_{01}[k-1-\sum l_i]) \nonumber \\
%&\geq W_{t+1}(p_{11}[\sum l_i],p_{11},\tau(\omega_2),...,\tau(\omega_N) = p_{11},p_{01}[k-1-\sum l_i]
%\end{align}
%for any realization of $[l_1,l_2,...,l_{k-1}]$. Therefore LHS $\geq$ RHS; proved.

\textbf{Case 3.} $(k, N)=(``0'', ``0'')$: both channels $k$ and $N$ have state realizations ``0''.  We have
\begin{eqnarray}
\text{LHS}|_{0,0} &=& \overline{\mathcal R} + \mathbb E [R_{\pi^g}(0,\omega_1,...,\omega_{N-1})] + \beta \cdot \sum_{\bar{l}_{-k} \in \{0,1\}^{k-1}} q(\bar{l}_{-k}; \omega_{-k})\cdot \nonumber \\
&&W_{t+1}(p_{11}[\sum_{i=1}^{k-1} l_i], \tau(\omega_k) = p_{01}, \cdots, \tau(\omega_{N-1}), p_{01}[k-\sum_{i=1}^{k-1} l_i]) \nonumber \\
\text{RHS}|_{0,0} &=& \mathbb E [R_{\pi^g}(0,\omega_1,...,\omega_N)] + \beta \cdot \sum_{\bar{l}_{-k} \in \{0,1\}^{k-1}} q(\bar{l}_{-k}; \omega_{-k}) \cdot \nonumber \\
&&W_{t+1}(p_{11}[\sum_{i=1}^{k-1} l_i], \tau(\omega_{k+1}), \cdots, \tau(\omega_{N-1}), \tau(\omega_N)= p_{01}, p_{01}[k-\sum_{i=1}^{k-1} l_i]) ~. 
\end{eqnarray}
%
%\begin{align}
%\text{LHS}|_{0,0} &= \overline{\mathcal R} + \mathbb E [R_{\pi^g}(\omega_N = 0,\omega_{-N})] + \beta \cdot \sum_{l_i \in \{0,1\}, i \in \pi_{k-1}}q(l_1,...,l_{k-1}) \cdot \nonumber \\
%&W_{t+1}(p_{11}[\sum l_i],\tau(\omega_k) = p_{01},...,p_{01},p_{01}[k-1-\sum l_i]) \\
%\text{RHS}|_{0,0} &= \mathbb E [R_{\pi^g}(\omega_k = 0,\omega_{-k})] + \beta \cdot \sum_{l_i \in \{0,1\}, i \in \pi_{k-1}}q(l_1,...,l_{k-1}) \cdot \nonumber \\
%&W_{t+1}(p_{11}[\sum l_i],...,\tau(\omega_N) = p_{01},p_{01},p_{01}[k-1-\sum l_i])
%\end{align}
Using the induction hypothesis of both L1 and L2 we have 
\begin{eqnarray}
\text{LHS}|_{0,0} &\geq& \mathbb E [R_{\pi^g}(0,\omega_1,...,\omega_{N-1})] + \beta \cdot \sum_{\bar{l}_{-k} \in \{0,1\}^{k-1}} q(\bar{l}_{-k}; \omega_{-k})\cdot \nonumber \\
&&(\overline{\mathcal R} + W_{t+1}(p_{11}[\sum_{i=1}^{k-1} l_i], \tau(\omega_k) = p_{01}, \cdots, \tau(\omega_{N-1}), p_{01}[k-\sum_{i=1}^{k-1} l_i]))\nonumber \\
&\geq& \mathbb E [R_{\pi^g}(0,\omega_1,...,\omega_N)] + \beta \cdot \sum_{\bar{l}_{-k} \in \{0,1\}^{k-1}} q(\bar{l}_{-k}; \omega_{-k})\cdot \nonumber \\
&&(\overline{\mathcal R} + W_{t+1}(\tau(\omega_{k})=p_{01}, p_{11}[\sum_{i=1}^{k-1} l_i], \tau(\omega_{k+1}), \cdots, \tau(\omega_{N-1}), p_{01}[k-\sum_{i=1}^{k-1} l_i])) \nonumber \\
&\geq& \mathbb E [R_{\pi^g}(0,\omega_1,...,\omega_N)] + \beta \cdot \sum_{\bar{l}_{-k} \in \{0,1\}^{k-1}} q(\bar{l}_{-k}; \omega_{-k})\cdot \nonumber \\
&&W_{t+1}(p_{11}[\sum_{i=1}^{k-1} l_i], \tau(\omega_{k+1}), \cdots, \tau(\omega_{N-1}), p_{01}[k-\sum_{i=1}^{k-1} l_i], \tau(\omega_k)=p_{01}) \nonumber \\
&=& \text{LHS}|_{0,0} ~, 
\end{eqnarray}
where the first inequality is due to the fact that $q(\cdot)$ forms a probability distribution and $\beta \overline{\mathcal R} < \overline{\mathcal R}$, the second due to the induction hypothesis of L2, and the third due to the induction hypothesis of L1. 
%\begin{align}
%&\text{LHS}|_{0,0} \geq \mathbb E [R_{\pi^g}(\omega_N = 0,\omega_{-N})] + \beta \cdot \sum_{l_i \in \{0,1\}, i \in \pi_{k-1}}q(l_1,...,l_{k-1}) \cdot \nonumber \\
%&~~(\overline{\mathcal R} + W_{t+1}(p_{11}[\sum l_i],\tau(\omega_k) = p_{01},...,p_{01},p_{01}[k-1-\sum l_i])) \nonumber \\
%&\geq \mathbb E [R_{\pi^g}(\omega_k = 0,\omega_{-k})] + \beta \cdot \sum_{l_i \in \{0,1\}, i \in \pi_{k-1}}q(l_1,...,l_{k-1}) \cdot \nonumber \\
%&~~(\overline{\mathcal R} + W_{t+1}(\tau(\omega_k) = p_{01},p_{11}[\sum l_i],...,p_{01},p_{01}[k-1-\sum l_i])) \nonumber \\
%&\geq \mathbb E [R_{\pi^g}(\omega_k = 0,\omega_{-k})] + \beta \cdot \sum_{l_i \in \{0,1\}, i \in \pi_{k-1}}q(l_1,...,l_{k-1}) \cdot \nonumber \\
%&~~W_{t+1}(p_{11}[\sum l_i],...,\tau(\omega_N) = p_{01},p_{01},p_{01}[k-1-\sum l_i]) \nonumber \\
%&= \text{RHS}|_{0,0}
%\end{align}
%Proved.

\textbf{Case 4.} $(k, N)=(``0'', ``1'')$: channels $k$ and $N$ have state realizations ``0'' and ``1'', respectively.  We have
\begin{eqnarray}
\text{LHS}|_{0,1} &=& \overline{\mathcal R} + \mathbb E [R_{\pi^g}(1,\omega_1,...,\omega_{N-1})] + \beta \cdot \sum_{\bar{l}_{-k} \in \{0,1\}^{k-1}} q(\bar{l}_{-k}; \omega_{-k})\cdot \nonumber \\
&&W_{t+1}(p_{11}[\sum_{i=1}^{k-1} l_i+1], \tau(\omega_k) = p_{01}, \cdots, \tau(\omega_{N-1}), p_{01}[k-1-\sum_{i=1}^{k-1} l_i]) \nonumber \\
\text{RHS}|_{0,1} &=& \mathbb E [R_{\pi^g}(0,\omega_1,...,\omega_N)] + \beta \cdot \sum_{\bar{l}_{-k} \in \{0,1\}^{k-1}} q(\bar{l}_{-k}; \omega_{-k}) \cdot \nonumber \\
&&W_{t+1}(p_{11}[\sum_{i=1}^{k-1} l_i], \tau(\omega_{k+1}), \cdots, \tau(\omega_{N-1}), \tau(\omega_N)= p_{11}, p_{01}[k-\sum_{i=1}^{k-1} l_i]) ~. 
\end{eqnarray}
%
%\begin{align}
%\text{LHS}|_{0,1} &= \overline{\mathcal R} + \mathbb E [R_{\pi^g}(\omega_N = 1,\omega_{-N})] + \beta \cdot \sum_{l_i \in \{0,1\}, i \in \pi_{k-1}}q(l_1,...,l_{k-1}) \cdot \nonumber \\
%&W_{t+1}(p_{11}[\sum l_i],p_{11},\tau(\omega_k) = p_{01},...,\tau(\omega_{N-1}),p_{01}[k-1-\sum l_i]) \\
%\text{RHS}|_{0,1} &=  \mathbb E [R_{\pi^g}(\omega_k = 0,\omega_{-k})] + \beta \cdot \sum_{l_i \in \{0,1\}, i \in \pi_{k-1}}q(l_1,...,l_{k-1}) \cdot \nonumber \\
%&W_{t+1}(p_{11}[\sum l_i],...,\tau(\omega_N) = p_{11},p_{01},p_{01}[k-1-\sum l_i])
%\end{align}
\begin{eqnarray}
\text{LHS}|_{0,1} &\geq& \mathbb E [R_{\pi^g}(1,\omega_1,...,\omega_{N-1})] + \beta \cdot \sum_{\bar{l}_{-k} \in \{0,1\}^{k-1}} q(\bar{l}_{-k}; \omega_{-k})\cdot \nonumber \\
&&(\overline{\mathcal R} + W_{t+1}(p_{11}[\sum_{i=1}^{k-1} l_i+1], \tau(\omega_k) = p_{01}, \cdots, \tau(\omega_{N-1}), p_{01}[k-1-\sum_{i=1}^{k-1} l_i]))\nonumber \\
&\geq& \mathbb E [R_{\pi^g}(0,\omega_1,...,\omega_N)] + \beta \cdot \sum_{\bar{l}_{-k} \in \{0,1\}^k} q(\bar{l}_{-k}; \omega_{-k})\cdot \nonumber \\
&&W_{t+1}(p_{11}[\sum_{i=1}^{k-1} l_i+1], \tau(\omega_{k+1}), \cdots, \tau(\omega_{N-1}), p_{01}[k-1-\sum_{i=1}^{k-1} l_i], \tau(\omega_k)=p_{01}) \nonumber \\
&\geq& \mathbb E [R_{\pi^g}(0,\omega_1,...,\omega_N)] + \beta \cdot \sum_{\bar{l}_{-k} \in \{0,1\}^k} q(\bar{l}_{-k}; \omega_{-k})\cdot \nonumber \\
&&W_{t+1}(p_{11}[\sum_{i=1}^{k-1} l_i], \tau(\omega_{k+1}), \cdots, \tau(\omega_{N-1}), p_{11}, p_{01}[k-\sum_{i=1}^{k-1} l_i]) \nonumber \\
&=& \text{LHS}|_{0,0} ~, 
\end{eqnarray}
where the first inequality is due to Proposition \ref{eqn:prop_increase}, the second due to induction hypothesis of L2 (moving $\tau(\omega_k)=p_{01}$ to the front/left of the list, following by induction hypothesis of L1 (moving $\tau(\omega_k)=p_{01}$ to the end/right of the list), and the third due to the induction hypothesis of L2. 

We have now established the induction step of L1, thus proving L1.  Next we consider L2 at time $t$.  In the case when $j \leq k-1$, both $x$ and $y$ are used by both sides, so $\text{LHS} = \text{RHS}$.  In the case when $j\geq k+1$, neither channel $j$ nor $j+1$ is used.  Thus both sides will return the same one-step reward.  The difference between $x$ and $y$ propagates to the future reward term $W_{t+1}(\cdot)$.  However, due to the fact that $\tau(x)\geq \tau(y)$, using the induction hypothesis of L2 we conclude $\text{LHS}\geq \text{RHS}$.   
%\begin{align}
%\text{LHS}|_{0,1} &\geq \mathbb E [R_{\pi^g}(\omega_N = 1,\omega_{-N})] +\beta \cdot \sum_{l_i \in \{0,1\}, i \in \pi_{k-1}}q(l_1,...,l_{k-1})\cdot \nonumber \\
%&~~(\overline{\mathcal R} + W_{t+1}(p_{11}[\sum l_i],p_{11},\tau(\omega_k) = p_{01},...,\tau(\omega_{N-1}),p_{01}[k-1-\sum l_i])) \nonumber \\
%&\geq \mathbb E[R_{\pi^g}(\omega_k = 0,\omega_{-k})] +\beta \cdot \sum_{l_i \in \{0,1\}, i \in \pi_{k-1}}q(l_1,...,l_{k-1})\cdot \nonumber \\
%& ~~W_{t+1}(p_{11}[\sum l_i],p_{11},...,\tau(\omega_{N-1}),p_{01}[k-1-\sum l_i],,\tau(\omega_k) = p_{01}) \nonumber \\
%&\geq \mathbb E [R_{\pi^g}(\omega_k = 0,\omega_{-k})] +\beta \cdot \sum_{l_i \in \{0,1\}, i \in \pi_{k-1}}q(l_1,...,l_{k-1})\cdot \nonumber \\
%&~~ W_{t+1}(p_{11}[\sum l_i],...,\tau(\omega_N) = p_{11},p_{01},p_{01}[k-1-\sum l_i] )\nonumber \\
%&= \text{RHS}|_{0,1}
%\end{align}
%Proved. \qed
%
%\textbf{Case 1.} $j \leq k-1$
%
%In this case we know $\{j,j+1\} \subseteq \pi^g$, therefore we have $\text{LHS} = \text{RHS}$ hold easily.
%
%\textbf{CASE 2.} $j \geq k+1$
%
%In this case we know $\{j,j+1\} \subseteq \Omega\backslash\{\pi^g\}$, therefore $\pi^g$ will return us the same immediate reward and the difference between $x,y$ will propagate to $W_{t+1}(\cdot)$ by adding effects from $\tau$; and again as we have $\tau(x) \geq \tau(y),\forall x \geq y$, by induction hypothesis we know  $\text{LHS} \geq \text{RHS}$.
%

It remains to check the case $j=k$.  In this case we single out both $x$ and $y$: 
%\textbf{CASE 3.} $j = k$
\begin{eqnarray*}
&& \text{LHS} =\mathbb E [R_{\pi^g}(x, \omega_1,...,\omega_{k+1},...,\omega_N)] \nonumber \\
&& + \beta \{ x\cdot y \underbrace{\sum_{\bar{l}_{-k} \in \{0,1\}^{k-1}} q(\bar{l}_{-k}; \omega_{-k})\cdot 
W_{t+1}(p_{11}[\sum_{k=1}^{k-1} l_i+1], p_{11}, \tau(\omega_{k+2}), \cdots, \tau(\omega_N), p_{01}[k-1-\sum_{i=1}^{k-1} l_i])}_{\rm R1}  \nonumber \\
&& + (1-x)\cdot y \underbrace{\sum_{\bar{l}_{-k} \in \{0,1\}^{k-1}} q(\bar{l}_{-k}; \omega_{-k})\cdot 
W_{t+1}(p_{11}[\sum_{k=1}^{k-1} l_i], p_{11}, \tau(\omega_{k+2}), \cdots, \tau(\omega_N), p_{01}[k-\sum_{i=1}^{k-1} l_i])}_{\rm R2} \nonumber \\
&& + x\cdot (1-y) \underbrace{\sum_{\bar{l}_{-k} \in \{0,1\}^{k-1}} q(\bar{l}_{-k}; \omega_{-k})\cdot 
W_{t+1}(p_{11}[\sum_{k=1}^{k-1} l_i+1], p_{01}, \tau(\omega_{k+2}), \cdots, \tau(\omega_N), p_{01}[k-1-\sum_{i=1}^{k-1} l_i])}_{\rm R3} \nonumber \\
&& + (1-x)\cdot (1-y) \underbrace{\sum_{\bar{l}_{-k} \in \{0,1\}^{k-1}} q(\bar{l}_{-k}; \omega_{-k})\cdot 
W_{t+1}(p_{11}[\sum_{k=1}^{k-1} l_i], p_{01}, \tau(\omega_{k+2}), \cdots, \tau(\omega_N), p_{01}[k-\sum_{i=1}^{k-1} l_i])}_{\rm R4} \}  \label{12l}
\end{eqnarray*}
%& + (1-x)\cdot y \sum_{l_i \in \{0,1\},i \in \pi_{k-1}}q(l_1,...,l_k)\cdot W_{t+1}(p_{11}[\sum l_i],p_{11},...,p_{01},p_{01}[k-1-\sum l_i]) \nonumber \\
%&+ x\cdot (1-y) \sum_{l_i \in \{0,1\},i \in \pi_{k-1}}q(l_1,...,l_k)\cdot W_{t+1}(p_{11}[\sum l_i],p_{11},p_{01},...,p_{01}[k-1-\sum l_i]) \nonumber \\
%&+ (1-x)\cdot (1-y) \sum_{l_i \in \{0,1\},i \in \pi_{k-1}}q(l_1,...,l_k)\cdot W_{t+1}(p_{11}[\sum l_i],p_{01},...,p_{01},p_{01}[k-1-\sum l_i])\} \nonumber \\
%\label{12l}
%\end{align}
Similarly, 
\begin{eqnarray*}
&& \text{RHS} =\mathbb E [R_{\pi^g}(y, \omega_1,...,\omega_{k},...,\omega_N)] \nonumber \\
&& + \beta \{ x\cdot y \underbrace{\sum_{\bar{l}_{-k} \in \{0,1\}^{k-1}} q(\bar{l}_{-k}; \omega_{-k})\cdot 
W_{t+1}(p_{11}[\sum_{k=1}^{k-1} l_i+1], p_{11}, \tau(\omega_{k+2}), \cdots, \tau(\omega_N), p_{01}[k-1-\sum_{i=1}^{k-1} l_i])}_{\rm R1}  \nonumber \\
&& + (1-x)\cdot y \underbrace{\sum_{\bar{l}_{-k} \in \{0,1\}^{k-1}} q(\bar{l}_{-k}; \omega_{-k})\cdot 
W_{t+1}(p_{11}[\sum_{k=1}^{k-1} l_i+1], p_{01}, \tau(\omega_{k+2}), \cdots, \tau(\omega_N), p_{01}[k-1-\sum_{i=1}^{k-1} l_i])}_{\rm R3} \nonumber \\
&& + x\cdot (1-y) \underbrace{\sum_{\bar{l}_{-k} \in \{0,1\}^{k-1}} q(\bar{l}_{-k}; \omega_{-k})\cdot 
W_{t+1}(p_{11}[\sum_{k=1}^{k-1} l_i], p_{11}, \tau(\omega_{k+2}), \cdots, \tau(\omega_N), p_{01}[k-\sum_{i=1}^{k-1} l_i])}_{\rm R2} \nonumber \\
&& + (1-x)\cdot (1-y) \underbrace{\sum_{\bar{l}_{-k} \in \{0,1\}^{k-1}} q(\bar{l}_{-k}; \omega_{-k})\cdot 
W_{t+1}(p_{11}[\sum_{k=1}^{k-1} l_i], p_{01}, \tau(\omega_{k+2}), \cdots, \tau(\omega_N), p_{01}[k-\sum_{i=1}^{k-1} l_i])}_{\rm R4} \}  \label{12l}
\end{eqnarray*}

%\begin{align}
%\text{RHS} &= \mathbb E[R_{\pi^g}(\omega_k = y, \omega_{-k}) ]\nonumber \\
%& + \beta \cdot \{x\cdot y \sum_{l_i \in \{0,1\},i \in \pi_{k-1}}q(l_1,...,l_k)\cdot W_{t+1}(p_{11}[\sum l_i],p_{11},p_{11},...,p_{01}[k-1-\sum l_i]) \nonumber \\
%& + (1-x)\cdot y \sum_{l_i \in \{0,1\},i \in \pi_{k-1}}q(l_1,...,l_k)\cdot W_{t+1}(p_{11}[\sum l_i],p_{11},p_{01},...,p_{01}[k-1-\sum l_i]) \nonumber \\
%&+ x\cdot (1-y) \sum_{l_i \in \{0,1\},i \in \pi_{k-1}}q(l_1,...,l_k)\cdot W_{t+1}(p_{11}[\sum l_i],p_{11},...,p_{01},p_{01}[k-1-\sum l_i]) \nonumber \\
%&+ (1-x)\cdot (1-y) \sum_{l_i \in \{0,1\},i \in \pi_{k-1}}q(l_1,...,l_k)\cdot W_{t+1}(p_{11}[\sum l_i],p_{01},...,p_{01},p_{01}[k-1-\sum l_i])\} \nonumber \\
%\label{12r}
%\end{align}
Thus we have 
\begin{eqnarray}
&& \text{LHS} - \text{RHS} = \mathbb E[R_{\pi^g}(x, \omega_{-k})] -\mathbb E[R_{\pi^g}(y, \omega_{-k})] + \beta (x-y) (R3-R2) \nonumber \\
&= & (x-y) (\mathbb E[R_{\pi^g}(1, \omega_{-k})] -\mathbb E[R_{\pi^g}(0, \omega_{-k})]) + 
\beta (x-y) \sum_{\bar{l}_{-k} \in \{0,1\}^{k-1}} q(\bar{l}_{-k}; \omega_{-k})\cdot  \nonumber \\
&& \left(W_{t+1}(p_{11}[\sum_{k=1}^{k-1} l_i+1], p_{01}, \tau(\omega_{k+2}), \cdots, \tau(\omega_N), p_{01}[k-1-\sum_{i=1}^{k-1} l_i]) \right.\nonumber \\
&& ~~~ \left. -
W_{t+1}(p_{11}[\sum_{k=1}^{k-1} l_i], p_{11}, \tau(\omega_{k+2}), \cdots, \tau(\omega_N), p_{01}[k-\sum_{i=1}^{k-1} l_i]) \right) \nonumber \\
&\geq & (x-y)\underline{\mathcal R} + \beta (x-y) \sum_{\bar{l}_{-k} \in \{0,1\}^{k-1}} q(\bar{l}_{-k}; \omega_{-k})\cdot \nonumber \\
&& \left(W_{t+1}(p_{01}, p_{11}[\sum_{k=1}^{k-1} l_i+1], \tau(\omega_{k+2}), \cdots, \tau(\omega_N), p_{01}[k-1-\sum_{i=1}^{k-1} l_i]) \right. \nonumber \\
&& ~~~ \left. -
W_{t+1}(p_{11}[\sum_{k=1}^{k-1} l_i], p_{11}, \tau(\omega_{k+2}), \cdots, \tau(\omega_N), p_{01}[k-\sum_{i=1}^{k-1} l_i]) \right) \nonumber \\
&\geq & (x-y) \underline{\mathcal R} - \beta(x-y) \overline{\mathcal R} ~, 
\end{eqnarray}
where the first inequality is due to the definition of $\underline{\mathcal R}$ and the use of the induction hypothesis of L2, and the second inequality due to the induction hypothesis of L1. 
Therefore if %$\beta$ is such that 
%$(x-y)\cdot \underline{\mathcal R} \geq (x-y)\cdot \beta \cdot \overline{\mathcal R}
%i.e., 
$\beta \leq   \underline{\mathcal R}/\overline{\mathcal R} \label{12}$, then we will have LHS $\geq$ RHS, completing the induction step of L2. 
} 

%\begin{comment}
%or (2):
%\begin{align}
%(x-y)\cdot \underline{R} \geq (x-y)\cdot \beta \cdot (r+\beta \cdot \overline{R}) \cdot (p_{11}-p_{01})
%\end{align}
%or a even looser bound
%\begin{align}
%(x-y)\cdot \underline{R} \geq (x-y)\cdot \beta \cdot (r+\overline{R}) \cdot (p_{11}-p_{01})
%\end{align}
%i.e.,
%\begin{align}
%\beta \leq  \max \{\frac{1}{p_{11}-p_{01}}\cdot \frac{\underline{R}}{r+\overline{R}}, \frac{\underline{R}}{\overline{R}}\} \label{12}
%\end{align}
%\end{comment}

\begin{comment}
\textbf{CASE 2.} $j \neq 1, i.e., \omega_1 \neq x$

Suppose the optimal strategy(except for $\omega_1$) for RHS is $\pi_{k-1}$ (again). Consider the following two sub-cases.

\emph{case 2.1 $\{x,y\} \subseteq \pi_{k-1} $}
\begin{align}
\text{LHS} \geq W^{\{1\} \cup \pi_{k-1}}_{t} = \text{RHS}
\end{align}
\emph{case 2.2 $|\{x,y\} \cap \pi_{k-1}| = 1$}

Without losing generality let us assume the only intersection is $x$.
\begin{align}
\text{LHS} \geq W^{\{1\} \cup (\pi_{k-1}\backslash \{j+1\}) \cup \{j\}}_{t} = \text{RHS}
\end{align}
Proved. \qed

\emph{case 2.3 $\{x,y\} \cap \pi_{k-1} = \emptyset$}

Under this case
\begin{align}
\text{LHS} \geq W^{\{1\} \cup \pi_{k-1}}_{t} \geq \text{RHS}
\end{align}
The second inequality is due to the induction hypothesis and linearity of update function $\tau$. \qed
\end{comment}

%proof of finite lemma 7
\subsection{Proof of Lemma \ref{lem:key2}}

The three inequalities L3, L4 and L5 are shown together using an induction on $t$. 

{\em Induction basis:} At time $T$, L3 becomes 
%and L4 hold due to a similar reason given in the proof of Lemma \ref{lem:key} and the definition of $\gamma$. \rev{To be specific, at $T$ L3 becomes
%\begin{align}
$\gamma + \mathbb E[R_{\pi^g}(\omega_1, \cdots, \omega_{N}, \omega_{1})] \geq  \mathbb E[R_{\pi^g}(\omega_1, \cdots, \omega_N)]$. 
%\end{align}
This holds because 
\begin{eqnarray}
&& \mathbb E[R_{\pi^g}(\omega_1, \cdots, \omega_{N})] -  \mathbb E[R_{\pi^g}(\omega_2, \cdots, \omega_N, \omega_1)] \nonumber \\
&\leq& \mathbb E[R_{\pi^g}(\omega_1=1, \cdots, \omega_{N})]-\mathbb E[R_{\pi^g}(\omega_2, \cdots, \omega_{k+1}=0,\omega_N, \omega_1)] \nonumber \\
&\leq& \overline{\mathcal R} \leq \frac{\overline{\mathcal R}}{1-\beta} = \gamma. 
\end{eqnarray}
%and $\gamma = \frac{\overline{\mathcal R}}{1-\beta} \geq \mathcal R$. 
Similarly, L4 holds at time $T$ because 
\begin{eqnarray}
&& \mathbb E[R_{\pi^g}(\omega_1, \cdots, \omega_{N})] -  \mathbb E[R_{\pi^g}(\omega_N, \omega_1, \cdots, \omega_{N-1})] \nonumber \\
&\leq& \mathbb E[R_{\pi^g}(\omega_1, \cdots, \omega_k=1, \cdots, \omega_{N})]-\mathbb E[R_{\pi^g}(\omega_N=0, \omega_1, \cdots,\omega_{N-1})] \nonumber \\
&\leq& \overline{\mathcal R} \leq \frac{\overline{\mathcal R}}{1-\beta} = \gamma. 
\end{eqnarray}
%
%\com{May need to show this...} 
L5 holds at $T$ due to the increasing property (Proposition \ref{eqn:prop_increase}) of the expected one-step reward. Assume L3, L4 and L5 hold for times $t+1, \cdots, T$. 

{\em Induction step:} We will again employ a sample-path argument conditioned on the outcome of sensing specific channels.  
Consider first L3. At time $t$, the LHS selects channels $\{2,3, \cdots, k+1\}$ while the RHS selects channels $\{1, \cdots, k\}$.  Thus the two sides differ only in channels $\{1, k+1\}$. %We follow a similar denotation as in the proof of Lemma \ref{lem:key} without stating explicitly. 

\textbf{Case 1.}  $(1, k+1) = (``0'', ``0'')$: both channels $1$ and $k+1$ have state realization ``0''. 
In this case 
\begin{align}
\text{LHS}|_{0,0} &= \gamma + \mathbb E [R_{\pi^g}(0,\omega_2, \cdots, \omega_k, \omega_{k+1}, \cdots, \omega_N, \omega_1)] + \beta\cdot \sum_{\bar{l}_{-1} \in \{0,1\}^{k-1}}q(\bar{l}_{-1}; \omega_{-1})\cdot \nonumber \\
&W_{t+1}(p_{01}[k-\sum_{i=2}^{k} l_i], \tau(\omega_1) = p_{01}, \tau(\omega_N), \cdots, \tau(\omega_{k+2}), p_{11}[\sum_{i=2}^{k} l_i]) \nonumber \\
\text{RHS}|_{0,0} &=  \mathbb E [R_{\pi^g}(0,\omega_{2},...,\omega_{N})] + \beta\cdot \sum_{\bar{l}_{-1} \in \{0,1\}^{k-1}}q(\bar{l}_{-1}; \omega_{-1})\cdot \nonumber \\
&W_{t+1}(p_{01}[k-\sum_{i=2}^{k} l_i],\tau(\omega_N), \cdots, \tau(\omega_{k+1}), \tau(\omega_{k+1}) = p_{01},p_{11}[\sum_{i=2}^{k} l_i])
\end{align}
By the induction hypothesis of L5 we have $\text{LHS} \geq \text{RHS}$.

\textbf{Case 2.} $(1, k+1) = (``1'', ``0'')$: channel $1$ has state realization ``1'' and channel $k+1$ ``0''. 
In this case
\begin{align}
\text{LHS}|_{1,0} &= \gamma +\mathbb E [R_{\pi^g}(0,\omega_2,\omega_3,...,\omega_1)] + \beta\cdot\sum_{\bar{l}_{-1} \in \{0,1\}^{k-1}}q(\bar{l}_{-1}; \omega_{-1}) \cdot \nonumber \\
&W_{t+1}(p_{01}[k-\sum_{i=2}^{k} l_i],\tau(\omega_1) = p_{11}, \tau(\omega_N), ...,\tau(\omega_{k+2}), p_{11}[\sum_{i=2}^{k} l_i]) \nonumber \\
\text{RHS}|_{1,0} &= \mathbb E [R_{\pi^g}(1,\omega_{2},...,\omega_{N})] + \beta\cdot \sum_{\bar{l}_{-1} \in \{0,1\}^{k-1}}q(\bar{l}_{-1}; \omega_{-1})\cdot \nonumber \\
&W_{t+1}(p_{01}[k-1-\sum_{i=2}^{k} l_i],\tau(\omega_N),...,\tau(\omega_{k+1}) = p_{01},p_{11}[\sum_{i=2}^{k} l_i + 1])
\end{align}
Since $\gamma = \frac{1}{1-\beta}\cdot \overline{\mathcal R} = \overline{\mathcal R} + \beta\cdot \gamma$, we have 
\begin{eqnarray}
\text{LHS}|_{1,0} &=& \overline{\mathcal R} +\mathbb E [R_{\pi^g}(0,\omega_2,\omega_3,...,\omega_1)] + \beta\cdot\sum_{\bar{l}_{-1} \in \{0,1\}^{k-1}}q(\bar{l}_{-1}; \omega_{-1}) \cdot \nonumber \\
&& \left(\gamma + W_{t+1}(p_{01}[k-\sum_{i=2}^{k} l_i],\tau(\omega_1) = p_{11}, ...,p_{11}[\sum_{i=2}^{k} l_i]) \right) \nonumber \\
 &\geq& \mathbb E [R_{\pi^g}(1,\omega_{2},...,\omega_{N})] + \beta\cdot\sum_{\bar{l}_{-1} \in \{0,1\}^{k-1}}q(\bar{l}_{-1}; \omega_{-1}) \cdot \nonumber \\
 && \left(\gamma  + W_{t+1}(p_{11}, p_{01}[k-1-\sum_{i=2}^{k} l_i],...,p_{01}, p_{11}[\sum_{i=2}^{k} l_i]) \right) \nonumber \\
 %&= \gamma + W_{t+1}(p_{11},p_{01}[k-1-\sum_{i=2}^{k} l_i],...,\tau(\omega_{k+1}) = p_{01},p_{11}[\sum_{i=2}^{k} l_i]) \nonumber \\
 &\geq& \mathbb E [R_{\pi^g}(1,\omega_{2},...,\omega_{N})] + \beta\cdot\sum_{\bar{l}_{-1} \in \{0,1\}^{k-1}}q(\bar{l}_{-1}; \omega_{-1}) \cdot \nonumber \\
&& W_{t+1}(p_{01}[k-1-\sum_{i=2}^{k} l_i],..., p_{01}, p_{11}[\sum_{i=2}^{k} l_i+1])
\nonumber \\
&= & \text{RHS}|_{1,0} , 
\end{eqnarray}
where the first inequality is due to the definition of $\overline{\mathcal R}$ and the use of the induction hypothesis of L5 and the second inequality is due to the induction hypothesis of L4. 
%Therefore we have $\text{LHS}|_{1,0} \geq \text{RHS}|_{1,0}$.

\textbf{Case 3. } $(1, k+1)=(``0'', ``1'')$: channels $1$ and $k+1$ have realizations ``0'' and ``1'', respectively.  We have
\begin{align}
\text{LHS}|_{0,1} &= \gamma + \mathbb E [R_{\pi^g}(1,\omega_2, ..., \omega_k, \omega_{k+1}, ...,\omega_N, \omega_1)] + \beta\cdot \sum_{\bar{l}_{-1} \in \{0,1\}^{k-1}}q(\bar{l}_{-1}; \omega_{-1})\cdot \nonumber \\
&W_{t+1}(p_{01}[k-1-\sum_{i=2}^{k} l_i],\tau(\omega_1) = p_{01}, \tau(\omega_N), ..., \tau(\omega_{k+2}), p_{11}[\sum_{i=2}^{k} l_i+1]) \nonumber \\
\text{RHS}|_{0,1} &=  \mathbb E [R_{\pi^g}(0,\omega_{2},...,\omega_{N})] + \beta\cdot  \sum_{\bar{l}_{-1} \in \{0,1\}^{k-1}}q(\bar{l}_{-1}; \omega_{-1})\cdot \nonumber \\
&W_{t+1}(p_{01}[k-\sum_{i=2}^{k} l_i], \tau(\omega_N), ...,\tau(\omega_{k+1}) = p_{11}, p_{11}[\sum_{i=2}^{k} l_i]) %\nonumber \\
%& = \mathbb E [R_{\pi^g}(0,\omega_{2},...,\omega_{N})] + \text{LHS}|_{0,1} - 
%\gamma - \mathbb E [R_{\pi^g}(1,\omega_2, ..., \omega_k, \omega_{k+1}, ...,\omega_N, \omega_1)] \nonumber \\
\end{align} 
Since the second part of both $\text{LHS}|_{0,1}$ and $\text{RHS}|_{0,1}$ are identical, we have 
$\text{LHS}|_{0,1} \geq \text{RHS}|_{0,1}$ using the definition of $\gamma$ and $\overline{\mathcal R}$.

\textbf{Case 4.} $(1, k+1) = (``1'', ``1'')$: both channels have state realization ``1''.  In this case 
\begin{align}
\text{LHS}|_{1,1} &= \gamma + \mathbb E [R_{\pi^g}(1,\omega_2,..., \omega_k, \omega_{k+1}, ..., \omega_N, \omega_1)] + \beta\cdot  \sum_{\bar{l}_{-1} \in \{0,1\}^{k-1}}q(\bar{l}_{-1}; \omega_{-1})\cdot \nonumber \\
&W_{t+1}(p_{01}[k-1-\sum_{i=2}^{k} l_i],\tau(\omega_1) = p_{11}, \tau(\omega_N), ..., \tau(\omega_{k+1}), p_{11}[\sum_{i=2}^{k} l_i+1]) 
\nonumber \\
\text{RHS}|_{1,1} &= \mathbb E [R_{\pi^g}(1,\omega_{2},...,\omega_{N})] + \beta\cdot \sum_{\bar{l}_{-1} \in \{0,1\}^{k-1}}q(\bar{l}_{-1}; \omega_{-1})\cdot \nonumber \\
&W_{t+1}(p_{01}[k-1-\sum_{i=2}^{k} l_i], \tau(\omega_N), ..., \tau(\omega_{k+1}) = p_{11}, p_{11}[\sum_{i=2}^{k} l_i +1])
\end{align}
Using a similar method as in Case 2, $\text{LHS}|_{1,1} \geq \text{RHS}|_{1,1}$ holds because 
\begin{align}
& \gamma + W_{t+1}(p_{01}[k-1-\sum_{i=2}^{k} l_i],\tau(\omega_1) = p_{11}, \tau(\omega_N), ..., \tau(\omega_{k+2}), p_{11}[\sum_{i=2}^{k} l_i+1])\nonumber \\
 &\geq \gamma + W_{t+1}(\tau(\omega_1) = p_{11},p_{01}[k-1-\sum_{i=2}^{k} l_i], \tau(\omega_N), ..., \tau(\omega_{k+2}), p_{11}[\sum_{i=2}^{k} l_i+1])\nonumber \\
 &\geq W_{t+1}(p_{01}[k-1-\sum_{i=2}^{k} l_i], \tau(\omega_N), ..., \tau(\omega_{k+2}), \omega_{1} = p_{11},p_{11}[\sum_{i=2}^{k} l_i]), 
\end{align}
using the induction hypothesis of L5 and L4, respectively.  
L3 is thus proven. 

L4 can be shown in the same way L3 is proven above, while L5 can be shown in the same way L2 was proven in Lemma \ref{lem:key}; the details are thus omitted. 

\subsection{Proof of Lemma \ref{lem:boundedness}}
The lower bound is trivial as for finite time horizon problem the monotonicity is already proven in Lemma \ref{lem:key}, L2 time uniformly and we know it can be extended to the infinite horizon problem by simply taking the limitation. 
% the monotonicity for finite horizon problem with $p_{11} \geq p_{01}$.
We prove the upper bound by induction on $t$.  Before proceeding, we note that by definition $\Delta_T = \overline{\mathcal R}$ and $\Delta_t = \overline{\mathcal R} (1+\beta(p_{11}-p_{01}) \sum_{i=0}^{T-t-1} \beta^i \cdot (p_{11}-p_{01})^i) = \overline{\mathcal R} + \beta(p_{11}-p_{01}) \Delta_{t+1}$. 

At time $T$ there are two cases.

\textbf{Case 1.} $j > k$. In this case $W_t(\omega_1,..., \omega_{j-1}, x, \omega_{j+1}, ...,\omega_N) = W_t(\omega_1,..., \omega_{j-1}, y, \omega_{j+1}, ...,\omega_N)$ while $\Delta_T \geq 0$, so the inequality holds. 

\textbf{Case 2.} $j \leq k$. In this case we have 
\begin{eqnarray}
W_t(\omega_1,..., \omega_{j-1}, x, \omega_{j+1}, ...,\omega_N) - W_t(\omega_1,..., \omega_{j-1}, y, \omega_{j+1}, ...,\omega_N) \leq (x-y)\cdot \overline{\mathcal R} = (x-y) \cdot \Delta_T. 
\end{eqnarray}
%Therefore an uniform bound for $\Delta_T$ is $\Delta_T = \overline{\mathcal R}$.

Now suppose the equalities hold for times $t+1,...,T-1$. Consider time $t$.  Again we have two cases.

\textbf{Case 1.} $j > k$. In this case the immediate reward does not differ between the two belief states as under both the set of sensed channels is identical; we thus have
\begin{eqnarray} 
&& W_t(\omega_1,..., \omega_{j-1}, x, \omega_{j+1}, ...,\omega_N) - W_t(\omega_1,..., \omega_{j-1}, y, \omega_{j+1}, ...,\omega_N) \nonumber \\
&=& \beta \cdot \sum_{\bar{l} \in \{0,1\}^k}q(\bar{l}; \omega) \cdot \left(W_{t+1}(p_{11}[\sum_{i=1}^k l_i],...,\tau(x),...,p_{01}[k-\sum_{i=1}^k l_i]) \right. \nonumber \\ 
&& ~~- \left. W_{t+1}(p_{11}[\sum_{i=1}^k l_i],...,\tau(y),...,p_{01}[k-\sum_{i=1}^k l_i])\right) \nonumber \\
&\leq&  \beta \cdot (\tau(x) - \tau(y))\cdot \Delta_{t+1} \nonumber \\
&=&  (x-y)\cdot \beta \cdot (p_{11}-p_{01})\cdot \Delta_{t+1} \nonumber \\
&\leq& (x-y) (\Delta_t - \overline{\mathcal R}) \leq (x-y) \Delta_t ~. 
\end{eqnarray}
where the first inequality is due to the induction hypothesis.

\textbf{Case 2.} $j \leq k$. In this case we have 
\begin{eqnarray}
&& W_t(\omega_1,..., \omega_{j-1}, x, \omega_{j+1}, ...,\omega_N) - W_t(\omega_1,..., \omega_{j-1}, y, \omega_{j+1}, ...,\omega_N) \nonumber \\
& = & \mathbb E [R_{\pi_g}(x,\omega_{-j})] - \mathbb E [R_{\pi_g}(y,\omega_{-j})] \nonumber \\
&& + (x-y) \cdot \beta
\cdot \sum_{\bar{l}_{-j} \in \{0,1\}^{k-1}}q(\bar{l}_{-j}; \omega_{-j})\cdot W_{t+1}(p_{11}[\sum_{i=1, i\neq j}^{k} l_i],p_{11},...,p_{01}[k-1-\sum_{i=1, i\neq j}^{k} l_i])\nonumber \\
&& +((1-x)-(1-y)) \cdot \beta \cdot \sum_{\bar{l}_{-j} \in \{0,1\}^{k-1}}q(\bar{l}_{-j}; \omega_{-j})\cdot W_{t+1}(p_{11}[\sum_{i=1, i\neq j}^{k} l_i],...,p_{01},p_{01}[k-1-\sum_{i=1, i\neq j}^{k}
l_i]) \nonumber \\
&\leq& (x-y)\overline{\mathcal R} + (x-y)\cdot \beta \cdot (p_{11}-p_{01})\cdot \Delta_{t+1} \nonumber\\
&=& (x-y)\cdot \Delta_t ~. 
%&W_t(\omega_1,...,\underbrace{y}_{\omega_j},...,\omega_N) =\mathbb E [R_{\pi_g}(y,\omega_{-j})] \nonumber \\
%&+ y \cdot \beta \cdot \sum_{\bar{l}_{-j} \in \{0,1\}^{k-1}}q(\bar{l}_{-j}; \omega_{-j})\cdot (W_{t+1}(p_{11}[\sum_{i=1, i\neq j}^{k} l_i],p_{11},...,p_{01}[k-1-\sum_{i=1, i\neq j}^{k} l_i])\nonumber \\ &+(1-y) \cdot \beta \cdot
%\sum_{\bar{l}_{-j} \in \{0,1\}^{k-1}}q(\bar{l}_{-j}; \omega_{-j})\cdot (W_{t+1}(p_{11}[\sum_{i=1, i\neq j}^{k} l_i],...,p_{01},p_{01}[k-1-\sum_{i=1, i\neq j}^{k} l_i])
\end{eqnarray}
%therefore we have
%\begin{align}
%&W_t(\omega_1,...,\underbrace{x}_{\omega_j},...,\omega_N) -
%W_t(\omega_1,...,\underbrace{y}_{\omega_j},...,\omega_N) \nonumber \\
%&\leq (x-y)\overline{\mathcal R} + (x-y)\cdot \beta \cdot (p_{11}-p_{01})\cdot
%\Delta_{t+1}
%\end{align}
%and clearly the recursive updating function is given as following
%\begin{align} \Delta_t = \overline{\mathcal R} + \beta \cdot
%(p_{11}-p_{01}) \cdot \Delta_{t+1}
%\end{align}
%As we know $\Delta_T = \overline{\mathcal R}$ we get
%\begin{align} 
%\Delta_t = \overline{\mathcal R}
%\cdot \sum_{i = 1}^{T-t} \beta^i \cdot (p_{11}-p_{01})^i
%\end{align} 
This completes the induction step, thus proving the lemma.

\subsection{Proof of Lemma \ref{lem:inf1}}
%To enable later analysis we first introduce the following lemma. 
%Denote the value function by following myopic sensing strategy with information state $\omega$ as $W(\omega)$. Then it follows the following dynamic equation.
%We have following dynamic equation 
%\begin{eqnarray}
%W^{\infty}(\boldsymbol{\omega}) &=& \mathbb E [R_{\pi^g}(\boldsymbol{\omega})] + \beta \cdot \sum \text{Pr}^m(\boldsymbol{\omega^{'}} |\boldsymbol{\omega})\cdot W^{\infty}(\boldsymbol{\omega^{'}} ) 
%\end{eqnarray}
%Here $\text{Pr}^m(\boldsymbol{\omega^{'}} |\boldsymbol{\omega})$ is the transition probability under myopic sensing. Next define a single element one step deviation policy as $\pi^d$ (Note here $\pi^d$ is one step action which is a subset of $\Omega$), i.e. there is exactly one element of $\pi^d$ which is different from $\pi^g$. Denote this element (in $\pi^d$) as $\omega_j$ and its corresponding one (deviate from) in $\pi^g$ as $\omega_i$ and notice that $j > i$.  We prove this $\forall i < N$ and $j=i+1$. Then following a popping sort procedure we claim our results hold for all possible pair $(i,j)$. 

For a descending-ordered belief vector $\omega=(\omega_1, \cdots, \omega_i, \cdots, \omega_k, \cdots, \omega_j, \cdots, \omega_N)$, the greedy policy $\pi^g$ selects the first $k$ elements/channels.  Now consider the following {\em simple} deviation policy $\pi^d$ that selects channels $1, \cdots, i-1, j, i+1, \cdots, k$, where $j>k$. In other words, $\pi^d$ differs from $\pi^g$ in exactly one element: $\omega_j$ instead of $\omega_i$. The one-step deviation produced by this policy is given by 
\begin{eqnarray}
V^{d, \infty}(\omega) = W^\infty(\omega_1, \cdots, \omega_{i-1}, \omega_j, \omega_{i+1}, \cdots, \omega_k, \omega_i, \omega_{k+1}, \cdots, \omega_N)
\end{eqnarray}
since by the definition of $W^\infty(\cdot)$, the RHS operates in exactly the same way as the LHS: it selects the set of channels $1, \cdots, i-1, j, i+1, \cdots, k$, followed by selecting greedily thereafter (note the set of unselected elements are now descending-ordered).  

Our first step is to show that under the stated sufficient condition, we have 
\begin{eqnarray}\label{eqn:infinite_positive_one_step} 
&& W^\infty(\omega_1, \cdots, \omega_{i-1}, \omega_j, \omega_{i+1}, \cdots, \omega_k, \omega_i, \omega_{k+1}, \cdots, \omega_N) \nonumber \\
&\leq& W^\infty(\omega_1, \cdots, \omega_{i-1}, \omega_i, \omega_{i+1}, \cdots, \omega_k, \omega_{k+1}, \cdots, \omega_{j-1}, \omega_j, \omega_{j+1}, \cdots, \omega_N)~, 
\end{eqnarray}
i.e., $\pi^d$ is not a profitable one-step deviation.  We then use this result to show that deviations involving multiple different selections are also not profitable under the same condition, thus proving the lemma. 

To show (\ref{eqn:infinite_positive_one_step}), it suffices to show each of the following chain of (in)equalities under the stated condition: 
\begin{eqnarray}
&& W^\infty(\omega_1, \cdots, \omega_{i-1}, \omega_j, \omega_{i+1}, \cdots, \omega_k, \omega_i, \omega_{k+1}, \cdots, \omega_N) \nonumber \\
&\displaystyle_{=}^{(1)}& W^\infty(\omega_1, \cdots, \omega_{i-1}, \omega_{i+1}, \cdots, \omega_k, \omega_j, \omega_i, \omega_{k+1}, \cdots, \omega_N)  \label{c1}\\
& \displaystyle_{\leq}^{(2)}& W^\infty(\omega_1, \cdots, \omega_{i-1}, \omega_{i+1}, \cdots, \omega_k, \omega_i, \omega_j, \omega_{k+1}, \cdots, \omega_N)  \label{c2}\\
&\displaystyle_{=}^{(1)}& W^\infty(\omega_1, \cdots, \omega_{i-1}, \omega_i, \omega_{i+1}, \cdots, \omega_k, \omega_j, \omega_{k+1}, \cdots, \omega_N)   \label{c3}\\
&\displaystyle_{\leq}^{(3)}& W^\infty(\omega_1, \cdots, \omega_{i-1}, \omega_i, \omega_{i+1}, \cdots, \omega_k, \omega_{k+1}, \omega_j, \cdots, \omega_N)   \label{c4} \\
&\displaystyle_{\leq}^{(3)}& ... \nonumber \\
&\displaystyle_{\leq}^{(3)}& W^\infty(\omega_1, \cdots, \omega_{i-1}, \omega_i, \omega_{i+1}, \cdots, \omega_k, \omega_{k+1}, \cdots, \omega_{j-1}, \omega_j, \omega_{j+1}, \cdots, \omega_N) \label{c5} 
\end{eqnarray} 
%
%We first sketch the proof here. For an arbitrary pair $(i,j)$ such that $i <k < j$ by choosing the $j_{th}$ channel instead the $i_{th}$ will not increase the reward. Use $(1,2,...,N)$ to denote the expected reward (or value function) from sensing the first $k$ of $(1,2,...,N)$\com{starting from any ordered information state at any time $t$}. Then we need to show  
%\begin{align}
%& (1,2,...,j,...,k,i,k+1,...,N) \nonumber \\
%&\displaystyle_{=}^{(1)} (1,2,...,k,j,i,k+1,...,i,...,N) \label{c1}\\
%& \displaystyle_{\leq}^{(2)} (1,2,...,k,i,j,k+1,...,i,...,N) \label{c2}\\
%&\displaystyle_{=}^{(1)} (1,2,...i,...,k,j,k+1,...,i,...,N) \label{c3}\\
%&\displaystyle_{\leq}^{(3)} (1,2,...,k, k+1,j,...,N) \label{c4} \\
%&\displaystyle_{\leq}^{(3)} ... = (1,2,...,k, k+1,...,j,...,N) \label{c5} 
%\end{align}
Note that in each step above the comparison is between switching a neighboring pair of elements.  More specifically, there are three cases: {\bf Case 1} (equalities labeled (1)) involves switching a pair both among the first $k$ elements in the ordered belief vector; {\bf Case 2} (inequality labeled (2)) involves switching a pair at the $k$th and $(k+1)$th positions; {\bf Case 3}  (inequalities labeled (3)) involves switching a pair both outside the first $k$ positions.  These three cases are shown separately below. 
%Here we will use results from \textbf{CASE 1.} to establish the equations numbered with (1), \textbf{CASE 2.} for (2), \textbf{CASE 3.} for (3) respectively. 
%We discuss in the following cases.

\textbf{Case 1.} %$i < k$
When both are within the first $k$ elements, there is no difference in either the immediate rewards (both are selected) or the future rewards, so the equality holds trivially. 
%There is nothing different between the two policies. Therefore our claim holds.

\textbf{Case 2.} %$i = k$ %Now consider the case with $i=k$. 
For a given belief vector \rev{(not necessarily descending ordered)} $\omega=(\omega_1, \cdots, \omega_{k}, \omega_{k+1}, \cdots, \omega_N)$ where $\omega_k\geq \omega_{k+1}$, we now compare the difference when switching the order between $\omega_k$ and $\omega_{k+1}$. 
\begin{eqnarray}
&& W^{\infty}(\omega_1, \cdots, \omega_{k}, \omega_{k+1}, \cdots, \omega_N) = \mathbb E  [R_{\pi^g}(\omega_{1},...,\omega_{k-1},\omega_k,\omega_{k+1},...,\omega_{N})] \nonumber \\
&+& \omega_k \cdot \omega_{k+1} \sum_{\bar{l}_{-k} \in \{0,1\}^{k-1}} q(\bar{l}_{-k}; \omega_{-k})\cdot W^{\infty}(p_{11}[\sum_{j=1}^{k-1} l_j], p_{11},p_{11},...,p_{01}[k-1-\sum_{j=1}^{k-1} l_j]) \nonumber \\
 &+&\omega_k\cdot(1- \omega_{k+1}) \sum_{\bar{l}_{-k} \in \{0,1\}^{k-1}} q(\bar{l}_{-k}; \omega_{-k})\cdot W^{\infty}(p_{11}[\sum_{j=1}^{k-1} l_j], p_{11},p_{01},...,p_{01}[k-1-\sum_{j=1}^{k-1} l_j]) \nonumber \\
&+&(1-\omega_k)\cdot \omega_{k+1} \sum_{\bar{l}_{-k} \in \{0,1\}^{k-1}} q(\bar{l}_{-k}; \omega_{-k})\cdot W^{\infty}(p_{11}[\sum_{j=1}^{k-1} l_j], p_{11},...,p_{01}[k-1-\sum_{j=1}^{k-1} l_j],p_{01}) \nonumber \\
&+&(1-\omega_k)\cdot (1-\omega_{k+1}) \sum_{\bar{l}_{-k} \in \{0,1\}^{k-1}} q(\bar{l}_{-k}; \omega_{-k})\cdot W^{\infty}(p_{11}[\sum_{j=1}^{k-1} l_j]], p_{01},...,p_{01}[k-1-\sum_{j=1}^{k-1} l_j]],p_{01}) \nonumber 
\end{eqnarray}
and by switching we have
\begin{eqnarray}
&& W^{\infty}(\omega_1, \cdots, \omega_{k+1}, \omega_{k}, \cdots, \omega_N) = \mathbb E [R_{\pi^g}(\omega_{1},...,\omega_{k-1},\omega_{k+1},\omega_k,...,\omega_{N})] \nonumber \\
&+& \omega_k \cdot \omega_{k+1} \sum_{\bar{l}_{-k} \in \{0,1\}^{k-1}} q(\bar{l}_{-k}; \omega_{-k})\cdot W^{\infty}(p_{11}[\sum_{j=1}^{k-1} l_j], p_{11},p_{11},...,p_{01}[k-1-\sum_{j=1}^{k-1} l_j]) \nonumber \\
 &+&\omega_k \cdot(1- \omega_{k+1}) \sum_{\bar{l}_{-k} \in \{0,1\}^{k-1}} q(\bar{l}_{-k}; \omega_{-k})\cdot W^{\infty}(p_{11}[\sum_{j=1}^{k-1} l_j], p_{11},...,p_{01}[k-1-\sum_{j=1}^{k-1} l_j],p_{01}) \nonumber \\
&+&(1-\omega_k)\cdot \omega_{k+1} \sum_{\bar{l}_{-k} \in \{0,1\}^{k-1}} q(\bar{l}_{-k}; \omega_{-k})\cdot W^{\infty}(p_{11}[\sum_{j=1}^{k-1} l_j], p_{11},p_{01},...,p_{01}[k-1-\sum_{j=1}^{k-1} l_j]) \nonumber \\
&+&(1-\omega_k)\cdot (1-\omega_{k+1})\sum_{\bar{l}_{-k} \in \{0,1\}^{k-1}} q(\bar{l}_{-k}; \omega_{-k})\cdot W^{\infty}(p_{11}[\sum_{j=1}^{k-1} l_j], p_{01},...,p_{01}[k-1-\sum_{j=1}^{k-1} l_j],p_{01})
\nonumber 
\end{eqnarray}
Taking the difference between the immediate rewards we get %we have
\begin{eqnarray}
\mathbb E [R_{\pi^g}(\omega_{1},...,\omega_k,\omega_{k+1},...,\omega_{N})] - \mathbb E [R_{\pi^g}(\omega_{1},...,\omega_{k+1},\omega_{k},...,\omega_{N})] 
\geq (\omega_k - \omega_{k+1})\underline{\cal R}~. 
\end{eqnarray} 
The difference between the future rewards is given by 
\begin{eqnarray}
%&W^{\infty}(\boldsymbol{\omega}) - W^{d,\infty}(\boldsymbol{\omega}) \nonumber \\
%&\mathbb E  [R_{\pi^g}(\omega_k,\omega_{1},...,\omega_{k-1},\omega_{k+1},...,\omega_{N})] - \mathbb E [R_{\pi^g}(\omega_{k+1},\omega_{1},...,\omega_{k-1},\omega_{k},...,\omega_{N})] \nonumber \\
&&\beta \cdot (\omega_k - \omega_{k+1})\sum_{\bar{l}_{-k} \in \{0,1\}^{k-1}} q(\bar{l}_{-k}; \omega_{-k})\cdot(W^{\infty}(p_{11}[\sum_{j=1}^{k-1} l_j], p_{11},p_{01},\tau(\omega_{k+2}),...,\tau(\omega_N),p_{01}[k-1-\sum_{j=1}^{k-1} l_j])\nonumber \\
&&~~ -W^{\infty}(p_{11}[\sum_{j=1}^{k-1} l_j], p_{11},\tau(\omega_{k+2}),...,\tau(\omega_N),p_{01}[k-1-\sum_{j=1}^{k-1} l_j],p_{01}) ) \nonumber \\
&\geq& \beta \cdot (\omega_k - \omega_{k+1})\sum_{\bar{l}_{-k} \in \{0,1\}^{k-1}} q(\bar{l}_{-k}; \omega_{-k})\cdot(W^{\infty}(p_{11}[\sum_{j=1}^{k-1} l_j], p_{11},p_{01},\tau(\omega_{k+2}),...,\tau(\omega_N),p_{01}[k-1-\sum_{j=1}^{k-1} l_j])\nonumber \\
&&~~ -W^{\infty}(p_{11}[\sum_{j=1}^{k-1} l_j], p_{11},p_{01},\tau(\omega_{k+3}),...,\tau(\omega_N),p_{01}[k-1-\sum_{j=1}^{k-1} l_j],p_{01}) - (\tau(\omega_{k+2})-p_{01})\Delta_{\infty}) \nonumber \\
&\geq& - \beta \cdot (\omega_k - \omega_{k+1}) (\tau(\omega_{k+2})-p_{01})\Delta_{\infty}
\end{eqnarray} 
%\beta (\omega_k - \omega_{k+1}) \cdot \sum_{\bar{l}_{-k} \in \{0,1\}^{k-1}} q(\bar{l}_{-k}; \omega_{-k}) \nonumber \\
%&\cdot(W^{\infty}(p_{11}[\sum_{j=1}^{k-1} l_j], p_{11},p_{01},...,p_{01}[k-1-\sum_{j=1}^{k-1} l_j],p_{01})\nonumber \\
%&~~-W^{\infty}(p_{11}[\sum_{j=1}^{k-1} l_j], p_{11},p_{01},...,p_{01}[k-1-\sum_{j=1}^{k-1} l_j]) + (\tau(\omega_{k+2})-p_{01})\Delta_1) \nonumber \\
%&\displaystyle_{\geq}^{(2)}  (\omega_k - \omega_{k+1}) \cdot \underline{\mathcal R} - \beta (\omega_k - \omega_{k+1}) \cdot \sum_{\bar{l}_{-k} \in \{0,1\}^{k-1}} q(\bar{l}_{-k}; \omega_{-k})  (\tau(\omega_{k+2})-p_{01})\Delta_1 \nonumber \\
%&= (\omega_k-\omega_{k+1})\{\underline{R}-\beta (\tau(\omega_{k+2})-p_{01})\Delta_1 \}
%\end{align}
where the first inequality comes from the upper bound given in (\ref{eqn:infinite_positive_delta}) and the second from repeated use of the lower bound in Lemma (\ref{lem:boundedness}).  
Thus the total difference in rewards by switching is given by 
$(\omega_k-\omega_{k+1})(\underline{\cal R}-\beta(\tau(\omega_{k+2})-p_{01})\Delta_{\infty})$. 
Since $\tau(\omega_{k+1})\leq p_{11}$, we have
\begin{eqnarray}
\underline{\cal R}-\beta(\tau(\omega_{k+2})-p_{01})\Delta_{\infty} 
&\geq& \underline{\cal R}-\beta(p_{11}-p_{01})\frac{\overline{\cal R}}{1-\beta\delta} \nonumber \\
&\geq& \underline{\cal R}-\delta\frac{\overline{\cal R}}{1-\delta} 
\geq 0 
\end{eqnarray}
under the stated condition on $\delta$. 
\com{If I'm not making a mistake, it seems that the actual sufficient condition I need is $\delta/(1-\delta) \leq \underline{\cal R}/\overline{\cal R}$, without the $p_{11}$...} 

%Therefore a sufficient condition for no profitable one step deviation is
%\begin{align}
% \underline{R}-\beta (\tau(\omega_{k+2})-p_{01})\Delta_1 \geq 0
%\end{align}
%or
%\begin{align}
%\beta \cdot  (p_{11}-p_{01})\cdot p_{11}\cdot \frac{1}{1-\beta\cdot (p_{11}-p_{01})} \leq  \underline{\mathcal R}/\overline{\mathcal R}
%\end{align}
%Not hard to see the LHS is an increasing function of $\beta$, therefore for any $0<\beta<1$ we need the above which gives us
%\begin{align}
% p_{11}\cdot \frac{ p_{11}-p_{01}}{1-(p_{11}-p_{01})} <   \underline{\mathcal R} /\overline{\mathcal R}
%\end{align}
%which is satisfied by our condition.

\begin{remark}
\rev{Note in the special case of $k=N-1$, the difference in future rewards by switching is zero, therefore the total difference is always positive without any sufficient condition.  This is consistent with previous results in \cite{Zhao&etal:08TWC} on the optimality of myopic sensing for a two channel case.} 
\end{remark}

\textbf{Case 3.} %$i > k$
When both elements are outside the first $k$, switching $\omega_i$ with $\omega_{i+1}$, $\omega_i \geq \omega_{i+1}$ results in no difference in the immediate rewards.  Their propagated version, ($\tau(\omega_i), \tau(\omega_{i+1})$), or ($\tau(\omega_{i+1}), \tau(\omega_{i})$) under switching, show up in the future rewards.  As the process continues, this pair will gradually move toward the front of the list, and the movement is exactly the same along each sample path with or without switching.  If the pair continues to be outside the first $k$, the immediate rewards 
remains the same.  If the pair both moves into the first $k$, then the comparison of the future rewards fall within Case 1 examined above.  If the pair moves right into the boundary of the first $k$, with $i$ now at the $k$th position and $i+1$ now at the $k+1$th position (or the other way round under the switched case), then the comparison falls under Case 2 examined above.  Thus this switching under Case 3 is again not profitable.  %Thus to conclude, under the stated condition on $\delta$, there is no 

%When $i > k$ there is nothing different in the current immediate reward and by keeping extending the dynamic equation, the term $\omega_i$ will keep popping up by a linear effect $\tau$. Consider any sample path, if $\omega_i$ never gets popped to the top $k$ position, then there will be no difference between rewards on both sides. Otherwise, due to the linearity of updating function $\tau$, this case will fall into the category of \textbf{CASE 2}. \rev{To illustrate this little bit more. See the following three cases.

%\begin{itemize}
%\item[1.] $(\omega_1,...,\omega_k,...,\omega_i,\omega_{i+1},...,\omega_{N}) \rightarrow  (p_{11}[\sum l_i],...,\underbrace{\tau(\omega_i)}_{>k},\tau(\omega_{i+1}),...,\tau(\omega_{N}),p_{01}[k-\sum l_i])$. Under this case we know the immediate rewards are the same for both policy and $\tau(\omega_i) \geq \tau(\omega_{i+1}) $. Analysis will be carried on w.r.t $(\tau(\omega_i),\tau(\omega_{i+1}))$. 
%\item[2.] $(\omega_1,...,\omega_k,...,\omega_i,\omega_{i+1},...,\omega_{N}) \rightarrow  (p_{11}[\sum l_i],...,\underbrace{\tau(\omega_i)}_{=k},\tau(\omega_{i+1}),...,\tau(\omega_{N}),p_{01}[k-\sum l_i])$. Notice now the information state falls into the category of \textbf{CASE 2.} which is proved.
%\item[3.] $(\omega_1,...,\omega_k,...,\omega_i,\omega_{i+1},...,\omega_{N}) \rightarrow  (p_{11}[\sum l_i],...,\underbrace{\tau(\omega_i)}_{<k},\tau(\omega_{i+1}),...,\tau(\omega_{N}),p_{01}[k-\sum l_i])$. Notice now the information state falls into the category of \textbf{CASE 1.} which is also proved.
%\end{itemize}
%
%Proved. \qed

We have therefore shown that there is no profitable single-element simple deviation under the stated sufficient condition. 
%Therefore we proved that there is no single element one step deviation which can bring in a one step profitable deviation. \qed
%
For a deviation $\pi^d$ with multiple different elements, a sequence of single-element deviation steps can be easily constructed connecting $\pi^g$ to $\pi^d$, with two successive deviations differing in only one element.  The same argument as above can be used to show that no step  can be profitable under the stated condition, thus proving the lemma. 
%denote number of the different elements as $d$. And by defining a sequence of strategies $(\pi^d(1),\pi^d(2),...,\pi^d(d))$ such that there is one element difference between any neighbors and $\pi^d(1)=\pi^g,\pi^d(d) = \pi^g$. We analyze strategies $\pi^d(i)$ one by one by following the above argument we will reach at
%\begin{align}
%W^{\infty}(\boldsymbol{\omega}) = W^{d(1),\infty}(\boldsymbol{\omega}) \geq W^{d(2),\infty}(\boldsymbol{\omega}) \geq... \geq W^{d(d),\infty}(\boldsymbol{\omega}) = W^{d,\infty}(\boldsymbol{\omega})
%\end{align}
%Thus we proved that under our proposed conditions, there is no profitable one step deviation from myopic sensing. \qed
%\\
%\\
%proof of finite lemma 11
\subsection{Proof of Lemma \ref{lem:bound2}}

We prove this by induction. At time $T$, when $j \leq k$ we have
\begin{align}
(x-y)\cdot \underline{\mathcal R} &\leq W_T(\omega_1,...,x,...,\omega_N) \nonumber \\
&-W_T(\omega_1,...,y,...,\omega_N) \leq (x-y)\cdot \overline{\mathcal R}~. 
\end{align}
When $j > k$, we have $W_T(\omega_1,...,x,...,\omega_N) -W_T(\omega_1,...,y,...,\omega_N) = 0$.  Also it is easily verified that  
\begin{align}
\underline{\Delta}_T \leq \min\{\underline{\cal R},0\},  \overline{\Delta}_T \geq  \overline{\cal R} ~. 
\end{align}
The induction basis is thus established. 

%Thus we have
%\begin{align} \underline{\Delta}_T = 0, \overline{\Delta}_T = \overline{\mathcal R}
%\end{align}
Now assume the lemma holds for times $t+1,...,T-1$. Consider time $t$ and again the following cases.

\textbf{Case 1.} $j \leq k$

We have
\begin{align}
&W_t(\omega_1,...,x,...,\omega_N) - W_t(\omega_1,...,y,...,\omega_N) \nonumber \\
&= (x - y)(\mathbb E[R_{\pi^g}(1,\omega_{-j})]- \mathbb E[R_{\pi^g}(0,\omega_{-j})])-(x-y)\cdot \beta \{\sum_{\bar{l}_{-j} \in \{0,1\}^{k-1}} q(\bar{l}_{-j}; \omega_{-j})\nonumber \\
& (W_{t+1}(p_{01}[k-1-\sum_{i=1, i\neq j}^{k} l_i],p_{01},\tau(\omega_N),...,\tau(\omega_{k+1}),p_{11}[\sum_{i=1, i\neq j}^{k} l_i]) \nonumber \\
&~~-W_{t+1}(p_{01}[k-1-\sum_{i=1, i\neq j}^{k} l_i],\tau(\omega_N),...,\tau(\omega_{k+1}),p_{11}[\sum_{i=1, i\neq j}^{k} l_i],p_{11}))\}
\end{align}
%Here $\tau(-k)$ is the updated vector of the rest unchosen $N-k$ channels.

Clearly for the immediate rewards we have $\underline{\mathcal R} \leq \mathbb E[R_{\pi^g}(1,\omega_{-j})]- \mathbb E[R_{\pi^g}(0,\omega_{-j})] \leq \overline{\mathcal R}$.  Now consider the future rewards. First note  
\begin{eqnarray}
&& W_{t+1}(p_{01}[k-1-\sum_{i=1, i\neq j}^{k} l_i],p_{01},\tau(\omega_N), ..., \tau(\omega_{k+1}), p_{11}[\sum_{i=1, i\neq j}^{k} l_i]) \nonumber \\
&& ~~ -W_{t+1}(p_{01}[k-1-\sum_{i=1, i\neq j}^{k} l_i], \tau(\omega_N), ..., \tau(\omega_{k+1}), p_{11}[\sum_{i=1, i\neq j}^{k} l_i],p_{11}) \nonumber \\
&=& W_{t+1}(p_{01}[k-1-\sum_{i=1, i\neq j}^{k} l_i],p_{01},\tau(\omega_{N}),..., \tau(\omega_{k+1}), p_{11}[\sum_{i=1, i\neq j}^{k} l_i])\nonumber \\
&& ~~ -W_{t+1}(p_{01}[k-1-\sum_{i=1, i\neq j}^{k} l_i],\tau(\omega_{N}),\tau(\omega_{N}),..., \tau(\omega_{k+1}), p_{11}[\sum_{i=1, i\neq j}^{k} l_i]) \nonumber \\
&& ~~ + W_{t+1}(p_{01}[k-1-\sum_{i=1, i\neq j}^{k} l_i],\tau(\omega_{N}),\tau(\omega_{N}),..., \tau(\omega_{k+1}), p_{11}[\sum_{i=1, i\neq j}^{k} l_i]) \nonumber \\
&& ~~ - W_{t+1}(p_{01}[k-1-\sum_{i=1, i\neq j}^{k} l_i],\tau(\omega_{N}),\tau(\omega_{N-1}),\tau(\omega_{N-1}), ..., \tau(\omega_{k+1}), p_{11}[\sum_{i=1, i\neq j}^{k} l_i]) \nonumber \\
&& ~~ \cdots \nonumber\\
%&& + \sum_{i=k+1}^{N-1}  (W_{t+1}(p_{01}[k-1-\sum_{i=1, i\neq j}^{k} l_i],p_{01},\tau(\omega_{N}),\tau(\omega_{N+k+1-i})...,p_{11}[\sum_{i=1, i\neq j}^{k} l_i])\nonumber \\
%&& ~~-W_{t+1}(p_{01}[k-1-\sum_{i=1, i\neq j}^{k} l_i],\tau(\omega_{N}),...,\tau(\omega_{N+k-i}),...,p_{11}[\sum_{i=1, i\neq j}^{k} l_i]) ) \nonumber \\
&& ~~ + W_{t+1}(p_{01}[k-1-\sum_{i=1, i\neq j}^{k} l_i],\tau(\omega_{N}),...,\tau(\omega_{k+1}), \tau(\omega_{k+1}), p_{11}[\sum_{i=1, i\neq j}^{k} l_i])\nonumber \\
&& ~~ -W_{t+1}(p_{01}[k-1-\sum_{i=1, i\neq j}^{k} l_i],\tau(\omega_{N}),..., \tau(\omega_{k+1}), p_{11}[\sum_{i=1, i\neq j}^{k} l_i],p_{11}) 
\end{eqnarray}
Applying the induction hypothesis to each pair of the $W_{t+1}$ terms above results in 
\begin{eqnarray}
&& (x-y) \cdot\beta\cdot  (p_{01}-\tau(\omega_N) + \tau(\omega_N) - \tau(\omega_{N-1}) + \cdots - p_{11})\cdot \underline{\Delta}_{t+1} \nonumber \\
&\leq& 
(x - y)\beta \left\{W_{t+1}(p_{01}[k-1-\sum_{i=1, i\neq j}^{k} l_i],p_{01},\tau(\omega_N),..., \tau(\omega_{k+1}), p_{11}[\sum_{i=1, i\neq j}^{k} l_i]) \right. \nonumber \\
&& ~~ \left. -W_{t+1}(p_{01}[k-1-\sum_{i=1, i\neq j}^{k} l_i],\tau(\omega_N), ..., \tau(\omega_{k+1}), p_{11}[\sum_{i=1, i\neq j}^{k} l_i],p_{11}) \right\}\nonumber \\
&\leq&  (x-y) \cdot \beta\cdot(p_{01}-\tau(\omega_N) + \tau(\omega_N) - \tau(\omega_{N-1}) + \cdots - p_{11})\cdot \overline{\Delta}_{t+1}
% (x - y)\{W_t(p_{01}[k-1-\sum l_i],p_{01},\tau(-k),p_{11}[\sum l_i]) \nonumber \\
%&-W_t(p_{01}[k-1-\sum l_i],\tau(-k),p_{11}[\sum l_i],p_{11})\}
\end{eqnarray}
Therefore 
\begin{eqnarray}
&& (x-y)\cdot \{\underline{\mathcal R}-\beta\delta\cdot \overline{\Delta}_{t+1}\} \nonumber \\
&\leq& W_t(\omega_1,...,x,...,\omega_N) - W_t(\omega_1,...,y,...,\omega_N) \nonumber \\
&\leq& (x-y)\cdot \{\overline{\mathcal R}-\beta\delta\cdot \underline{\Delta}_{t+1}\}
\end{eqnarray}
%and
%\begin{align}
%&W_t(\omega_1,...,x,...,\omega_N) - W_t(\omega_1,...,y,...,\omega_N) \nonumber \\
%&\geq (x-y)\cdot \{\underline{\mathcal R}-\beta(p_{01}-p_{11})\cdot \overline{\Delta}_{t+1}\}
%\end{align}
If $\eta < 0$ we have
\begin{align}
& \underline{\mathcal R}-\beta\delta\cdot \overline{\Delta}_{t+1}\geq \underline{\mathcal R} - \beta\cdot \delta \overline{\mathcal R} + \beta \delta \frac{1-(\beta\cdot\delta)^{T-t+3}}{1-(\beta\cdot\delta)^{2}}\cdot \eta \geq \frac{1-(\beta\cdot\delta)^{T-t+3}}{1-(\beta\cdot\delta)^{2}}\cdot \eta
= \underline{\Delta}_{t} \\
&\overline{\mathcal R}-\beta\delta\cdot \underline{\Delta}_{t+1} \leq \overline{\mathcal R}-\beta\delta \cdot\eta \leq \overline{\mathcal R} - \frac{1-(\beta\cdot\delta)^{T-t+3}}{1-(\beta\cdot\delta)^{2}}\cdot \eta = \overline{\Delta}_{t} 
\end{align}
If $\eta \geq 0$ we have
\rev{
\begin{align}
\underline{\mathcal R}-\beta\delta\cdot \overline{\Delta}_{t+1} &= \underline{\mathcal R}-\beta\delta\cdot \overline{\mathcal R} = \eta \geq 0 \\
\overline{\mathcal R}-\beta\delta\cdot \underline{\Delta}_{t+1} &=  \overline{\mathcal R} 
\end{align}
} 
In either case the induction step is completed. 

\textbf{Case 2.} $j > k$. 
We have
\begin{align}
& W_t(\omega_1,...,x,...,\omega_N) - W_t(\omega_1,...,y,...,\omega_N)
= \beta \cdot \sum_{\bar{l} \in \{0,1\}^{k}} q(\bar{l}; \bar{\omega}) \nonumber \\
&\cdot(W_{t+1}(p_{01}[k-\sum_{i=1}^{k} l_i],...,\tau(x),...,p_{11}[\sum_{i=1}^{k} l_i]) -
W_{t+1}(p_{01}[k-\sum_{i=1}^{k} l_i],...,\tau(y),...,p_{11}[\sum_{i=1}^{k} l_i]))~. 
\end{align}
Thus 
\begin{align}
\beta \cdot (x-y)\cdot \delta \cdot \underline{\Delta}_{t+1} \leq W_t(\omega_1,...,x,...,\omega_N) 
- W_t(\omega_1,...,y,...,\omega_N) \leq \beta \cdot (x-y)\cdot \delta \cdot \overline{\Delta}_{t+1}
\end{align}
%From above it is straightforward to verify that the upper and lower bounds are satisfied at time $t$. \qed
\rev{It can be easily verified that $\underline{\Delta}_{t}\leq \beta\delta\underline{\Delta}_{t+1}$ and 
$\beta\delta\overline{\Delta}_{t+1}\leq \overline{\Delta}_t$, completing the induction step. }

\end{appendix}

\end{document}